\renewcommand{\mathbf}[1]{\boldsymbol{#1}}
\renewcommand{\vec}[1]{\mathbf{#1}}
\newcommand{\inc}[1]{\left(#1\right)}
\newcommand*{\defeq}{\mathrel{\rlap{%
                     \raisebox{0.3ex}{$\m@th\cdot$}}%
                     \raisebox{-0.3ex}{$\m@th\cdot$}}%
                    =}
\DeclareMathOperator{\tr}{tr}
\DeclareMathOperator{\poly}{poly}
\newcommand{\beq}{\begin{equation}}
\newcommand{\eeq}{\end{equation}}
\newcommand{\beqn}{\begin{equation*}}
\newcommand{\eeqn}{\end{equation*}}
\newcommand{\Span}[1]{\text{Span}\left( #1 \right)}
\newcommand{\C}{\ensuremath{\mathbb{C}}}
\newcommand{\setft}[1]{\mathrm{#1}}
\newcommand{\lin}[1]{\mathcal{L}\left(#1\right)}
\newcommand{\herm}[1]{\mathcal{H}\left(#1\right)}
\newcommand{\unitary}[1]{\mathcal{U}\left(#1\right)}
\newcommand{\pos}[1]{\setft{Pos}\left(#1\right)}
\newcommand{\rankpi}{\operatorname{rank}(\Pi_{ij})}
\def\X{\mathcal{X}}
\def\Y{\mathcal{Y}}
\def\poly{\textup{poly}}
\newtheorem{Question}{Question}
\newtheorem{Hint}{Hint}
\newcommand{\pr}{\operatorname{Pr}}
\newcommand{\hout}{H_{\rm out}}
\newcommand{\hstab}{H_{\rm stab}}
\newcommand{\psiprod}{\ket{\psi_{\rm prod}}}
\newcommand{\class}[1]{\textup{#1}}
\newcommand{\lmin}[1] {\lambda_{\operatorname{min}}(#1)}
\newcommand{\lh}{\operatorname{LH}}
\newcommand{\flh}{\operatorname{5-LH}}
\newcommand{\tlh}{\operatorname{2-LH}}
\newcommand{\thlh}{\operatorname{3-LH}}
\newcommand{\klhh}{\operatorname{k-LH}}
\newcommand{\qma}{\operatorname{QMA}}
\newcommand{\heff}{H_{\rm eff}}
\newcommand{\tH}{\widetilde{H}}
\newcommand{\tA}{\widetilde{A}}
\newcommand{\tB}{\widetilde{B}}
\newcommand{\se}{\Sigma_-(z)}
\newcommand{\trace}{{\rm Tr}}
\newcommand{\comment}[1]{}
\newcommand{\sX}{\spa{X}}
\newcommand{\sY}{\spa{Y}}
\newcommand{\sYi}{\spa{Y}_i}
\newcommand{\sYio}{\spa{Y}_{i1}}
\newcommand{\sYit}{\spa{Y}_{i2}}
\newcommand{\sZ}{\spa{Z}}
\newcommand{\rhoeq}{\rho_{\rm eq}}
\newcommand{\norm}[1]{\left\|\,#1\,\right\|}       
\newcommand{\enorm}[1]{\norm{#1}_{\mathrm{2}}}      
\newcommand{\trnorm}[1]{\norm{#1}_{\mathrm {tr}}}  
\newcommand{\fnorm}[1]{\norm{#1}_{\mathrm {F}}}    
\newcommand{\snorm}[1]{\norm{#1}_{\mathrm {\infty}}}    
\newcommand{\set}[1]{{\left\{#1\right\}}}    
\newcommand{\abs}[1]{\left\lvert #1 \right\rvert}
\newcommand{\complex}{{\mathbb C}}
\newcommand{\reals}{{\mathbb R}}
\newcommand{\ints}{{\mathbb Z}}
\newcommand{\B}{\complex^2}
\newcommand{\Bd}{\complex^d}
\newcommand{\Bdi}[1]{(\Bd)^{\otimes {#1}}}
\def\ket#1{ | #1 \rangle}
\def\bra#1{{\langle #1 | }}
\newcommand{\ketbra}[2]{\ket{#1}\!\bra{#2}}        
\newcommand{\braket}[2]{\mbox{$\langle #1  | #2 \rangle$}}
\newcommand{\LL}{\mathcal{L}}
\newcommand{\DD}{\mathcal{D}}
\newcommand{\HH}{\mathcal{H}}
\newcommand{\UU}{\mathcal{U}}
\newcommand{\hin}{H_{\rm in}}
\newcommand{\hprop}{H_{\rm prop}}
\newcommand{\nl} {\mathcal{L}_1}
\newcommand{\nll} {\mathcal{L}_2}
\newcommand{\ve}[1]{\mathbf{#1}}
\newcommand{\tq}{$2$-QSAT}
\newcommand{\phiab}{\ket{\phi}_{ab}}
\newcommand{\phibc}{\ket{\phi}_{bc}}
\newcommand{\ayes}{A_{\rm yes}} 
\newcommand{\ano}{A_{\rm no}} 
\newcommand{\spa}[1]{\mathcal{#1}}
\newcommand{\enc}[1]{\langle #1 \rangle}
\newtheorem{obs}[theorem]{Observation}
\newtheorem{problem}[theorem]{Problem}
\title{Quantum Hamiltonian Complexity}
\author{
Sevag Gharibian \\
Simons Institute for the Theory of Computing,\\
University of California, Berkeley
\and Yichen Huang\\
University of California, Berkeley \and
Zeph Landau\\
Simons Institute for the Theory of Computing,\\ University of California, Berkeley
\and Seung Woo Shin\\
University of California, Berkeley
}
\date{\today}
\begin{document}

\copyrightowner{S.~Gharibian, Y.~Huang, Z.~Landau and S.~W.~Shin}
\volume{10}
\issue{3}
\pubyear{2014}
\copyrightyear{2015}
\isbn{978-1-68083-006-4}
\doi{10.1561/0400000066}
\firstpage{159}
\lastpage{282}

\frontmatter  

\maketitle

\tableofcontents

\mainmatter

\begin{abstract}
Constraint satisfaction problems are a central pillar of modern computational complexity theory. This survey provides an introduction to the rapidly growing field of Quantum Hamiltonian Complexity, which includes the study of quantum constraint satisfaction problems. Over the past decade and a half, this field has witnessed fundamental breakthroughs, ranging from the establishment of a  ``Quantum Cook-Levin Theorem'' to deep insights into the structure of $1$D low-temperature quantum systems via so-called area laws. Our aim here is to provide a computer science-oriented introduction to the subject in order to help bridge the language barrier between computer scientists and physicists in the field. As such, we include the following in this survey: (1) The motivations and history of the field, (2) a glossary of condensed matter physics terms explained in computer-science friendly language, (3) overviews of central ideas from condensed matter physics, such as indistinguishable particles, mean field theory, tensor networks, and area laws, and (4) brief expositions of selected computer science-based results in the area. For example, as part of the latter, we provide a novel information theoretic presentation of Bravyi's polynomial time algorithm for Quantum 2-SAT.
\end{abstract}

\chapter{Introduction}\label{scn:intro}
\begin{quote}
\emph{``Computers are physical objects, and computations are
physical processes. What computers can or cannot compute is
determined by the laws of physics alone\ldots''\\--- David Deutsch~\cite{NC00}}
\end{quote}
The Cook-Levin Theorem~\cite{C72,L73}, which states that the SATISFIABILITY problem is NP-complete, is one of the cornerstones of modern computational complexity theory~\cite{AB90}. One of its implications is the following simple, yet powerful, statement: Computation is, in a well-defined sense, \emph{local}. Yet, as David Deutsch's quote above perhaps foreshadows, this is not the end of the story, but rather its beginning. Indeed, just as a sequence of computational steps on a Turing machine can be encoded into local classical constraints (as in the Cook-Levin theorem), the quantum world around us also evolves ``locally'', and this quantum evolution can be encoded into an analogous notion of local \emph{quantum} constraints. The study of such quantum constraint systems underpins an emerging field at the intersection of condensed matter physics, computer science, and mathematics, known as \emph{Quantum Hamiltonian Complexity (QHC)}.\\

At the heart of QHC lies a central object of study: The notion of a \emph{local Hamiltonian} $H$, which can intuitively be thought of as a quantum constraint system (in this introduction, we will keep our discussion informal in order to convey high-level ideas; all formal definitions, including an introduction to quantum information, are given in Chapter~\ref{scn:preliminaries}). To introduce local Hamiltonians, we begin with the fact that the state of a quantum system $S$ on $n$ qudits is described by some $d^n$-dimensional complex unit vector $\ket{\psi}\in (\complex^d)^{\otimes n}$. How can we describe the evolution of the state $\ket{\psi}$ of $S$ as time elapses? This is given by the Schr\"{o}dinger equation, which says that after time $t$, the new state of our system is $e^{-iHt}\ket{\psi}$, where $H$ is a $d^n\times d^n$-dimensional complex (more precisely, Hermitian) operator called a \emph{Hamiltonian}. Here, the precise definition of the matrix exponential $e^{iHt}$ is irrelevant; what \emph{is} important is the dependence of the Schr\"{o}dinger equation on $H$. In other words, Hamiltonians are intricately tied to the evolution of quantum systems. We thus arrive at a natural question: Which classes of Hamiltonians correspond to \emph{actual} quantum evolutions for systems occurring in nature? It turns out that typically, only a special class of Hamiltonians is physically relevant: These are known as \emph{local} Hamiltonians.

Roughly, a \emph{$k$-local} Hamiltonian is a Hermitian matrix which has a succinct representation of the form
\[
	H=\sum_i H_i,
\]
where each $H_i$ acts ``non-trivially'' only on some subset of $k$ qudits. Here, each $H_i$ should be thought of as a ``quantum constraint'' or ``clause'', analogous to the notion of a $k$-local clause in classical constraint satisfaction problems. For example, just as a classical clause such as $(x_i\vee x_j\vee x_k)$ for $x_i,x_j,x_k\in\set{0,1}$ forces its bits to lie in set $x_ix_jx_k\in\set{001,010,011,100,101,110,111}$ (where $\vee$ denotes logical OR), a quantum clause $H_i$ restricts the state of the $k$ qudits it acts on to lie in a certain \emph{subspace} of $(\complex^d)^{\otimes n}$. Moreover, each clause $H_i$ requires $O(k)$ bits express (assuming all matrix entries are specified to constant precision). This is because each $H_i$ is given as a $d^k\times d^k$ complex matrix (this is made formal in Section~\ref{sscn:basicsqhc}). As a result, although $H$ itself is a matrix of dimension $d^n\times d^n$, i.e. $H$ has dimension exponential in $n$ the number of qudits, the description of $H$ in terms of local clauses $\set{H_i}$ has size \emph{polynomial} in $n$.

Since local Hamiltonians are intricately tied to the time evolution of quantum systems in nature, the goal of QHC is to study properties of local Hamiltonians $H$. Common computational tasks include estimating the \emph{ground state energy} (smallest eigenvalue) of $H$, or computing features of $H$'s \emph{ground state} (eigenvector corresponding to the smallest eigenvalue). Intuitively, the ground state can be thought of as the vector $\ket{\psi}$ which ``maximally satisfies'' the constraints $\set{H_i}$ (i.e. the ``optimal solution'' to the quantum constraint system), and is of particular interest as it encodes the state of the corresponding quantum system when cooled to low temperature. In fact, any classical Constraint Satisfaction Problem (CSP) of arity $k$ can be embedded into a $k$-local Hamiltonian, such that determining the ground state of the Hamiltonian yields the optimal solution to the CSP. (This connection is made explicit in \S\ref{sscn:basicsqhc}.) Thus, ground states are interesting from a complexity theoretic perspective.

Let us also motivate ground states from a physics perspective. Consider the case of helium-4: When cooled to near absolute zero, helium-4 relaxes to a state $\ket{\psi}$ which is the ground state of some local Hamiltonian $H$ (the precise form of $H$ is beyond the scope of this introduction). This ground state exhibits an exotic phase of matter known as \emph{superfluidity} --- it acts like a fluid with zero viscosity. (See~\cite{SFH} for a video demonstrating this remarkable phenomenon.) Ideally, we would like to understand the properties of the superfluid phase demonstrated by $\ket{\psi}$, so that, for example, we can in turn use this knowledge to design new, advanced materials. In this direction, QHC might ask questions such as: Which quantum systems in nature have a ground state with a succinct classical representation? Can we run efficient classical simulations to predict when a quantum system will exhibit interesting phenomena, such as a phase transition? Can we quantify the hardness of determining certain properties of local Hamiltonians by establishing connections to computational complexity theory? In the context of helium-4, for example, the first of these questions is particularly relevant --- to the best of our knowledge, a closed form for the ground state energy or the ground state of helium-4 remain elusive. (Heuristic approximations based on variational methods, however, have long been known; see, e.g.~\cite{SV67}.)

This state of affairs illustrates the formidable challenge facing QHC: Namely, we are interested in computing properties of $k$-local Hamiltonians $H$, which are matrices of dimension $d^n\times d^n$, whereas an efficient algorithm must run in time polynomial in $n$, the number of qudits $H$ acts on. Despite this challenge, QHC has proven a very fruitful area of research. For example, in 1999 Kitaev established~\cite{KSV02} a quantum version of the celebrated Cook-Levin theorem~\cite{C72,L73} for local Hamiltonian systems. In 2006, Bravyi gave a polynomial time algorithm for solving the quantum analogue of $2$-SATISFIABILITY, known as Quantum $2$-SAT~\cite{B06}. And though the heuristic approach of White~\cite{W92,W93} (known as ``Density Matrix Renormalization Group'') was known to solve $1$-dimensional (gapped) Hamiltonians in practice efficiently, Hastings' 1D area law in 2007~\cite{Ha07} helped explain the efficacy of this heuristic by strongly characterizing the entanglement structure of such $1$-dimensional systems. This survey aims to review a select subset of such fundamental results in QHC.

To help make this survey accessible to computer scientists with little or no background in quantum information, we begin in \S\ref{sscn:basicsqi} with a review of basic quantum information. We next establish some of the fundamental definitions of QHC in \S\ref{sscn:basicsqhc}, including an explicit sketch of how an instance of $3$-CSP can be encoded into a local Hamiltonian. With this basic background in place, we finally proceed in Chapter~\ref{scn:roadmap} to give a roadmap for the remainder of this survey.

\chapter{Preliminaries}\label{scn:preliminaries}
We now review the basics of quantum information (\S\ref{sscn:basicsqi}) and introduce fundamental concepts in QHC (\S\ref{sscn:basicsqhc}). For readers interested in further details on quantum information in general, please see the text of Nielsen and Chuang~\cite{NC00} (see also~\cite{KSV02,KLM07} for alternate textbooks) or the thesis of Gharibian~\cite{G13} for a self-contained brief introduction. (We remark that \S\ref{sscn:basicsqi} here is essentially a condensed version of Chapter 1.4 of the thesis of Gharibian~\cite{G13}.) For a review of quantum complexity theory, see the survey of Watrous~\cite{W09_2}. For a physics-oriented introduction to Hamiltonian complexity, we refer the reader to the survey of Osborne~\cite{O11}.

\section{Basics of quantum information}\label{sscn:basicsqi}

Let us first set notation and state a number of useful linear algebraic definitions; intuitively, the latter are necessary, as the mathematical language behind quantum mechanics is linear algebra. \\

\noindent\textbf{Notation.} The symbols $\complex$, $\reals$, and $\ints$ denote the sets of complex, real, and integer numbers, respectively. For $m$ a positive integer, the notation $[m]$ indicates the set $\{1, \ldots, m\}$. The terms $\lin{\spa{X}}$, $\unitary{\spa{X}}$, $\herm{\spa{X}}$, and $\pos{\spa{X}}$ denote the sets of linear, unitary, Hermitian, and positive semidefinite operators acting on complex Euclidean space $\spa{X}$, respectively. For the purposes of this survey, our choice of $\spa{X}$ will always be $\complex^d$ for some positive integer $d$. Recall that an operator $A$ is unitary if $AA^\dagger=A^\dagger A=I$, $A$ is Hermitian if $A=A^\dagger$ (where $A^\dagger$ denotes the conjugate transpose of $A$), and $A$ is positive semidefinite if $\ve{x}^\dagger A\ve{x}\geq 0$ for all complex vectors $\ve{x}$. The orthogonal projector onto vector space $\spa{X}$ is denoted $\Pi_{\spa{X}}$; by definition, $\Pi_{\spa{X}}^2=\Pi_{\spa{X}}$.  The notation $A\succeq B$ means operator $A-B$ is positive semidefinite. The smallest (largest) eigenvalue of $A\in \HH(\X)$ is given by $\lambda_{\operatorname{min}}(A)$ ($\lambda_{\operatorname{max}}(A)$). The trace, Frobenius, and spectral (or operator) norms of $A\in \LL(\spa{X})$ are defined as
\begin{eqnarray*}
 \trnorm{A}&:=&\trace\left(\sqrt{A^\dagger A}\right),\quad\quad\\
 \fnorm{A}&:=&\sqrt{\trace(A^\dagger A)},\quad\quad\\
 \snorm{A}&:=&\max_{\ket{x}\in\spa{X}\mbox{ s.t. } \enorm{x}= 1}\enorm{A\ket{x}},
\end{eqnarray*}
respectively, where $:=$ denotes a definition, and where $\trace(A):=\sum_iA(i,i)$ for the $(m,n)$th entry of $A$ given by $A(m,n)$. Unless otherwise noted, all logarithms are taken to base two.

\subsection{The four postulates of quantum mechanics}\label{ssscn:postulates}

With our notation in place, we can now provide a brief introduction to quantum information. For this, we introduce the four postulates of quantum mechanics, which intuitively prescribe the following concepts: How a quantum state is described, how one ``reads'' or measures a quantum state, what operations can be performed on a quantum state, and finally, how one describes multiple quantum systems jointly.\\

\noindent\textbf{1. Describing quantum states.}
Let $\spa{X}=\complex^d$ for $d\geq 2$. In a nutshell, a linear operator $\rho\in\LL(\spa{X})$ describes a $d$-dimensional quantum state if and only if $\rho$ is positive semi-definite and has trace $1$, i.e. $\rho\in\pos{\spa{X}}$ and $\trace(\rho)=1$. Let us now provide some intuition as to how this statement comes about.

In classical computing, the basic unit of information is a bit, which takes on values in the set $\set{0,1}$. One can equivalently encode a bit using the set $\set{\ket{0},\ket{1}}$, where $\set{\ket{0},\ket{1}}\subseteq\complex^2$ is the standard basis for $\complex^2$, i.e.\ $\ket{0}=(1,0)^T$ and $\ket{1}=(0,1)^T$. Here, the notation $\ket{\psi}$ (called a ``ket'') denotes a column vector labeled by $\psi$. The key difference between classical bits and quantum bits (or \emph{qubits}) is that in the quantum world, one can ``interpolate'' between the two discrete values $\ket{0}$ and $\ket{1}$ by taking a \emph{superposition}, i.e.\ the vector
\begin{equation}
    \ket{\psi}=\alpha\ket{0}+\beta\ket{1}
\end{equation}
describes a valid quantum state if $\abs{\alpha}^2+\abs{\beta}^2=1$. In other words, any unit vector in $\complex^2$ describes a quantum bit, or \emph{qubit}.

More generally, any unit vector $\ket{\psi}\in\complex^d$ describes a $d$-dimensional quantum state, sometimes dubbed a qu\emph{d}it. Such vectors are called \emph{pure} states, as they describe the state of a quantum system which is not subject to external ``noise'', or more generally is evolving in isolation from its environment. Sometimes, however, interaction between our system and its environment may be inevitable (such as when we wish to measure our system); this will inherently inject some ``noise'' into our system, and we thus need a mathematical approach to model this. To do so, we simply permit probabilistic mixtures of pure states, more generally referred to as \emph{mixed} states. Such probabilistic mixtures are described in the following straightforward manner, known as the \emph{density matrix} formalism.

Associated with any probabilistic mixture is an \emph{ensemble} $E$,
\begin{equation}
E=\set{\set{p_i}_{i=1}^k,\set{\ketbra{\psi_i}{\psi_i}}_{i=1}^k},
\end{equation}
where $\set{p_i}_{i=1}^k$ forms a probability distribution and $\set{\ket{\psi_i}}\subseteq\spa{X}$ is a set of unit vectors. Here, the notation $\bra{\psi}$ (called a ``bra'') denotes the row vector corresponding to the conjugate transpose of $\ket{\psi}$. Thus, if $\ket{\psi}\in \complex^d$, then $\ketbra{\psi}{\psi}$ is a rank-$1$ $d\times d$ complex matrix. Returning to our discussion, the mixed quantum state $\rho_E$ corresponding to the ensemble $E$ is given by:
\begin{equation}
    \rho_E = \sum_{i=1}^k p_i \ketbra{\psi_i}{\psi_i}.\label{0_eqn:density}
\end{equation}
Here, $\rho_E$ is called the \emph{density matrix} describing the underlying quantum system. We denote the set of density operators acting on $\X$ as $\DD(\X)$.

Let us now tie the density operator formalism back into the statement made at the beginning of this section: That $\rho$ represents a quantum state if and only if $\rho\succeq 0$ and $\trace(\rho)=1$. Note that since in Equation~\ref{0_eqn:density}, $\rho$ is a non-negative sum of positive semidefinite operators, we have $\rho\succeq 0$. Moreover, by applying the cyclic property of the trace (i.e. that $\trace(ABC)=\trace(CAB)$ for all $A,B,C$), we have
 \[
    \trace(\rho)=\sum_{i=1}^k p_i \trace(\ketbra{\psi_i}{\psi_i})=\sum_{i=1}^k p_i\braket{\psi_i}{\psi_i}=\sum_{i=1}^k p_i=1,
 \]
  where $\braket{v}{w}$ denotes the inner product between vectors $\ket{v}$ and $\ket{w}$. Conversely, we can now intuitively see why any $\rho\in\mathcal{X}$ with $\rho\succeq 0$ and $\trace(\rho)=1$ describes a valid quantum state --- recall that any positive semi-definite operator $A$ can be diagonalized via its \emph{spectral decomposition} $A=\sum_{i}\lambda_i(A)\ketbra{\lambda_i(A)}{\lambda_i(A)}$, where $\lambda_i(A)\geq 0$ are the eigenvalues of $A$, and $\ket{\lambda_i(A)}$ are the corresponding eigenvectors. Then, given $\rho$ satisfying $\rho\succeq 0$ and $\trace(\rho)=1$, taking its spectral decomposition allows us to immediately recover an ensemble $\set{\set{p_i},\set{\ketbra{\psi_i}{\psi_i}}}$. In this case, the $p_i$, which correspond to the eigenvalues of $\rho$, sum to $1$ due to the trace constraint on $\rho$.

 Finally, note that although we have attempted to present a simple exposition of how quantum states are classically described via density operators, in reality the precise interpretation of what a density operator \emph{means} is highly non-trivial and remains a subject of intense debate.\\

\noindent\textbf{2. Measuring quantum states.}
Now that we have an approach for describing quantum states, we require a formalism for modeling how a quantum state is ``observed'' or measured. For this, let $\rho\in\DD(\X)$ be a density matrix. Then, a quantum \emph{measurement} is formalized by a set of operators $\set{M_i}\subseteq\LL(\X)$ satisfying
\begin{equation}
    \sum_i M_i^\dagger M_i=I,
\end{equation}
where the latter is called the \emph{completeness relation}. The act of measuring $\rho$ with $\set{M_i}$ is in general an inherently probabilistic process, \emph{even if} $\rho$ corresponds to a pure state (unlike in the classical case of bits). Specifically, when measuring $\rho$ with respect to $\set{M_i}$, we obtain outcome $i$ with probability given by
\begin{equation}
    \pr(\mbox{outcome }i \mid \rho)=\trace(M_i\rho M_i^\dagger).
\end{equation}
Once a particular outcome $i$ is observed, the state $\rho$ ``collapses'' to a new state $\rho'$ consistent with this outcome, i.e.\
\begin{equation}
    \rho' = \frac{M_i\rho M_i^\dagger}{\pr(\mbox{outcome }i \mid \rho)}.
\end{equation}
Note that the denominator above serves the role of normalizing $\rho'$ so that $\trace(\rho')=1$. Such a measurement $\set{M_i}$ is the most general possible.

Often, however, we are interested in a special type of measurement in which each $M_i$ is an orthogonal projection operator additionally satisfying $M_iM_j=\delta_{ij}M_i$. Such measurements are called \emph{projective} or \emph{von Neumann} measurements. An example of such a measurement used frequently is a measurement in the \emph{computational} or \emph{standard} basis. For $\X=\complex^d$ and standard basis vectors $\ket{i}$ (which have a $1$ in position $i$ and zero elsewhere), such a measurement is specified via $\set{M_i}$ for $M_i=\ketbra{i}{i}$. More generally, given any orthonormal basis $B=\set{\ket{\psi_i}}$, measuring in basis $B$ is formalized by setting $M_i=\ketbra{\psi_i}{\psi_i}$.

There is a useful alternate representation of von Neumann measurements via \emph{observables} $M\in\HH(\X)$. Specifically, given an observable $M$, to recover the underlying von Neumann measurement, we take the spectral decomposition of $M$ to obtain $M=\sum_i\lambda_i\Pi_i$, where $\lambda_i\neq \lambda_j$ for $i\neq j$ and each $\Pi_i$ is a projection operator (of rank possibly greater than one). Then, the measurement operators are defined as $M_i=\Pi_i$, and each distinct eigenvalue $\lambda_i$ corresponds to a distinct label for the corresponding measurement outcome $\Pi_i$. The reason this representation via observables proves useful is because the \emph{expected value} of a measurement, denoted $\mathbb{E}_M$, takes a very simple form in terms of the observable $M$:
\begin{eqnarray*}
    \mathbb{E}_M(\rho)&=&\sum_i\lambda_i\pr(\mbox{outcome }i \mid \rho)\\&=&\sum_i\lambda_i \trace(\Pi_i\rho\Pi_i^\dagger)\\&=&\sum_i\lambda_i \trace(\Pi_i\rho)\\&=&\trace(M\rho),
\end{eqnarray*}
where the third equality follows from the cyclic property of the trace and since $\Pi$ is Hermitian and is a projector, and the last equality since the trace is a linear map.\\


\noindent\textbf{3. Evolution of quantum states.}
We now know how to describe a quantum state $\rho\in\DD(\X)$, as well how to model a measurement of $\rho$. The next question we ask is: What kind of operations (i.e. gates) can we perform on $\rho$? For example, to a classical bit, we can apply a NOT gate to flip its value. What can we do to a \emph{qu}bit?

Just as in the case of describing quantum states, there are two scenarios to consider here: When a quantum system evolves in a manner isolated from its environment (i.e. a \emph{closed} system), and when a system interacts with an environment during its evolution (i.e. an \emph{open} system). Beginning with the former, the set of valid operations on a {closed} quantum system with state $\rho\in\DD(\X)$ is the set of unitary operators $U\in\UU(\X)$. Specifically, $U$ maps $\rho$ to
\begin{equation}
    \rho' := U\rho U^\dagger.
\end{equation}

For example, for $\rho\in\DD(\complex^2)$, i.e.\ a single qubit, a frequently used set of unitary operators are the \emph{Pauli} operators (where $i:=\sqrt{-1}\in\complex$)
\begin{equation*}
    X=\left(
        \begin{array}{cc}
          0 & 1 \\
          1 & 0 \\
        \end{array}
      \right)\quad\quad\quad
    Y=\left(
        \begin{array}{cc}
          0 & -i \\
          i & 0 \\
        \end{array}
      \right)\quad\quad\quad
    Z=\left(
        \begin{array}{cc}
          1 & 0 \\
          0 & -1 \\
        \end{array}
      \right).
\end{equation*}
Note, for example, that the Pauli $X$ plays the role of a quantum NOT gate, i.e.\ $X\ket{0}=\ket{1}$ and $X\ket{1}=\ket{0}$. The Pauli operators are also referred to as $\sigma_x=X$, $\sigma_y=Y$, and $\sigma_z=Z$. In this survey, we will use the fact that the operators above can be generalized to act on $\complex^3$, i.e. on \emph{qutrits}, as
\begin{eqnarray}\label{eqn:spin1pauli}
\sigma_x&=&\frac{1}{\sqrt2}\left(
                \begin{array}{ccc}
                  0 & 1 & 0\\
                  1 & 0 & 1\\
                  0 & 1 & 0\\
                \end{array}
              \right),\\
\sigma_x&=&\frac{1}{\sqrt2}\left(
                \begin{array}{ccc}
                  0 & -i & 0\\
                  i & 0 & -i\\
                  0 & i & 0\\
                \end{array}
              \right),\\
\sigma_z&=&\left(
                \begin{array}{ccc}
                  1 & 0 & 0\\
                  0 & 0 & 0\\
                  0 & 0 & -1 \\
                \end{array}
              \right).
\end{eqnarray}
Moreover, it is common in the physics literature to define vector $\overrightarrow{\sigma_i}:=(\sigma_i^x,\sigma_i^y,\sigma_i^z)$ (sometimes also denoted $\overrightarrow{S_i}$), and dot-product
\begin{equation}\label{eqn:sigma}
\overrightarrow{\sigma_i}\cdot\overrightarrow{\sigma_j} := \sigma_i^x\sigma_{j}^x+\sigma_i^y\sigma_{j}^y+\sigma_i^z\sigma_{j}^z,
\end{equation}
where there is an implicit tensor product between (e.g.) $\sigma_i^x$ and $\sigma_{j}^x$ above. (The tensor product is defined under ``Composite quantum systems'' below.)\\

Let us next discuss the time evolution of an \emph{open} quantum system (i.e. one which interacts with its environment). In this case, the set of allowed operations strictly contains $\UU(\X)$, and is in fact the set of \emph{trace-preserving completely-positive} (TPCP) maps, which we henceforth refer to as \emph{admissible} maps or operations. Here, a linear map $\Phi:\LL(\X)\mapsto\LL(\Y)$ is \emph{trace-preserving} if $\trace(\Phi(A))=\trace(A)$ for all $A\in\LL(\X)$. To define ``completely positive'', let us first define ``positive'' --- $\Phi$ is positive if $\Phi(A)\succeq 0$ whenever $A\succeq 0$. Then, $\Phi$ is \emph{completely positive} if $\Phi$ is positive even when acting only some subsystem of some larger joint system, i.e. if $I\otimes \Phi$ is positive, where $I\in\LL(\spa{X})$. Although the notion of TPCP maps plays an important role in quantum information, we remark that there is no loss of generality in restricting oneself to the set of unitary maps. This is because any valid TPCP operation on an open quantum system $A$ can be simulated by moving to a larger closed joint system $AB$, evolving $AB$ via a unitary operator, and subsequently tracing out part of $AB$. (More on joint systems and tracing parts of them out will be said shortly.)

To further distinguish the cases of open versus closed systems, let us give a concrete example. Suppose one wishes to perform a measurement on $A$. In order to do so in a lab, one introduces a measurement apparatus, which we think of as system $B$. To complete the actual measurement, $B$ must interact with $A$, implying $A$ is an open system. Thus, if we look at $A$ alone, the action of the measurement on $A$ is not described by a unitary operator, but by a TPCP map. However, if we instead look at $AB$ as a whole, this joint system is now closed, and hence its evolution is described by a unitary operator.

\paragraph{Aside: Hamiltonians, and the connection to unitary operations.} We are now in a position to understand where the notion of a Hamiltonian comes from, which will be fundamental for this survey.

Recall that any closed quantum system evolves in time according to some unitary operation $U\in\UU(\spa{X})$. Now, any unitary $U$ can be written as $U=\exp(i H)$ for some Hermitian $H\in\HH(\X)$. To see this, let us first define the notion of an \emph{operator function}. Specifically, for any operator $A\in\LL(\X)$ admitting a spectral decomposition\footnote{Such operators $A$ are called \emph{normal}, and satisfy $AA^\dagger=A^\dagger A$.}, and any function $f:\complex\mapsto\complex$, we define $f(A):=\sum_i f(\lambda_i)\ketbra{\lambda_i}{\lambda_i}$, where $\sum_i \lambda_i\ketbra{\lambda_i}{\lambda_i}$ is the spectral decomposition of $A$. Then, to verify our claim about unitary operators, take the spectral decomposition $U=\sum_j e^{i\theta_j}\ketbra{\psi_j}{\psi_j}$ (where recall a unitary operator has its eigenvalues on the unit circle), and observe that defining
\begin{equation}
    H=\sum_j \theta_j\ketbra{\psi_j}{\psi_j}
\end{equation}
yields $U=e^{iH}$. The operator $H$ is called a \emph{Hamiltonian}.

We conclude that corresponding to each $U\in\UU(\X)$, there exists a Hamiltonian $H\in\HH(\X)$. Where does $H$ come from? It turns out that the time evolution of a closed quantum system $\ket{\psi}$ according to $H$ is given by the celebrated Schr\"{o}dinger equation\footnote{Here, we state the time-independent version of Schr\"{o}dinger's equation, and consider only the corresponding class of time-independent Hamiltonians. More generally, one can consider evolution via Hamiltonians which themselves change with time, i.e. are time-dependent.} ,
\begin{equation}
    i\hbar\frac{d\ket{\psi}}{dt}=H\ket{\psi},
\end{equation}
where $\hbar$ denotes Planck's constant. For a quantum system evolving from time $t_1$ to $t_2$, the solution to this equation is given by
\begin{equation}\label{0_eqn:schroedinger}
    \ket{\psi(t_2)}=\exp\left(-i\frac{t_2-t_1}{\hbar}H\right)\ket{\psi(t_1)},
\end{equation}
i.e. $\ket{\psi(t_1)}$ evolves to $\ket{\psi(t_2)}$ via the unitary $\exp\left(-i\frac{t_2-t_1}{\hbar}H\right)$! In other words, the notion of evolution under unitary operations arises precisely because Schr\"{o}dinger's equation maps any ``input'' Hamiltonian $H$ to a unitary of the form $e^{-itH}$.\\

\noindent\textbf{4. Composite quantum systems.} We have stated that a quantum state on $\X$ is described by a density matrix $\rho\in\DD(\X)$. Suppose now we have two quantum systems $A$ and $B$ --- how do we describe their joint state $AB$? It turns out that if $A$ and $B$ correspond to complex Euclidean spaces $\X=\complex^{d_x}$ and $\Y=\complex^{d_y}$, then the joint system $AB$ corresponds to the space $\X\otimes \Y=\complex^{d_x+d_y}$. Here, $\otimes$ denotes the \emph{tensor product}, and is discussed further shortly.

Before we do so, however, let us make an important observation. Consider a set of $n$ single-qubit systems $\set{\X_i}_{i=1}^n$. Then, the joint state of all $n$ qubits is described by a density operator acting on $\bigotimes_{i=1}^n \X_i=\complex^{2^n}$. In other words, in stark contrast to a classical system of $n$ bits, in order to describe the joint state of $n$ qubits, one requires an \emph{exponential}-size density matrix, i.e. $\rho\in\DD(\complex^{2^n})$! This was essentially the reason why Richard Feynman originally proposed~\cite{F82,F85} the concept of quantum computing (see also Benioff~\cite{B80,B82_1,B82_2}) --- he believed quantum computers might enable an efficient study of quantum systems (which is otherwise generally intractable on a classical computer).

We now further discuss the tensor product $\otimes$, as it is used throughout this survey. First, for vectors $\ve{u}\in\X$ and $\ve{y}\in\Y$, we have that $\ve{u}\otimes\ve{v}\in\X\otimes \Y$, such that for all $i\in[d_x]$ and $j\in[d_y]$
\begin{equation}
    (\ve{u}\otimes \ve{v})(i,j):=u(i)v(j).
\end{equation}
For linear operators $A\in\LL(\X)$, $B\in\LL(\Y)$, we similarly have that $A\otimes B\in\LL(\X\otimes\Y)$, such that $A\otimes B$ is a complex matrix whose index sets are given by $([d_x]\times [d_y],[d_x]\times [d_y])$ satisfying
\begin{equation}
    (A\otimes B)((i_1,j_1),(i_2,j_2)):=A(i_1,i_2)B(j_1,j_2)
\end{equation}
for all $i_1,i_2\in[d_x]$ and $j_1,j_2\in[d_y]$. The tensor product has the following properties for any $A,C\in \X$, $B,D\in\Y$, $c\in\complex$:
\begin{eqnarray}
    (A+C)\otimes B &=& A\otimes B + C\otimes B\\
    A\otimes (B+D) &=& A\otimes B + A\otimes D\\
    c(A\otimes B) &=& (cA)\otimes B = A\otimes (cB)\\
    (A\otimes B)(C\otimes D) &=& AC\otimes BD\\
    \trace(A\otimes B) &=& \trace(A)\trace(B)\\
    (A\otimes B)^\dagger &=& A^\dagger\otimes B^\dagger.
\end{eqnarray}
These properties hold analogously in the vector setting.



Finally, given a description $\rho$ of the state of a joint system $AB$, we now require a method for describing the marginal state on $A$ (or $B$) alone. Specifically, given a composite system $\rho\in\DD(\X\otimes\Y)$, the \emph{reduced state} $\rho_A$ on $A$ (analogously, $\rho_B$ on $B$) is prescribed via the linear map known as the \emph{partial trace}, i.e.
\begin{equation}
    \rho_A = \trace_B(\rho).
\end{equation}
The partial trace is defined as follows: For any $A\otimes B\in \LL(\X\otimes \Y)$, we have that $\trace_\X (A\otimes B)\in\Y$ such that
\begin{equation}
    \trace_\X (A\otimes B):=\trace(A)B.
\end{equation}
A second equivalent definition is as follows: For any orthonormal basis $\set{\ve{v}_i}_{i=1}^d$ for $\X$, we have that for any $C\in\LL(\X\otimes \Y)$
\begin{equation}
    \trace_\X(C) = \sum_{i=1}^d \left(\ve{v}_i^\dagger \otimes I\right) C \left(\ve{v}_i \otimes I\right).
\end{equation}
For example, $\trace_B(\rho_A\otimes\rho_B)$ is simply $\rho_A$, and $\trace_B(\ketbra{\phi^+}{\phi^+})=I/2$, where
\[
	\ket{\phi^+}=\frac{1}{\sqrt{2}}\ket{00}+\frac{1}{\sqrt{2}}\ket{11}\in \complex^2\otimes\complex^2
\]
is a special two-qubit state known as a \emph{Bell state}.

Having introduced the concept of partial trace, we arrive at another fundamental point about joint quantum systems: The possibility of exotic \emph{correlations} between a pair of quantum systems $A$ and $B$. On one end of the spectrum lies the notion of a \emph{tensor product} state, i.e. a state of the form $\rho_A\otimes\rho_B$. Such states are \emph{uncorrelated}, and as such, tracing out system $B$ simply yields $\rho_A$. At the other extreme lie states exhibiting a uniquely quantum form of correlations known as \emph{quantum entanglement}. A canonical example of such a state is the Bell state $\ket{\phi^+}$ from above --- in contrast to a tensor product state, it is \emph{impossible} to express $\ket{\phi^+}\in\X\otimes\Y$ as the tensor product of a pair of states $\ket{\psi_1}\in\X$ and $\ket{\psi_2}\in\Y$! The consequence of this is that even though the joint state $\ket{\phi^+}$ of both qubits is fully known (i.e. is pure), tracing out $B$ yields a state which is noisy or mixed. (In fact, in this case $\trace_B(\ketbra{\phi^+}{\phi^+})=I/2$ is \emph{maximally mixed}, in that it yields \emph{no} information about the state of $A$.) Entanglement is one of the key characteristics distinguishing the quantum world from the classical one~\cite{S35}, and is in fact a necessary resource for (pure state) quantum computation to exponentially outperform classical computation~\cite{jozsa03a}.

Delving into entanglement further, it is useful to note that any bipartite pure state $\ket{\psi_{AB}}\in\complex^{d_x}\otimes \complex^{d_y}$ can be written in terms of its \emph{Schmidt decomposition}, such that
\begin{equation}
    \ket{\psi_{AB}}=\sum_{i=1}^{\min(d_x,d_y)}\alpha_i\ket{\psi_i}\otimes\ket{\phi_i}.
\end{equation}
Here, the real $\alpha_i\geq 0$ are called \emph{Schmidt coefficients}, and the sets $\set{\ket{\psi_i}}$ and $\set{\ket{\phi_i}}$ are orthonormal bases for $\complex^{d_x}$ and $\complex^{d_y}$, respectively, known as the \emph{Schmidt bases}. The connection to entanglement is simple in the pure state case: The state $\ket{\psi}$ is entangled if and only if it has at least two non-zero Schmidt coefficients $\alpha_i$. For this reason, the number of non-zero $\alpha_i$ has a special name --- it is called the \emph{Schmidt rank} of $\ket{\psi}$. Thus, we say that $\ket{\psi}$ is entangled if and only if it has Schmidt rank at least $2$. But we can do more than simply state whether $\ket{\psi}$ is entangled or not; in the pure state case, we can also quantify the \emph{amount} of entanglement, achieved via the \emph{entropy of entanglement}:
\begin{equation}\label{eqn:eentropy}
	S(\trace_B(\ketbra{\psi}{\psi}))=H(\set{\alpha_i}),
\end{equation}
where $S:\DD(\X)\mapsto\reals$ is the \emph{von Neumann entropy} function defined as $S(\rho)=-\trace(\rho \log(\rho))$ (recall that here we treat $\log$ as an operator function, i.e. it is applied to the eigenvalues of $\rho$, and we define $0\cdot\log 0=0$), and $H$ denotes the classical Shannon entropy of a probability distribution defined by $H(\set{p_i})=-\sum_i p_i\log p_i$. Notions such as the Schmidt rank and entropy of entanglement will play important roles in our discussions on matrix product states (\S\ref{sscn:MPS}) and area laws (\S\ref{sscn:arealaw}).

We close this discussion with two final remarks. First, the partial trace is employed here as it is the unique function which correctly produces the measurement statistics for arbitrary observables $M$ measured on $A$ alone. Second, one can also define a notion of entanglement for mixed quantum states (which will not be relevant to this survey). Unlike the case of pure states, however, determining whether a mixed state $\rho\in\DD(\X\otimes Y)$ is entangled is highly non-trivial; in fact, it is strongly NP-hard~\cite{G03,G10}.\\

\section{Basics of Quantum Hamiltonian Complexity}\label{sscn:basicsqhc}

With our background on quantum information in place, we can now introduce some fundamental concepts in Quantum Hamiltonian Complexity (QHC). We begin with a complexity class which has played a central role in the development of the field.

Specifically, the class NP can be generalized to a bounded-error quantum variant known as \emph{Quantum-Merlin Arthur (QMA)}. The intuition behind QMA is analogous to NP, except we now replace the classical prover and verifier by a quantum prover and verifier, respectively, and stipulate that given a quantum proof $\ket{\psi}$ from the prover, the verifier is allowed to err with bounded probability at most (say) $1/3$ as to whether the input instance is a YES or NO instance. More formally, we have the following definition, where a quantum circuit is simply the quantum analogue of a classical circuit in which classical gates are replaced with quantum ones such as the Pauli $X$, $Y$, and $Z$ gates.

\begin{definition}[QMA]\label{0_def:QMA}
    A promise problem $A=(\ayes,\ano)$ is in QMA if and only if there exist polynomials $p$, $q$ and a polynomial-time uniform family of quantum circuits $\set{Q_n}$, where $Q_n$ takes as input a string $x\in\Sigma^*$ with $\abs{x}=n$, a quantum proof $\ket{y}\in (\complex^2)^{\otimes p(n)}$, and $q(n)$ ancilla qubits in state $\ket{0}^{\otimes q(n)}$, such that:
    \begin{itemize}
    \item (Completeness) If $x\in\ayes$, then there exists a proof $\ket{y}\in (\complex^2)^{\otimes p(n)}$ such that $Q_n$ accepts $(x,\ket{y})$ with probability at least $2/3$.
    \item (Soundness) If $x\in\ano$, then for all proofs $\ket{y}\in (\complex^2)^{\otimes p(n)}$, $Q_n$ accepts $(x,\ket{y})$ with probability at most $1/3$.
    \end{itemize}
\end{definition}

Here, the quantum proof is given by $\ket{y}$, and the verification circuit by the circuit family $\set{Q_n}$. We remark that the completeness and soundness parameters of $(2/3,1/3)$ can be made exponentially close to $1$ and $0$ in the input size, respectively, via two approaches. The first approach is to apply the verifier's circuit many times in parallel to many copies of the proof $\ket{y}$. The disadvantage of this technique is that it increases the proof size. (Note that it is not entirely trivial that this approach should work; namely, the prover could try to cheat as follows. Instead of sending many copies of $\ket{y}$ in tensor product, it could send some arbitrary entangled state across all proof registers. A simple analysis~\cite{AN02} shows, however, that this type of parallel repetition indeed correctly reduces the error.) The second approach for error reduction is due to Marriot and Watrous~\cite{MW05}, who give a non-trivial procedure for exponentially reducing the error \emph{without} increasing the proof size (albeit at the expense of increasing the verification circuit's size).

With QMA defined, we now discuss a canonical QMA-complete problem which plays a role analogous to SATISFIABILITY for NP~\cite{C72,L73}: The Local Hamiltonian Problem. To define the latter, recall that a $k$-local Hamiltonian acting on $n$ qudits is a Hermitian operator $H=\sum_i H_i$, where each $H_i$ acts non-trivially on $k$ qudits (i.e. to be more accurate, if $H_i$ acts on a set $S$ of qudits, then the $i$th constraint $H_i$ should actually be written as $(H_i)_S\otimes I_{[n]\backslash S}$). The motivation for this problem stems partly from the fact that, as discussed in \S\ref{ssscn:postulates}, Hamiltonians are intricately tied to the time evolution of quantum systems. Moreover, in nature, such evolution is typically governed by a \emph{local} Hamiltonian.

\begin{problem}[$k$-Local Hamiltonian Problem ($\klhh$)~\cite{KSV02}] \label{0_def:localhamiltonianproblem}Given as input a $k$-local Hamiltonian $H$ acting on $n$ qudits, specified as a collection of constraints $\set{H_i}_{i=1}^r\subseteq \herm{\complex^d}^{\otimes k}$ where $k,d\in \Theta(1)$, and threshold parameters $a,b\in\reals$, such that $0\leq a < b$ and $(b-a)\geq 1$, decide, with respect to the complexity measure $\enc{H} + \enc{a} + \enc{b}$:
    \begin{enumerate}
        \item If $\lmin{H}\leq a$, output YES.
        \item If $\lmin{H} \geq b$, output NO.
    \end{enumerate}
\end{problem}
\noindent Here, $\enc{A}$ denotes the encoding length of object $A$ in bits, and $\lmin{A}$ denotes the smallest eigenvalue of $A$. The value $\lmin{A}$ is the \emph{ground state energy} of $H$, and its corresponding eigenvector (or eigenspace) is known as the \emph{ground state} (or \emph{ground space}). Note that often $\klhh$ is phrased with $(b-a)\geq 1/p(n)$ for some polynomial $p$; such an inverse polynomial gap can straightforwardly be boosted to the constant $1$ above by defining $H$ to have $p(n)$ many copies of each local term $H_j$~\cite{W09_2}.

To highlight the connection between $\klhh$ and classical satisfiability problems, let us demonstrate how an instance of $k$-CSP can be embedded into $\klhh$. Specifically, let $\phi$ denote an instance of $3$-CSP with clauses $c_i$ which are arbitrary Boolean functions on $3$ bits. Then, corresponding to each clause $c_i$, we add to our Hamiltonian $H$ a diagonal local constraint $H_i\in \herm{\complex^2}^{\otimes 3}$ which penalizes all non-satisfying assignments, i.e.
\[
    H_i=\sum_{\substack{x\in\set{0,1}^3\\\text{s.t. }c_i(x)=0}}\ketbra{x}{x}.
\]
In other words, suppose our assignment is $\ket{x}$ for $x\in\set{0,1}^n$, such that $c_i(x)=0$. Then, we have $\trace(H_i \ketbra{x}{x})=1$. Conversely, if $c_i(x)=1$, then $\trace(H_i \ketbra{x}{x})=0$. We conclude that a product state $\ket{x}\in (\complex^2)^{\otimes n}$ representing a satisfying binary assignment to $\phi$ will achieve energy $0$ on $H=\sum_i H_i$, i.e.
\[
    \trace(H\ketbra{x}{x})=0.
\]
On the other hand, if $\phi$ is unsatisfiable, then for all $\ket{x}$ with $x\in\set{0,1}^n$, we have $\trace(H\ketbra{x}{x})\geq 1$. In fact, we can conclude the same property holds for \emph{any} $\ket{\psi}\in(\complex^2)^{\otimes n}$ (i.e. not just for binary string assignments); this is because all $H_i$ simultaneously diagonalize in the standard basis, and thus without loss of generality, one can choose the ground state as a binary string. We conclude that $\klhh$ is a generalization of $k$-CSP, and is thus at least NP-hard. Of course, we know that $\klhh$ is expected to be much harder; in \S\ref{sscn:5LH} and \S\ref{sscn:2LH}, we discuss the proofs of QMA-completeness of $5$-LH and $2$-LH, respectively.

\paragraph{The Simulation Problem.} Much of this survey focuses on the Local Hamiltonian Problem, and as such, it is important to place $\klhh$ into the context of QHC as a whole. It turns out that $\klhh$ is a special case of a more general problem capturing the essence of QHC, known as the Simulation Problem~\cite{O11}. Intuitively, the latter asks how difficult it is to simulate a physical system. More formally, in the Simulation Problem one is given as input a description of a Hamiltonian $H$, an initial state $\rho$, an observable $M$, and a time $t\in\complex$, and the task is to output an estimate of the expectation
\begin{equation}\label{eqn:sim}
    \trace\left[M \frac{(e^{iHt})^\dagger \rho e^{iHt}}{\trace\left((e^{iHt})^\dagger \rho e^{iHt}\right)}\right].
\end{equation}
Note here that the denominator, $\trace\left((e^{iHt})^\dagger \rho e^{iHt}\right)$, is not redundant, since for $t\in \complex\setminus\reals$, $e^{iHt}$ is not necessarily unitary. Indeed, this fact is crucial to recovering the local Hamiltonian problem, which is obtained by choosing $H$ as a local Hamiltonian, setting $M=H$, $\rho=I/\trace(I)$, and considering $t=i\beta$ for $\beta\in\reals$. Then, Equation~\ref{eqn:sim} reduces to
\[
        \trace\left[H \frac{e^{-2\beta H}}{\trace\left(e^{-2\beta H}\right)}\right].
\]
Taking the limit $\beta\rightarrow+\infty$, the expression $(e^{-2\beta H})/\trace(e^{-2\beta H})$ approaches the projector onto the ground space of $H$ (where here, the normalization given by the denominator is crucial to obtain a norm $1$ operator), as desired.

Finally let us make an aside: In order to recover the local Hamiltonian problem above, we chose a \emph{complex} time $t$. There is another sense in which complex times $t$ are physically important: For complex $t$ which is neither purely real or imaginary, the simulation problem addresses the scenario in which a system dynamically evolves at \emph{finite} temperature~\cite{O11}. The latter is particularly natural, as in a lab it is often impossible to cool a quantum system all the way down to its actual ground state.\\

\chapter{Roadmap and Organization}\label{scn:roadmap}

With our primer on quantum information (\S\ref{sscn:basicsqi}) and QHC (\S\ref{sscn:basicsqhc}) covered, we are ready to lay out the roadmap for the remainder of this survey. QHC has evolved into a field with numerous areas of study, some of the most fundamental of which we shall attempt to cover here. Our first step in this journey is to present the reader with a brief history of the field from both computer science (\S\ref{sscn:cshistory}) and physics (\S\ref{sscn:physicshistory}) perspectives.

The remainder of this survey can then be thought of as consisting of two parts: The first half (\S\ref{scn:motivation} and~\S\ref{scn:physicsforCS}) explains concepts from condensed matter physics in a computer science-friendly language, and the second half (Chapter~\ref{scn:results}) discusses selected computer science-inspired results in the field, beginning with Kitaev's celebrated proof that the LH problem is QMA-complete. Let us elaborate on these two parts briefly here.

We begin in the first part by asking one of the most important questions in any field of study: \emph{Why?} Why is the topic of study interesting or useful? What are the connections between the formal model being studied and the underlying reality it is intended to represent? This is precisely the purpose of Chapter~\ref{scn:motivation}. In particular, here we introduce the motivations for QHC from a physics perspective, explaining key questions of interest such as the study of time evolution versus thermal equilibrium, the origin of local Hamiltonians, and fundamental concepts such as indistinguishable particles (i.e. bosons and fermions).

Chapter~\ref{scn:physicsforCS} then elaborates further on these physics ideas. It begins with a glossary in \S\ref{sscn:terms} of common terms in the physics literature (such as frequently studied local Hamiltonian models including the Ising and Heisenberg models). We then discuss selected significant physics-based contributions to QHC (some of which predate QHC by decades), such as mean field theory (\S\ref{sscn:meanfield}), tensor networks (\S\ref{sscn:tensor}), Density Matrix Renormalization Group (\S\ref{sscn:DMRG}), and Multi-Scale Entanglement Renormalization Ansatz (\S\ref{sscn:MERA}). These are techniques used to classically represent, simulate, and compute properties of quantum systems occurring in nature. We close this section with an overview of Area Laws (\S\ref{sscn:arealaw_overview}), which are a set of conjectures and theorems about the structure of entanglement present in ground states of physically relevant quantum systems.

Moving to the second part of the survey, in Chapter~\ref{scn:results}, we review selected computer science-inspired results in QHC, beginning with Kitaev's proof that $5$-LH is QMA-complete (\S\ref{sscn:5LH}). This is arguably the founding work of QHC. We then discuss Kempe, Kitaev, and Regev's use of perturbation theory gadgets to show that even $2$-LH is QMA-complete (\S\ref{sscn:2LH}); their techniques are powerful, and have since found numerous uses  showing hardness results for physically motivated local Hamiltonian models. We next consider Hamiltonian models which \emph{a priori} seem ``more classical'', namely those with commuting local constraints. In this direction, we discuss Bravyi and Vyalyi's results that the commuting variant of $2$-LH is in NP, and thus unlikely to be QMA-complete (\S\ref{sscn:CLH}). This is interesting given that commuting Hamiltonians can nevertheless have ground states demonstrating exotic forms of entanglement. Moving on, we give a new information-theoretic presentation of Bravyi's polynomial time algorithm for Quantum 2-SAT, which we hope makes the algorithm more accessible to a computer science audience. Finally, we review Arad, Kitaev, Landau and Vazirani's combinatorial proof of a 1D area law for gapped systems (\S\ref{sscn:arealaw}), which compliments and strengthens Hastings' original physics-inspired proof.

\chapter{A Brief History}\label{scn:history}
The history of quantum Hamiltonian complexity has its roots in both physics and computer science. In this section, we give a brief survey of both perspectives, beginning with the latter in \S\ref{sscn:cshistory}, and following with the former in \S\ref{sscn:physicshistory}. In \S\ref{sscn:recenthistory}, we briefly list selected recent developments to both perspectives since this survey first appeared.

\section{The computer science perspective}\label{sscn:cshistory}

\textbf{The general LH problem.} In 1999, Alexei Kitaev presented~\cite{K99,KSV02} what is regarded as the quantum analogue of the Cook-Levin theorem, proving that $\klhh$ is in QMA for $k\geq 1$ and QMA-hard for $k\geq 5$. His proof is based on a clever combination of the ideas behind the Cook-Levin theorem and early ideas for a quantum computer of Feynman~\cite{F85}, and is surveyed in \S\ref{sscn:5LH}. The fact that $3$-$\lh$ is also QMA-complete was shown subsequently by Kempe and Regev~\cite{KR03} (an alternate proof was later given by Nagaj and Mozes~\cite{NM06}). Kempe, Kitaev, and Regev then showed~\cite{KKR06} that $2$-$\lh$ is QMA-complete; see \S\ref{sscn:2LH} for an exposition of the proof. Note that $1$-$\lh$ is in P, since one can simply optimize for each $1$-local term independently.

From a physicist's perspective, however, the Hamiltonians involved in the QMA-hardness reductions above are arguably not ``physical'', i.e. occurring in nature. To address this, Oliveira and Terhal next showed~\cite{OT05} that $\lh$ with the Hamiltonians restricted to nearest-neighbor interactions on a 2D grid is still QMA-complete. Biamonte and Love proved~\cite{BL08} that the XY model with certain linear terms is QMA-complete, and Schuch and Verstraete~\cite{SV09} showed a similar result for the Heisenberg model (see Chapter~\ref{scn:physicsforCS} for definitions of these models). Next, in stark contrast to the classical case of MAX-2-CSP on the line (which is in P), Aharonov, Gottesman, Irani and Kempe~\cite{AGIK09} showed that $2$-LH with nearest-neighbor interactions on the line is also QMA-complete if the local systems have dimension at least $12$. (Note: Reference \cite{HNN13} later pointed out a small error in~\cite{AGIK09}, and argued that the error can be fixed using ideas from~\cite{AGIK09}, but at the cost of increasing the local dimension from $12$ to $13$.) The latter was improved to $11$~\cite{N08} and subsequently to $8$~\cite{HNN13}. Gottesman and Irani~\cite{GI09} obtained related results for translationally invariant 1D systems; see also Kay~\cite{K07} for results regarding the latter setting.

Finally, very recently, Cubitt and Montanaro established~\cite{CM13} a quantum variant of Schaefer's Dichotomy Theorem~\cite{S78} for the setting of $2$-LH on qubits, classifying the complexity of a very general version of LH based on which set of $2$-qubit quantum constraints one incorporates in the constraint system. Their classification contains the following levels: Problems are either in P, NP-complete, TIM-complete, or QMA-complete, where TIM is defined~\cite{CM13} as the set of problems which are polynomial-time equivalent to solving the general Ising model with transverse magnetic fields. Note, however, that due to the perturbation theory techniques employed in~\cite{CM13}, the use of large and possibly negative weights on local constraints is required; for this reason, the results of~\cite{CM13} do not capture the complexity of certain physical models such as the Heisenberg anti-ferromagnet.\\

\noindent\textbf{Quantum SAT.} To be precise, the LH problem does not generalize $k$-CSP, but rather (the decision version of) its optimization variant MAX-$k$-CSP. One can ask how the complexity of LH changes if we instead focus on a restricted version intended to generalize $k$-CSP. In this direction, in 2006 Bravyi~\cite{B06} defined Quantum $k$-SAT ($k$-QSAT), in which all local constraints are positive semidefinite, and the question is whether the ground state energy is zero (in this case, $H$ is called \emph{frustration-free}, in that the optimal assignment lies in the null space of every interaction term), or bounded away from zero (i.e. the Hamiltonian is \emph{frustrated}). He showed that $2$-QSAT is in P (see \S\ref{sscn:Q2SAT} for an exposition), and that $k$-QSAT is ${\rm QMA}_1$-complete for $k\geq 4$, where ${\rm QMA}_1$ is QMA with perfect completeness. Recently, Gosset and Nagaj showed~\cite{GN13} that $3$-QSAT is also ${\rm QMA}_1$-complete.\\

\noindent\textbf{Stoquastic LH.} A natural special case of LH is that of \emph{stoquastic} local Hamiltonians, in which the local constraints have only non-positive off-diagonal matrix elements in the computational basis. This class of LH does not suffer from the so-called ``sign problem'' in quantum Monte Carlo simulations (quantum Monte Carlo is not a quantum algorithm; it is just a classical Monte Carlo algorithm applied to quantum systems) and is thus heuristically expected to be easier than the general LH problem. Indeed, the Stoquastic $k$-SAT problem, defined as the stoquastic variant of Quantum $k$-SAT, was shown to be in Merlin-Arthur (MA) for $k\geq 1$ and MA-complete for $k\geq 6$ by Bravyi, Bessen, and Terhal~\cite{BBT06} and Bravyi and Terhal~\cite{BT09}. (Incidentally, this was the first non-trivial example of an MA-complete promise problem.) The problem Stoquastic LH-MIN, defined as $\klhh$ with stoquastic Hamiltonians, was shown to be contained in AM~\cite{BDOT08} and complete for the class StoqMA~\cite{BBT06} for $k\geq 2$. Here, StoqMA is a variant of QMA in which the verifier is restricted to preparing qubits in the states $\ket{0}$ and $\ket{+}$, performing classical reversible gates, and measuring in the Hadamard (i.e.\ $\ket{+},\ket{-}$) basis.  Finally, Jordan, Gosset, and Love showed that computing the \emph{largest} eigenvalue of a stoquastic local Hamiltonian is QMA-complete~\cite{JGL10}.\\

\noindent\textbf{Commuting LH.} Unlike classical constraint satisfaction problems, quantum constraints do not necessarily pairwise commute. It is thus natural to ask how crucial this non-commuting property is to the QMA-completeness of LH. In this direction, Bravyi and Vyalyi showed~\cite{BV05} that commuting $2$-LH on qudits is in NP. Aharonov and Eldar subsequently showed~\cite{AE11} that $3$-LH Hamiltonian on \emph{qubits} is in NP, as well as for qutrits on ``nearly Euclidean'' interaction graphs. Schuch showed that $4$-LH on qubits arranged in a square lattice is in NP~\cite{s11}. Hastings proved that special cases of $4$-LH on $d$-dimensional qudits on a square lattice is in NP~\cite{H12_2}. Aharonov and Eldar proved that approximating the ground state energy of commuting local Hamiltonians on good locally-expanding graphs within an additive error of $O(\epsilon)$ is in NP~\cite{AE13_1,AE13_2}. Finally, Gharibian, Landau, Shin, and Wang showed~\cite{GLSW14} that the commuting variant of the Stoquastic $k$-SAT problem is in NP for logarithmic $k$ and any constant $d$.

In terms of efficiently solvable variants of commuting LH, Yan and Bacon~\cite{YB12} showed that the special case in which all commuting terms are products of Pauli operators is in P. Aharonov, Arad, and Irani~\cite{AAI10} and Schuch and Cirac~\cite{SC10} showed that commuting LH in 1D can be solved efficiently by dynamic programming.\\

\noindent\textbf{Bosons and Fermions.} Approximating the ground state energy of Hamiltonians acting on indistinguishable particles is also QMA-hard, as shown by Liu, Christandl and Verstraete for fermions~\cite{LCV07} and Wei, Mosca, and Nayak for bosons~\cite{WMN10}. Schuch and Verstraete~\cite{SV09} showed QMA-hardness for the Hubbard model, whereas very recently, Childs, Gosset, and Webb showed QMA-hardness for the Bose-Hubbard model~\cite{CGW13}. See Chapter~\ref{scn:motivation} for more on indistinguishable particles.\\

\noindent\textbf{Approximation algorithms for LH.} Given the prevalence of heuristic algorithms for solving $\klhh$, a natural question is whether rigorous (classical) approximation algorithms for $\klhh$ can be derived. Here, Bansal, Bravyi and Terhal showed~\cite{BBT09} that $k$-LH on bounded degree planar graphs, as well as on the unbounded degree star graph, can be approximated within $\epsilon$ relative error for any $\epsilon\in\Theta(1)$ in polynomial time, i.e. they gave a Polynomial Time Approximation Scheme (PTAS). Gharibian and Kempe next gave~\cite{GK11} a PTAS for approximating the best \emph{product-state} solution for $\klhh$ on dense interaction graphs, and showed that product state solutions yield a $(d^{1-k})$-approximation to the optimal solution for \emph{arbitrary} (i.e. even non-dense) interaction graphs on $d$-dimensional systems. Based on numerical evidence, they conjectured\footnote{Private communication with F. Brand\~{a}o.} that for dense graphs, a quantum de Finetti theorem without symmetry holds, which can in turn be exploited to yield a PTAS for dense $\klhh$ (as suggested by mean-field theory folklore). Indeed, Brand\~{a}o and Harrow  proved~\cite{BH13} this conjecture, along with other results: A PTAS for planar graphs (improving on~\cite{BBT09}), and an efficient approximation algorithm for graphs of low threshold rank. Finally, Bravyi has recently given~\cite{B14} a Fully Polynomial Randomized Approximation Scheme (FPRAS) for approximating the partition function of the transverse field Ising model (which in turn implies an efficient approximation algorithm for determining the ground state energy of the model).\\

\noindent\textbf{Hardness of approximation for LH.} The PCP Theorem~\cite{AS98,ALMSS98} is one of the crowning achievements of modern complexity theory. As such, a major open question in quantum Hamiltonian complexity is whether a \emph{quantum} version of this theorem holds~\cite{AN02,A06}. Rigorously formulated in the work of Aharonov, Arad, Landau and Vazirani~\cite{AALV09}, the question has attracted much attention in the last decade. For example, Reference~\cite{AALV09} proved that a quantum analogue of Dinur's gap amplification step in her proof of the PCP theorem~\cite{D07} can be shown in the quantum setting. For further details on the quantum PCP conjecture, we refer the reader to the recent survey dedicated to the topic by Aharonov, Arad, and Vidick~\cite{AAV13}.

More generally, in terms of hardness of approximation for quantum complexity classes, Gharibian and Kempe~\cite{GK12} defined a quantum version of $\Sigma_2^p$ (the second level of the Polynomial Time Hierarchy), and showed hardness of approximation for various local Hamiltonian-related problems such as Quantum Succinct Set Cover. Reference~\cite{GK12} also showed a hardness of approximation and completeness result for QCMA, which is defined as QMA with a \emph{classical} prover~\cite{AN02}. (QCMA is also known by the name Merlin-Quantum-Arthur (MQA)~\cite{W09_2}.)

\section{The physics perspective}\label{sscn:physicshistory}

We now briefly describe the history of quantum Hamiltonian complexity from a physics perspective. This section is by no means comprehensive; the reader is referred to the surveys of Verstraete, Murg, and Cirac~\cite{VMC08} and Osborne~\cite{O11}, for example, or to their friendly neighborhood physicist for further details.\\

\noindent\textbf{Classical Hamiltonians.} Beginning with the case of classical Hamiltonians, an early and canonical example of computational hardness is Baharona's work~\cite{B82}, which showed that finding a ground state and computing the magnetic partition function of an Ising spin glass in a nonuniform magnetic field are NP-hard tasks. Jerrum and Sinclair, on the other hand, showed~\cite{JS93} \#P-completeness of computing the partition function of the ferromagnetic Ising model, and gave a Fully Polynomial Randomized Approximation Scheme (FPRAS) in the same paper. As implied by the title of this paragraph, note that the Ising model is \emph{classical} in that all variables are assigned values in the set $\set{+1,-1}$; thus, the ground state has an efficient classical description.\\

\noindent\textbf{Quantum Hamiltonians.} In contrast, for \emph{quantum} Hamiltonians, the last two decades have seen much effort towards classifying when a ground state can be described efficiently classically~\cite{VMC08}. One of the main instigators of this push was White's celebrated Density Matrix Renormalization Group\footnote{Note: The word \emph{Group} does not actually refer to a group in the usual mathematical sense here.} (DMRG) method~\cite{W92,W93}, which is a heuristic algorithm performing remarkably well in practice for finding ground states of 1D quantum systems.  It was later realized~\cite{OR95,RO97,VPC04, VMC08, WVSCD09} that DMRG can be viewed as a variational algorithm over the class of tensor network states known as Matrix Product States (MPS).

More generally, tensor network states have a long history in physics, appearing in works as early as 1941~\cite{KW41} (see~\cite{NHOH99} for a survey). As 1D tensor networks, MPS \cite{PVWC07} are arguably the most basic form of such states. They have been used in the last decade, for example, by Vidal~\cite{V03,V04} to efficiently classically simulate ``slightly entangled'' quantum computation (or quantum evolution). Generalizations of MPS (e.g., to higher dimensions) have also been proposed, such as the Projected Entangled Pair States (PEPS) of Verstraete and Cirac~\cite{VC04,VWPC06}, and the Multiscale Entanglement Renormalization Ansatz (MERA) of Vidal~\cite{V07,V08}. Note that while MPS and MERA networks can be efficiently contracted, Schuch, Wolf, Verstraete and Cirac have shown that contracting a PEPS network is in general $\#$P-complete~\cite{SWVC07}. More recently, Gharibian, Landau, Shin, and Wang showed~\cite{GLSW14} that even the basic task of determining whether an arbitrary tensor network represents a non-zero vector is not in the Polynomial-Time Hierarchy~\cite{MS72} unless the hierarchy collapses.

Continuing with applications of tensor networks in the study of ground spaces, Hastings showed~\cite{Ha07} in 2007 that the ground state of gapped 1D Hamiltonians can be well approximated by an MPS with polynomial bond dimension; this helped explain the effectiveness of DMRG. However, DMRG is a heuristic, and a rigorous proof that such a MPS can be found in polynomial time required further work. In this direction, Aharonov, Arad, and Irani~\cite{AAI10} and Schuch and Cirac~\cite{SC10} showed that given a fixed bond dimension as input, the optimal MPS of that bond dimension can be found efficiently via dynamic programming. Note that their algorithm does not require the 1D Hamiltonian to have a spectral gap. For gapped systems, Arad, Kitaev, Landau and Vazirani subsequently showed~\cite{AKLV13} that an MPS with sublinear bond dimension suffices to approximate the ground state; combined with the algorithms of~\cite{AAI10,SC10}, this yielded a subexponential time algorithm for gapped systems. Finally, Landau, Vazirani, and Vidick showed~\cite{LVV13} that the problem of finding an approximation to the ground state of a 1D gapped system is in BPP.\\

\noindent\textbf{Area laws.} A key problem in quantum many-body physics is understanding the entanglement structure of a ground state. Here, a specific question which has attracted much attention is the possible existence of area laws. Roughly, an \emph{area law} says that for any subset $S$ of particles chosen from an $n$-particle ground state, the amount of entanglement crossing the cut between $S$ and its complement scales not with the size of $S$, but rather with the size of the \emph{boundary} of $S$. In this direction, a breakthrough result was Hastings' proof~\cite{Ha07} of an area law for gapped $1$D systems. A combinatorial proof improving on Hastings' result for the frustration-free case was later given by Aharonov, Arad, Landau and Vazirani~\cite{AALV11, ALV12}, followed by a proof of Arad, Kitaev, Landau, and Vazirani's~\cite{AKLV13} which applies in frustrated settings as well. Whether a 2D area law holds remains a challenging open question. This subject will be treated in more detail in \S\ref{sscn:arealaw_overview} and \S\ref{sscn:arealaw}. 

\section{Selected recent developments}\label{sscn:recenthistory}

The field of Quantum Hamiltonian Complexity is rapidly evolving, with a number of developments having taken place since this survey first appeared. We briefly list a selected number of these developments below.

Beginning with computer science-oriented results, Cubitt, Perez-Garcia and Wolf showed~\cite{CPW15} that determining whether a translationally-invariant, nearest-neighbor Hamiltonian on a 2D square lattice (with constant local dimension) is gapped is undecidable. Bravyi and Hastings showed~\cite{BH14} that estimating the ground state energy of the Transverse field Ising Model on degree-3 graphs is StoqMA-complete, thus completing the complexity classification of Cubitt and Montanaro~\cite{CM13}. Gosset, Terhal, and Vershynina showed~\cite{GTV15} how to perform universal adiabatic quantum computation using the space-time circuit-to-Hamiltonian construction. Fitzsimons and Vidick gave~\cite{FV14} a multiprover interactive proof system for the Local Hamiltonian problem involving a constant number of entangled provers. Chubb and Flammia \cite{CF15} extended the works of Landau, Vazirani and Vidick~\cite{LVV13} and Huang~\cite{Hua14} to give a polynomial-time algorithm for approximating ground space projectors of gapped 1D Hamiltonians with degenerate ground spaces. Gharibian and Sikora showed~\cite{GS14} that the following problem, motivated by quantum memories, is QCMA-complete: Given a local Hamiltonian $H$ and two ground states $\ket{\psi_1}$ and $\ket{\psi_2}$ of $H$, is there a sequence of local unitaries mapping $\ket{\psi_1}$ to $\ket{\psi_2}$ ``through'' the ground space of $H$? Arad, Santha, Sundaram and Zhang~\cite{ASSZ15} and de Beaudrap and Gharibian~\cite{dBG15} independently gave linear time classical algorithms for quantum $2$-SAT, improving on Bravyi's quartic time algorithm~\cite{B06}.

In the physics direction, Ge and Eisert showed~\cite{GE14} that in two and higher dimensions, it is not in general true that an area law for the Renyi entanglement entropy implies the ability to faithfully describe a quantum many-body state by an efficient tensor network. Aharonov \emph{et al.} showed~\cite{AHLNSV14} that a \emph{generalized} version of the area law fails to hold. Movassagh and Shor showed~\cite{MS14} that a generalization of Bravyi \emph{et al.}'s~\cite{BCMNS12} spin-$1$ model to integer spin-$s$ chains ($s>1$) yields a power law violation of the area law (the energy gap in these models is an inverse polynomial in the system size). Brand\~{a}o and Cramer showed~\cite{BC14} that exponential decay in the specific heat capacity at low temperatures yields an area law (up to a logarithmic correction) for low-energy states (i.e., not just the ground state). Mari\"{e}n \emph{et al.} proved~\cite{MAAV14} the stability of the area law for entanglement entropy in quantum spin systems in the setting of adiabatic and quasi-adiabatic evolutions.

\chapter{Motivations From Physics}\label{scn:motivation}

We have thus far defined $k$-local Hamiltonians and discussed a brief history of their study from computer science and physics perspectives. However, we have not yet asked perhaps the most fundamental question: \emph{Why?} Namely, where do Hamiltonians come from, and why do we care about their study? Where does the picture sketched thus far fit into a condensed matter physicist's views of what is important and what is not? In this section, we attempt to answer these questions.

We begin in \S\ref{sscn:setstage} and \S\ref{sscn:timeandthermal} by discussing the high-level questions condensed matter physics is typically interested in. In \S\ref{sscn:wherefrom}, we briefly pause to consider where Hamiltonians come from, and discuss challenges faced by physicists regarding their use. Finally, \S\ref{sscn:techniques} gives a brief introduction to classical and quantum techniques for overcoming the challenges mentioned above.

\section{Setting the stage}\label{sscn:setstage}
To set the stage, we first clarify the high-level settings in which the study of local Hamiltonians is relevant to condensed matter physicists:\\

\begin{enumerate}
    \item \textbf{The question.} Typically, we are interested in this question: Given a quantum state $\rho$, compute some \emph{local} property of $\rho$ (see, e.g., Equation~\ref{eqn:sim}). In other words, determining an accurate description of the \emph{entire} wave function is {not} always necessary. (There are exceptions to this statement, such as when studying topological properties, which are intrinsically \emph{non}-local.) For example, given a state $\rho$ (either in the lab or via some succinct classical representation), we might wish to estimate the $2$-point correlation function
\[
    \langle\sigma^z_m\sigma^z_n\rangle := \trace\left[\rho (\sigma^z_m\otimes \sigma^z_n)\right],
\]
where $\sigma^z_m$ denotes the Pauli $Z$ operator applied to qubit $m$. Such local properties are dubbed \emph{intensive}, in that they do not scale with the system's size. In contrast, the ground state energy of a Hamiltonian may scale with the system size, and is thus \emph{extensive}.

    \item \textbf{The setting.} When it comes to interaction (i.e. constraint) graphs, it is natural for theoretical computer scientists to study constraint satisfaction problems on a variety of graphs: Sparse graphs, dense graphs, expanders, and so forth. Condensed matter physicists, on the other hand, often simplify their models approximating the physical world by assuming that the collection of quantum particles to be studied lives on a $d$-dimensional lattice of $N$ sites, and that all interactions are nearest-neighbor. Indeed, typically in real materials particles \emph{are} arranged in a regular lattice and the interaction range is short. Moreover, we are interested in the behavior of local properties in the thermodynamic limit $N\rightarrow+\infty$, since $N$ is very large in a macroscopic piece of material.
\end{enumerate}

\noindent We thus henceforth focus mainly on local properties of systems arranged on a lattice in the remainder of this section.

\section{Time evolution versus thermal equilibrium}\label{sscn:timeandthermal}

Given that we wish to study local properties of systems arranged on a lattice, we now ask: In which contexts can we study such local properties? Two such contexts are: (a) With respect to the \emph{time evolution} of the quantum system, and (b) at \emph{thermal equilibrium}. We now discuss these in further depth.

\paragraph{Time evolution.} In this setting, we are interested in how local properties of the system behave as the system evolves in time. Here, recall that the time evolution of a quantum state at time $t$, denoted $\rho(t)$, is given by the Schr\"{o}dinger equation, which is a postulate of quantum mechanics:
\[
    \rho(t) = e^{-iHt}\rho(0)e^{iHt},
\]
where $H$ is a Hermitian operator known as the \emph{Hamiltonian} of the system. In particular, in this setting, we imagine that the system $S$ being studied is \emph{isolated} from the outside world.

\paragraph{Thermal Equilibrium.} In contrast to the setting of time evolution above, when studying local properties of our system $S$ at \emph{thermal equilibrium}, we imagine that $S$ is allowed to \emph{interact} with its environment for a ``long time''. Eventually, the system will reach a stationary state known as the \emph{thermal state}, $\rhoeq$, which no longer depends on time. Here, by ``stationary state'', we mean that the {local} properties of the system have ceased to fluctuate. The density matrix of our thermal state is known as the \emph{Gibbs state}, which can be derived from the equal \emph{a priori} probability postulate of statistical mechanics:
\begin{equation}\label{eqn:gibbs}
    \rhoeq = \frac{e^{-\beta H}}{Z}\quad\quad\text{for}\quad\quad Z:=\trace\left(e^{-\beta H}\right).
\end{equation}
Here, $Z$ is called the \emph{partition function}, and the parameter $\beta$ scales inversely with temperature $T$. Two remarks are in order. First, as stated earlier, $\rhoeq$ does not depend on time, $t$. Second, $\rhoeq$ is not \emph{provably} a description of $S$ at equilibrium; rather, it is an educated guess from statistical mechanics which works well for many systems in practice, \emph{assuming} one is interested in local properties of $S$. For further reading on notions of equilibration, the reader is referred to (e.g.)~\cite{LPSW09}.

\section{Where do Hamiltonians come from?}\label{sscn:wherefrom}

We have now stated that we are interested in studying local properties of Hamiltonians on the lattice, and that we can conduct such studies either with respect to time evolution or at thermal equilibrium. But we still have not answered a key question: Where does our Hamiltonian \emph{come from} to begin with? There are two answers to this question: The easy answer is that Hamiltonians naturally arise from the Schr\"{o}dinger equation. However, this is somewhat unsatisfying --- it gives no intuition as to which classes of Hamiltonians describe physically relevant systems. Thus, the more interesting (and difficult) question is: How do we identify physically motivated classes of Hamiltonians?

To answer this, suppose for a moment that you are a physicist, and you are sitting in your lab observing your experimental apparatus induce some quantum evolution. How would you determine the Hamiltonian $H$ governing this evolution? Does the Schr\"{o}dinger equation give you an efficient approach for deducing $H$? {No!} In practice, one has no idea what $H$ actually is. Thus, we must resort to the grade-school Scientific Method --- make an educated guess $H'$ (a Hamiltonian which might reproduce the local properties of our system), and then corroborate this guess via experiments. In particular, note that our aim is not to guess $H$ itself, as this is often intractable due to the large number of degrees of freedom involved. Instead, we desire a \emph{simplified} model $H'$ which ignores certain degrees of freedom, but nevertheless captures local properties of interest. For example, one might choose to model only the {spin} of an electron accurately, and ignore the electron's other properties. For this reason, Hamiltonian models appearing in the condensed matter literature should {not} be thought of as ``accurately characterizing a system'', but rather as \emph{phenomenological} objects which, up to minor corrections, appear to model certain local properties of the system well.

Summarizing, to determine the Hamiltonian $H$ modeling our system, we (1) guess a candidate simplified Hamiltonian $H'$, (2) check what a theoretical (i.e. pen-and-paper) calculation on $H'$ predicts about the behavior of the system, and (3) confirm this prediction via experimental tests. What kind of theoretical calculations might we consider in step (2) above? This is where key issues faced by physicists arise. For example, in step (2) we might wish to estimate a two-point correlation function of $H'$'s ground state. However, we immediately hit a brick wall, as the ability to estimate such functions in general implies the ability to estimate a Hamiltonian's ground state energy (which is QMA-hard)! So it seems we are in a bind; to better understand physical systems, we model them using Hamiltonians, but studying Hamiltonians themselves is not necessarily easy.

Indeed, this is precisely the focus of much of condensed matter physics. Here are two key questions faced by the community:
\begin{itemize}
    \item Given a phenomenological Hamiltonian $H'$, how well does it actually model the desired properties of the underlying system $S$?
    \item How well does $H'$ allow us to make predictions about future properties of $S$ which we might be interested in?
\end{itemize}
In the next section, we discuss various approaches for answering questions related to these.

\section{The study of Hamiltonians: Techniques and tools}\label{sscn:techniques}

In the previous section, we identified key questions faced by the physics community regarding the study of local Hamiltonians. In this section, we give a flavor of both classical and quantum techniques for circumventing the challenges arising in these scenarios. In particular, the basic goal here is: Given a phenomenological Hamiltonian $H$, calculate certain properties of $H$. We begin with classical simulations in \S\ref{ssscn:classicalsim}, and move to quantum simulations in \S\ref{ssscn:quantsim}.

\subsection{Classical simulations}\label{ssscn:classicalsim}

Given a Hamiltonian $H$, in this section, we use classical computers to study the following:
\begin{itemize}
    \item Are there efficient \emph{algorithms} for approximating local properties of $H$?
    \item Can objects of interest, such as the ground state of $H$, be represented by a space-efficient \emph{data structure}? Do these structures allow us to perform useful computations?
    \item Finally, given such a space-efficient data structure, can it be used to make \emph{predictions} about the system for future properties we might be interested in?
\end{itemize}
Two remarks are in order here. First, an \emph{efficient} algorithm to a theoretical computer scientist typically means a rigorous algorithm which runs in polynomial time on a {worst case} instance. In contrast, to a physicist, an efficient algorithm often means a fast heuristic algorithm which works well in practice. Second, the third question on the ability to make predictions about a system is vital --- it allows us to mathematically predict how the properties of a material will change or behave under certain conditions, which in turn allows us to design advanced materials with desirable properties.

\paragraph{The variational principle.} To answer the questions above, a primary line of attack is via the heuristic of the \emph{variational principle}. Specifically, suppose one is interested in optimizing a function $f(S)$ over a set $S$. Then, the variational principle simply means carrying out the optimization over some restricted set $S'\subseteq S$, which we hope will allow an easier computation.

For example, suppose we wish to compute the ground state energy of Hamiltonian $H$ acting on $n$ qubits, i.e.
\begin{equation}\label{eqn:min}
    \min_{\rho\succeq 0, \trace(\rho)=1} \trace(H\rho).
\end{equation}
Since $\rho$ can be highly entangled, it may require exponentially many bits to describe classically; this makes designing a heuristic algorithm for solving Equation~\ref{eqn:min} difficult. Via the variational principle, however, one can instead optimize over a simpler set $S'$ of quantum states in order to approximate the ground state energy. Which set of states $S'$ should we choose? Below, we discuss three popular candidates.\\

\noindent\emph{Product states.} The most basic set $S'$ one can choose is the set of all tensor product states $\rho=\rho_1\otimes\rho_2\otimes\cdots\otimes \rho_n$. Do not be fooled by the simplicity of this ansatz, however; it has proven quite effective in many scenarios, and there is an entire area of study devoted to it known as \emph{mean field theory}. (See \S\ref{sscn:meanfield} for more on mean field theory.) In addition, working with product states is by no means easy --- optimizing Equation~\ref{eqn:min} over product states, for example, is NP-hard. (To see this, note that if $H$ encodes a 3-SAT instance, then the ground state of $H$ is without loss of generality a product state.) \\

\noindent\emph{Gaussian states.} Another common choice of $S'$ is the set of \emph{Gaussian} states, i.e.
\[
    \rho = e^{-Q(a_1^\dagger,a_2^\dagger,\ldots,a_1,a_2,\ldots)},
\]
where $Q$ is some quadratic function, and the operators $\{a_i^\dagger\}$ and $\set{a_i}$ are known as \emph{creation} and \emph{annihilation} operators, respectively. (See \S\ref{ssscn:quantsim} for more on creation and annihilation operators.) Local predictions on Gaussian states can be made efficiently, i.e. in time scaling polynomially with the system size. For further reading on Gaussian states, the reader might consider (e.g.)~\cite{WHTH07, ARL14}.\\

\noindent\emph{Tensor network states.} The final choice of $S'$ we discuss is the set of \emph{tensor network states}. (See \S\ref{sscn:tensor} for a definition of tensor networks.) Unlike product states, tensor network states can represent a variety of quantum states with exotic quantum correlations, such as (chiral \cite{DR13, WTSC13}) topological states. However, as mentioned in \S\ref{sscn:physicshistory}, this expressive power has a downside: Contracting tensor networks is $\#{\rm P}$-hard in general. To circumvent this, the strategy is to choose a clever subset of tensor network states $S''$ which not only allows us to model the system we are interested in, but which also allows efficient calculation of local properties. Arguably the most successful application of this idea in conjunction with the variational principle has been White's DMRG algorithm~\cite{W92,W93} on MPS, which is discussed in further depth in \S\ref{sscn:DMRG}. Note, however, that DMRG is a heuristic; in contrast, rigorously finding a good MPS approximation to the ground state of a $1$D (gapless) Hamiltonian is NP-hard~\cite{SCV08}.

Let us delve further into tensor networks, as they have proven a particularly useful ansatz over the last two decades. In particular, a key question regarding them is:
\begin{quote}
    \emph{Which quantum systems have ground states which can be well-approximated by a tensor network of small bond dimension $D$?}
\end{quote}
Here, the \emph{bond dimension} is an important parameter characterizing ``how much'' entanglement the network can faithfully represent; the larger $D$ is, the more quantum states we can represent, at the cost of requiring more storage space. (See \S\ref{sscn:tensor} for further details.) Since we are restricted to polynomial-sized descriptions of states in practice, it is thus imperative to keep the bond dimension as small as possible (representing an arbitrary quantum state requires exponential $D$).

We now briefly overview recent progress regarding the question above. In $1$D gapped systems, 
MPS with subpolynomial $D$ approximate the ground state well~\cite{Ha07,AKLV13}, and moreover, there is a polynomial-time algorithm for finding a good MPS approximation to the ground state~\cite{LVV13}. In $1$D gapless (critical) systems, MPS with polynomial $D$ approximate the ground state well if certain R\'{e}nyi entropies diverge at most logarithmically~\cite{VC06} (see also~\cite{SWVC08}). On the other hand, at \emph{finite} temperature $T>0$ in any spatial dimension $d$, Hastings has shown~\cite{H06} that a tensor network with bond dimension $D=\exp (O(\beta\log(N/\epsilon)^d))$ approximates the Gibbs state of the system within trace distance $\epsilon$, where $N$ is the size of the lattice, and $\beta:=1/\kappa_B T$ for temperature $T$ and Boltzmann constant $\kappa_B$. If $H$ is gapped and the density of low energy states grows polynomially in $N$ for a fixed energy, then this result also implies a tensor network approximation of the \emph{ground state} with $D=\exp(O(\log^{d+1}(N/\epsilon)))$.  Molnar \emph{et al.} improved~\cite{MSVC14} these results to $D=(N/\epsilon)^{O(\beta)}$ and $D=\exp (O(\log^2 (N/\epsilon)))$ in the finite temperature and ground state cases, respectively. 
Intuitively, the ``density of states'' assumption here says that the low energy spectrum is ``sparsely populated'' by eigenvectors; this is, in fact, typical of quantum systems in nature.

For more information on the use of the variational principle in conjunction with tensor networks, the reader is referred to~\cite{VMC08}. An introductory discussion on the variational principle aimed at physicists is given in Chapter $7$ of Griffiths~\cite{G04}.

\subsection{Quantum simulations}\label{ssscn:quantsim}

In \S\ref{ssscn:classicalsim}, we discussed the use of \emph{classical} simulations to study phenomenological Hamiltonians $H$. One particularly difficult setting which, however, does not seem amenable to these techniques is that of \emph{strongly correlated} materials. For example, high-temperature superconductors, despite having seen over two decades of research, remain arguably poorly understood. Recent research in \emph{quantum} simulations has, however, led to exciting advances in this direction~\cite{QH09}. Key to this progress are two ingredients: One old, and one new. The ``old'' ingredient is the famous Hubbard model, formulated by John Hubbard in the 1960's. The ``new'' ingredient is Feynman's remarkable idea~\cite{F82} of building \emph{quantum} devices to efficiently simulate quantum systems. In particular, the idea explored in this section is the use of quantum simulations to uncover properties of the Hubbard model. Further details on this topic can be found, for example, in the brief survey of Quintanilla and Hooley~\cite{QH09}.

Before we begin, let us stress that to study (say) the Hubbard model, one need not necessarily construct a \emph{general-purpose} quantum computer; rather, a device tailored to simulating this one model would suffice. Thus, the goal of quantum simulations may be easier to achieve than a large scale quantum computer~\cite{CZ12}. Indeed, this is the approach taken in this section.

In order to define the Hubbard model, we must first make a brief detour to introduce the notions of fermions and bosons, which are the types of particles on which the model acts.

\paragraph{Detour: Indistinguishable particles.} Before defining fermions and bosons formally, let us set the stage with a high-level overview of indistinguishable particles. The Standard Model of particle physics, developed in the latter half of the $20^{\rm th}$ century, characterizes various types of interactions which occur in nature, such as the electromagnetic interaction. These interactions, in turn, govern the dynamics of subatomic particles such as photons, electrons, and protons. In particular, the model treats these particles  as \emph{indistinguishable}. In other words, if for example we are given a pair of photons at positions $X$ and $Y$, then the physical state of the photons (described by the wave function, up to global phase) remains \emph{invariant} if we swap the positions of the photons, i.e. the physical state only cares about the \emph{number} of particles at each site. However, there is a twist: The wave function of some particles, such as electrons, obtains a global phase of $-1$ upon performing a swap. This may \emph{a priori} appear irrelevant, as global phases cannot be physically observed. However, it is a remarkable conclusion of the Standard Model that this seemingly innocuous global phase turns out to dictate the very structure of matter around us.

To understand this phenomenon, the Standard Model differentiates between two types of particles: \emph{Bosons} and \emph{fermions}. Mathematically, these types are characterized by the Spin-Statistics theorem, which states that given a system of identical particles, one of the two cases must hold:
\begin{itemize}
    \item If the particles have integer spin (i.e. $s=0,1,2,\ldots$), then exchanging any pair of them leaves the wave function invariant. Such particles are called \emph{bosons}.
    \item If the particles have half-integer spin (i.e. $s=1/2,3/2,5/2, \ldots$), then exchanging any pair of them induces a global phase of $-1$ on the wave function. Such particles are called \emph{fermions}.
\end{itemize}
Note that the latter statement above captures the curious phase of $-1$ discussed earlier, and is essentially the Pauli exclusion principle. 

What does the Spin-Statistics theorem actually mean in terms of how large numbers of bosons or fermions behave? The distinction between bosons and fermions manifests itself at thermal equilibrium by leading to two possible statistical distributions governing how a system of indistinguishable and non-interacting particles populate a set of discrete energy states: Bose-Einstein statistics for bosons, and Fermi-Dirac statistics for fermions. (This also explains the name Spin-\emph{Statistics} Theorem above.) We begin by discussing the former. Note that both statistics hold for sufficiently concentrated systems of particles at low temperature. At high temperature, the same statistics hold; however, these are now well-approximated by classical Maxwell-Boltzmann statistics.\\

\noindent \emph{Bose-Einstein statistics.} Developed by Satyendra Nath Bose and Albert Einstein in the mid-1920's, Bose-Einstein statistics predicts that at low temperature, particles tend to aggregate in the same quantum state, namely the ground state. As a result, bosons typically play the role of ``force carrier'' particles in nature, i.e. they transmit interactions. A good example of this is a laser, which consists of many bosons in the same state. In addition, the tendency for a system of bosons to occupy the ground state at very low temperatures can lead to a special state of matter known as a \emph{Bose-Einstein condensate}, which can exhibit quantum effects at the macroscopic scale. For example, when Helium-4, which is a gas of bosons, is cooled to temperatures near absolute zero, it becomes a \emph{superfluid}, i.e. it behaves like a fluid with zero viscosity. *Thus, the ability for many bosons to aggregate in the same mode is crucial to the role they play in nature. Examples of bosons include elementary particles, such as photons, gluons, and the Higgs boson, as well as composite particles, such as the Helium-4.\\

\noindent\emph{Fermi-Dirac statistics.} Named after Enrico Fermi and Paul Dirac, Fermi-Dirac statistics applies to fermions, and in stark contrast to bosons, follows the principle that two fermions cannot occupy the same quantum state. For example, if two electrons (electrons are fermions with spin $1/2$) are at the same site, then they must differ in at least one property, such as having anti-aligned spins. It is precisely for this reason that electrons arrange themselves in orbits with higher and higher energy about the nucleus of an atom, giving matter a ``rigid'' structure and non-trivial volume. In this sense one may intuitively think of fermions as the ``building blocks of matter''. Examples of fermions include elementary particles such as electrons, quarks, and leptons, and composite particles such as protons and neutrons. In fact, any composite particle composed of an odd number of fermions is also a fermion; for example, protons and neutrons consist of three {quarks}. Any composite particle composed of an even number of fermions, however, is a boson.

\paragraph{Back to quantum simulation: The Hubbard model.} We now introduce the Hubbard model, which since its inception in 1963~\cite{H63}, has become a ``standard model'' in condensed matter physics. Hubbard's aim was to propose ``the simplest possible model'' which would explain the behavior of strongly correlated materials~\cite{QH09}. In particular, his model aims to describe the behavior of electrons in solids, and can capture the transition of a system between being a conductor and an insulator. It is thus of particular interest in the study of high-temperature superconductivity. Although the original model was proposed using fermions, a bosonic version is also known and referred to as the \emph{Bose-Hubbard} model, which we also introduce here.

To help understand the technical definitions we give below, the intuition behind the Hubbard model is as follows. Previous to Hubbard's proposal, the \emph{tight-binding} approximation from solid-state physics explained conduction by viewing electrons as ``hopping'' from the electron orbitals of one atom to another. However, when certain materials are heated, thus increasing the spacing between atoms, they can transition from being a conductor to an insulator; the tight-binding approximation fails to account for this. To resolve this, Hubbard introduced a second term to the tight-binding approximation's Hamiltonian which is meant to model the ``on-site repulsion'' resulting from the Coulomb repulsion between electrons.

For simplicity, we begin with the Bose-Hubbard model (i.e. bosons).\\

\noindent\emph{The Bose-Hubbard model.} The Bose-Hubbard model involves interacting bosons, and its Hamiltonian on a 2D lattice is given by
\begin{equation}\label{eqn:bosehubb}
    H=-t\sum_{\langle n,m \rangle} a_n^\dagger a_m + u\sum_n a_n^\dagger a_n^\dagger a_n a_n.
\end{equation}
Here, $\langle n,m\rangle$ denotes that the summation is taken over nearest neighbors, $t$ is the {hopping amplitude}, $u$ is the interaction strength, and $\{a_n^\dagger\}$ and $\set{a_n}$ are the bosonic creation and annihilation operators, respectively. Let us define the latter.

Let $\{\ket{k}_n\}_{k=0}^{+\infty}$ be a complete set of local orthogonal basis vectors, where $\ket{k}_n$ means that there are $k$ (identical) bosons in location or \emph{mode} $n$ (the subscript is omitted if the location is unspecified). Then, the state $|0\rangle$ has a special name: It is the \emph{vacuum} state. The bosonic creation and annihilation operators are given by
\begin{eqnarray*}
    a^\dagger\ket{k}=\sqrt{k+1}\ket{k+1}\quad\quad\text{and}\quad\quad a\ket{k}=\sqrt{k}\ket{k-1},
\end{eqnarray*}
and they satisfy the canonical commutation relations
\[
    [a_m,a_n] = [a^\dagger_m,a^\dagger_n]=0\quad\quad\text{and}\quad\quad[a_m,a^\dagger_n]=\delta_{mn},
\]
where $[x,y]=xy-yx$, and $\delta_{mn}$ is the Kronecker delta function ($\delta_{mn}=1$ if $m=n$ and $\delta_{mn}=0$ otherwise). Thus, for example, the state of $k$ (identical) bosons can be expressed as
\[
    \ket{k}=\frac{(a^\dagger)^k}{\sqrt{k!}}\ket{0}.
\]

\noindent\emph{The Hubbard model.} In the Bose-Hubbard model, one could have arbitrarily many bosons in a single mode. Recall, however, that by the Pauli Exclusion Principle, any pair of fermions occupying the same site must differ in some property, e.g. their spin. The (fermionic) spin-$1/2$ Hubbard Hamiltonian is
\begin{equation}\label{eqn:fermhubb}
    H=-t\sum_{\substack{\langle n,m \rangle\\\sigma\in\set{\uparrow,\downarrow}} } a_{n,\sigma}^\dagger a_{m,\sigma} + u\sum_n a_{n,\uparrow}^\dagger a_{n,\uparrow} a_{n,\downarrow}^\dagger a_{n,\downarrow}.
\end{equation}
In this case, the fermionic creation and annihilation operators are defined on basis $\set{\ket{0},\ket{1}}$ as
\begin{eqnarray*}
    a^\dagger\ket{0}=\ket{1},\quad a^\dagger\ket{1}=0,\quad a\ket{0}=0,\quad a\ket{1}=\ket{0},
\end{eqnarray*}
with the canonical \emph{anti}-commutation relations
\[
    \{a_{m,\sigma},a_{n,\sigma'}\} = \{a^\dagger_{m,\sigma},a^\dagger_{n,\sigma'}\}=0\quad\quad\text{and}\quad\quad\{a_{m,\sigma},a^\dagger_{n,\sigma'}\}=\delta_{mn}\delta_{\sigma\sigma'},
\]
where $\{x,y\}=xy+yx$. For clarity, in matrix form, we have
\[
    a^\dagger=\left(
                \begin{array}{cc}
                  0 & 0 \\
                  1 & 0 \\
                \end{array}
              \right)
    ,\quad\quad a=\left(
                \begin{array}{cc}
                  0 & 1 \\
                  0 & 0 \\
                \end{array}
              \right).
\]
Note that, as expected for fermions, $a^\dagger$ is nilpotent, meaning $(a^\dagger)^2=0$ --- this implies precisely that a second fermion cannot be created in an occupied mode.

As an aside, for the computer science reader interested in the Bose-Hubbard and Hubbard models, suggested reading might include Schuch and Verstraete's~\cite{SV09} QMA-hardness result for the Hubbard model, and Childs, Gosset, and Webb's QMA-hardness result for the Bose-Hubbard model~\cite{CGW13}.\\

\noindent\emph{Studying the Hubbard model.} Shortly after the Hubbard model was proposed, Lieb and Wu solved it exactly in $1$D in 1968~\cite{LW68}, obtaining its ``phase diagram''. A \emph{phase diagram} is a diagram depicting the quantum phases (e.g. metal, insulator, etc\ldots) of a system as a function of tuning parameters. In the Hubbard model, the tuning parameters are the electron density and the repulsion strength. In particular, Lieb and Wu found that the system is a \emph{Mott insulator} when there is one electron per site (i.e. at ``half filling''), and that it becomes metallic when we add or remove a small number of electrons (i.e. slightly away from half filling). Such transitions between different phases of matter are called \emph{(quantum) phase transitions}.

Beyond $1$D, unfortunately, the Hubbard model has proven remarkably difficult to solve analytically. One approach to cracking the higher dimensional case over the last decade or so has involved \emph{quantum} simulation via ultracold atoms~\cite{QH09}. The idea here is as follows: When ultracold atoms are trapped in crossed laser beams (i.e. in an optical lattice), then under certain circumstances, the behavior of the \emph{atoms themselves} is described by the Hubbard model.
Thus, to uncover properties of the Hubbard model, we can instead probe such atomic systems.

Of course, this approach also faces difficulties. For example, when we say ``ultracold'' atoms, we do mean \emph{ultra}cold: Such systems behave quantumly only when cooled to near absolute zero (for example, a few billionths of a degree above absolute zero). Nevertheless, recent experiments have managed to resolve some of these issues~\cite{MJ99,JSGME08,SHWBBCHRR08}, obtaining a good simulation of the 3D Hubbard model. Such experiments have offered evidence of a similar metal-insulator phase transition, just as was discovered analytically for the 1D Hubbard model by Lieb and Wu~\cite{LW68}. This offers hope for a huge step forward, as obtaining the phase diagrams of the 2D and 3D Hubbard models has been an open question for decades, whose resolution may hopefully lead to further breakthroughs in our understanding of strongly correlated materials.

\chapter{Physics Concepts in Greater Depth}\label{scn:physicsforCS}

In Chapter~\ref{scn:motivation}, we gave a high-level overview of certain central questions in condensed matter physics. We now discuss various concepts introduced therein in greater depth. We begin in \S\ref{sscn:terms} by providing a glossary of selected common terms which appear in the many-body physics literature. \S\ref{sscn:meanfield} discusses mean field theory. In \S\ref{sscn:tensor},~\S\ref{sscn:DMRG},~\S\ref{sscn:MPS}, and~\S\ref{sscn:MERA}, we review tensor networks and their special cases (e.g., MPS and MERA). \S\ref{IDMRG} sketches the implementation of the DMRG as a variational algorithm over MPS. Lastly, \S\ref{sscn:arealaw_overview} provides a brief survey on the subject of the area laws.

\section{A glossary of physics terms}\label{sscn:terms}
The following is a glossary of common terms in the physics literature.\\

\noindent\textbf{Basic terminology.}
\begin{itemize}
        \item \emph{Singlet}: A singlet is a quantum state with zero total spin. In a system of two qubits, it is given by the Bell state:
            \[
                \ket{\psi^-}:=\frac{1}{\sqrt{2}}(\ket{01}-\ket{10})=\frac{1}{\sqrt{2}}(\ket{\uparrow\downarrow}-\ket{\downarrow\uparrow}).
            \]
            Note that for two-qubit systems, the singlet projects onto the antisymmetric space, and $I-\ketbra{\psi^-}{\psi^-}$ projects onto the symmetric space.
        \item \emph{Spin}: Spin is an ``intrinsic'' type of angular momentum carried by quantum particles. It can be measured along certain directions in real space. Let us fix a set of three orthogonal axes (our world has three spatial dimensions) with respect to which we wish to perform a measurement. This induces a set of observables $\set{S_x,S_y,S_z}$. The measurement outcomes (eigenvalues) of each observable $S_x,S_y$ or $S_z$ for a \emph{spin-(k/2)} particle ($k$ a positive integer) can be $-k/2,-k/2+1,\ldots,k/2$. For example, a spin-$(3/2)$ particle can yield $\pm 1/2$ and $\pm 3/2$ upon measurement. As such, a spin-$(k/2)$ particle corresponds to a qudit with dimension $d=k+1$.

    \item \emph{Many-body system}: A many-body system consists of a large number of particles (usually arranged in a regular lattice). It is governed by a many-body Hamiltonian. The specification of a particular such Hamiltonian is analogous to choosing a family of constraints in a classical CSP. Indeed, one might view $3$-SAT as a many-body system in which all constraints are $3$-local Boolean formulae in Conjunctive Normal Form (CNF). Note that many-body systems can be either classical or quantum, such as the classical and quantum Ising models.
    \item \emph{Interaction (hyper)graph}: The interaction (hyper)graph illustrates which sets of particles are constrained by a common local constraint in the Hamiltonian. For $2$-local Hamiltonians, this is just an undirected graph in which $(i,j)$ is an edge if and only if there exists a term $H_{ij}$ in the Hamiltonian acting jointly on particles $i$ and $j$.
    \item \emph{Solving a model}: The meaning of this phrase depends largely on context. Typically, we are given a classical description of a local Hamiltonian $H$, and we wish to find some property of the system governed by $H$. For example, we may wish to solve for the ground state energy of $H$, or to determine whether $H$ is gapped. \emph{Solving the model} means calculating the value of whatever property of $H$ we are interested in.
    \item \emph{Two-point correlation function}: Two-point correlation functions are defined as
\[
    \langle O_mO'_n\rangle := \trace\left[\rho (O_m\otimes O'_n)\right]
\]
for some local operators $O_m,O'_n$ acting on sites $m,n$, respectively.
\end{itemize}

\noindent\textbf{Commonly studied models in condensed matter physics.} All models listed below are $2$-local.

\begin{itemize}
    \item \emph{Classical Ising model}: We are given a lattice, and at each lattice site $i$ there is a binary variable $x_i\in\set{+1,-1}$. Let $\vec x:=(x_1,x_2,\ldots,x_n)$ with $n$ the total number of sites. An assignment $\vec{x}\in\{+1,-1\}^n$ is called a \emph{configuration}, and the \emph{energy} of a configuration $\vec{x}$ is given by the Hamiltonian
    \begin{equation}\label{eqn:cIsing}
        H(\vec{x})=\sum_{\langle i,j\rangle}J_{ij}x_ix_j,
    \end{equation}
    where $J_{ij}$'s are interaction strengths (real numbers), and $\langle i,j\rangle$ denotes summation over nearest-neighbor pairs $(i,j)$ (according to whatever underlying lattice we are given, such as a 1D chain or a 2D square lattice). It is common to write $H$ instead of $H(\vec{x})$ for notational simplicity.

    To illustrate the connection to computer science, note that setting all $J_{ij}=1$ and searching for the ground state configuration of the Hamiltonian $H(\vec{x})-\abs{E}$ is equivalent to the MAX CUT problem, where $\abs{E}$ is the number of edges in the interaction graph.

    One may also consider adding a linear term to the Ising Hamiltonian, i.e.,
    \[
        H(\vec{x})=\sum_{\langle i,j\rangle}J_{ij}x_ix_j + \mu \sum_i m_ix_i,
    \]
    where $m_i$ models the effect of an external magnetic field on site $i$, and $\mu$ is the magnetic moment. We usually set $\mu=1$ without loss of generality.
            \item \emph{Quantum Ising model}: The quantum Ising model is also known as the transverse field Ising model \cite{Pfe70, Sac11}. Its Hamiltonian is
                \[
                    H = -J\sum_{\langle i,j\rangle} \sigma^z_i\sigma^z_j - g\sum_i\sigma^x_i,
                \]
                where $J$ is the coupling constant, and $g$ is the magnitude of the transverse magnetic field. Recall that $\sigma^z$ and $\sigma^x$ are the Pauli $Z$ and $X$ matrices, respectively. This model was recently found to characterize a new complexity class called TIM~\cite{CM13}.
        \item \emph{Quantum Heisenberg model}: The quantum Heisenberg model's Hamiltonian is given by
                \[
                    H = -\sum_{\langle i,j\rangle} (J_x\sigma^x_i\sigma^x_j +J_y\sigma^y_i\sigma^y_j +J_z\sigma^z_i\sigma^z_j ) + h\sum_i \sigma^z_i,
                \]
                where $J_x,J_y,J_z$ are coupling constants, and $h$ is the external magnetic field. Various special cases of this model are of particular interest, such as the \emph{XX model}~\cite{LSM61} (also known as the XY model)
                \[
                    H = -\sum_{\langle i,j\rangle} \sigma^x_i\sigma^x_j +\sigma^y_i\sigma^y_j,
                \]
                and the \emph{XXZ model} \cite{YY661, YY662, Sac11}
                \[
                    H = -\sum_{\langle i,j\rangle} \sigma^x_i\sigma^x_j +\sigma^y_i\sigma^y_j+\Delta\sigma^z_i\sigma^z_j.
                \]
        \item \emph{Anti-ferromagnetic Heisenberg model}: This is another special case of the Heisenberg model. In physics notation the Hamiltonian can be written compactly as (see Equation~\ref{eqn:sigma} for the definition of $\overrightarrow{S}$)
            \[
                H = \sum_{\langle i,j\rangle} \overrightarrow{S}_i\cdot\overrightarrow{S}_j=\sum_{\langle i,j\rangle}\sigma^x_i\sigma^x_j+\sigma^y_i\sigma^y_j+\sigma^z_i\sigma^z_j,
            \]
where we have assumed nearest neighbor interactions (of course, in principle the underlying interaction graph can be any simple graph). 
        To intuitively understand what this $2$-local constraint enforces for the spin-$1/2$ particle case, note that
            \[
                I-(\sigma^x\sigma^x+\sigma^y\sigma^y+\sigma^z\sigma^z) \propto \ketbra{\psi^-}{\psi^-},
            \]
            where $\ket{\psi^-}$ is the singlet. Hence, the ground state of this $2$-local constraint is the $1$-dimensional anti-symmetric subspace, in which spins are aligned in a conflicting fashion, i.e. up-down or down-up.

            We now discuss the properties of the 1D spin-$S$ anti-ferromagnetic Heisenberg chain. For $S=1/2$, the model is exactly solvable by the Bethe ansatz~\cite{B31}. The energy spectrum is gapless, and the spin correlation functions (e.g., $\langle\sigma_i^z\sigma_j^z\rangle$) decay as a power law in $|i-j|$. For $S=1$ (see Equation~\ref{eqn:spin1pauli} for the spin-$1$ definition of $\overrightarrow{S}$), the model is difficult to solve, and was historically expected to be gapless and exhibit power-law decay. Surprisingly, in 1983, Haldane \cite{Hal83a, Hal83b} argued that whether $S$ is a half-odd-integer or an integer is essential: For half-odd-integer $S$ the model was predicted to be gapless and exhibit power-law decay, while for integer $S$, it was predicted to be gapped and exhibit exponential decay. The former, i.e., the absence of an energy gap for half-odd-integer $S$, was proven rigorously \cite{AL86} by using an extension of the Lieb-Shultz-Mattis theorem \cite{LSM61}. As for the latter, evidence for the existence of the ``Haldane gap'' for $S=1$ has been found both numerically and experimentally \cite{BMA+86, MBAH88, RVR+87}.

        \item \emph{AKLT model}: As the 1D spin-1 Heisenberg model has proven difficult to solve, Affleck, Kennedy, Lieb, and Tasaki proposed and studied the similar AKLT model in 1987~\cite{AKLT87, AKLT88}. The AKLT model is artificial and not believed to be experimentally realizable. However, it has the following desirable properties: (i) It looks superficially similar to the spin-1 anti-ferromagnetic Heisenberg chain; (ii) it can be solved exactly; (iii) Haldane's argument (see \emph{anti-ferromagnetic Heisenberg model}) can be rigorously verified for this model. The AKLT model is also useful for understanding MPS \cite{PVWC07, Sch11}, symmetry protected topological (SPT) order \cite{GW09, PBTO12}, etc.

            The 1D spin-1 AKLT Hamiltonian is defined as (see Equation~\ref{eqn:spin1pauli} for definitions of $\sigma_x$, $\sigma_y$, and $\sigma_z$ in the spin-1 case)
            \[
                H = \sum_i \overrightarrow{S}_i\cdot\overrightarrow{S}_{i+1} + \frac{1}{3}\left(\overrightarrow{S}_i\cdot\overrightarrow{S}_{i+1}\right)^2.
            \]
            This model is best understood via the correspondence between a qutrit (spin-$1$) and the symmetric subspace of two qubits (spin-$1/2$):
            \[
                |+1\rangle\leftrightarrow|11\rangle,\qquad|0\rangle\leftrightarrow(|01\rangle+|10\rangle)/\sqrt2,\qquad|-1\rangle\leftrightarrow|00\rangle.
            \]
            The ground state of the AKLT model can be constructed in three steps: (1) For each site $i$, assign two qubits labeled $i_L$ and $i_R$; (2) $i_R$ and $(i+1)_L$ form a singlet; (3) project $i_L$ and $i_R$ onto the symmetric subspace. The ground state is called a ``valence-bond state'' as it is a tensor product of singlets in the qubit representation. We emphasize that the second term in the AKLT Hamiltonian is intentionally added to ensure exact solvability, i.e., that step (2) is the optimal strategy to minimize energy. (Omitting this second term yields the anti-ferromagnetic Heisenberg chain.) Heuristically, ignoring step (3) does not prevent us from capturing the physics of the model. As such, it is straightforward to understand the following properties of the AKLT model intuitively (see \cite{AKLT87, AKLT88} for rigorous proofs):
            \begin{enumerate}
                \item The model is gapped.
                \item The ground state degeneracy depends on boundary conditions: The dimension of the ground state space is $1$ for periodic boundary conditions, and $4$ for open boundary conditions. This is because in the latter case we have two free qubits $1_L$ and $n_R$ ($n$ is the number of sites), while in the former case these two qubits form a singlet.
                \item Correlation functions decay exponentially as the ground state is short ranged.
                \item The ground state can be written exactly as an MPS of bond dimension $2$ \cite{PVWC07, Sch11}, as its Schmidt rank across any bipartite cut is 2.
            \end{enumerate}

\end{itemize}

\noindent\textbf{Physical phenomena.}

\begin{itemize}
    \item \emph{Ferromagnetic and anti-ferromagnetic order}: In a (spin) system with ferromagnetic order, all spins are aligned in the same direction, e.g., all spins up. In contrast, an anti-ferromagnetic system has alternating up and down spins (i.e., neighboring spins point in opposite directions). Note that for this to make sense, the underlying lattice must be bipartite. (We remark that not all 2D lattices are bipartite, such as the triangular or Kagome lattices.) For example, at zero temperature $H=-\sum_i \sigma_i^z\sigma_{i+1}^z$ has ferromagnetic order (its ground states are the all-spin-up and the all-spin-down states), whereas $-H$ has anti-ferromagnetic order. To give an intuitive example, a permanent magnet in ordinary life exhibits ferromagnetic order with macroscopic magnetization.
    \item \emph{Thermal equilibrium, Boltzmann distribution \& Gibbs state}: When a classical system is in \emph{thermal equilibrium}, the probability of observing each configuration $\vec{x}$ of the system is proportional to $\exp\inc{-\beta H(\vec{x})}$, where $H$ is the Hamiltonian (energy), and $\beta=1/T$ is the inverse temperature\footnote{In real physical systems (e.g., gases made of atoms and/or molecules), we usually work with the International System of Units (SI). The inverse temperature is given by $\beta=1/(k_BT)$ in SI, where the unit for temperature $T$ is Kelvin and $k_B\approx1.38\times10^{-23}J/K$ is the Boltzmann constant. In the main text, we follow the convention of theoretical physicists and have rescaled the units such that all quantities are dimensionless (numbers) and in particular $k_B=1$. Thus, note that it does not make sense to define ``room temperature'' as having any particular value (such as, say, $T=20$).}. This distribution of configurations is called the \emph{Boltzmann distribution} or the canonical ensemble. The \emph{partition function} $Z$ is just the normalizing constant of the distribution:
\begin{equation}
  \label{eq:1}
  Z(\beta) \defeq \sum_{\vec{x}}\exp(-\beta H(\vec{x})).
\end{equation}
The \emph{free energy} is defined as
\begin{equation}
  \label{eq:2}
  F(\beta) \defeq -\log Z(\beta).
\end{equation}
Let $O(\vec{x})$ be a physical quantity as a function of $\vec{x}$. Its expectation is given by
\begin{equation}
\langle O\rangle=\frac{1}{Z(\beta)}\sum_{\vec{x}}O(\vec{x})\exp(-\beta H(\vec{x})).
\end{equation}

When a quantum system is in \emph{thermal equilibrium}, the system is in a mixed quantum state described by the density matrix
\begin{equation}\label{eqn:gibbsformal}
    \rho\propto\exp(-\beta H)=\exp(-H/T)
\end{equation}
for $H$ the Hamiltonian and $\beta$ the inverse temperature. This state is known as the \emph{Gibbs} or \emph{thermal} state. The quantum partition function and free energy can be defined analogously to the classical setting:
\begin{equation}
Z(\beta):=\tr\exp(-\beta H).
\end{equation}
Similarly, the expectation of an operator $\hat O$ is given by
\begin{equation}
\langle\hat O\rangle=\tr\left(\hat O\exp(-\beta H)\right)/Z(\beta).
\end{equation}
Note that ``zero temperature'' should be understood as taking the limit $\beta\rightarrow+\infty$.

    \item \emph{Quantum phase transitions and criticality}: Given a family of Hamiltonians $H(\lambda)$ as a smooth (i.e. infinitely differentiable) function of some tuning parameter $\lambda$, let $|\psi(\lambda)\rangle$ be the ground state of $H(\lambda)$. Note that $|\psi(\lambda)\rangle$ may not be smooth (or continuous) in $\lambda$ (even if $H(\lambda)$ is finite dimensional); at such singularities, a \emph{quantum phase transition} occurs. More specifically, at a first-order quantum phase transition $|\psi(\lambda)\rangle$ is not continuous in $\lambda$, whereas at a continuous (second-order) phase transition, $|\psi(\lambda)\rangle$ is continuous but not smooth in $\lambda$, e.g., $\mathrm{d}|\psi(\lambda)\rangle/\mathrm{d}\lambda$ may not be continuous at $\lambda=\lambda_c$. We call $\lambda=\lambda_c$ a \emph{critical point} and $H(\lambda_c)$ a \emph{critical system}; moreover, the physics in the neighborhood of a critical point is called \emph{critical phenomena}. Heuristically, a critical system may be scale-invariant so that its low-energy effective theory is a conformal field theory.

A prototypical example of continuous quantum phase transitions is in the context of the transverse field Ising chain
                    \[
                        H=-\sum_i\sigma_i^z\sigma_{i+1}^z + \lambda\sigma_i^x,
                    \]
                    where we take the \emph{thermodynamic limit} in the sense that the number of spins goes to infinity (i.e., the index $i$ ranges over all integers). For $\lambda\rightarrow+\infty$, the ground state is $\ket{\rightarrow \rightarrow \rightarrow \cdots}$ (unique), where $\ket{\rightarrow}$ is the ground state of $-\sigma^x$. For $\lambda=0$, the ground states are the all-spin-up and the all-spin-down states ($2$-fold degenerate). Indeed, the dimension of the ground state space is one for $\lambda>1$ and two for $\lambda<1$, and a second-order phase transition occurs at $\lambda_c=1$.

\end{itemize}

\section{Mean-field theory}\label{sscn:meanfield}

In \S\ref{ssscn:classicalsim}, we alluded to \emph{mean-field theory} as a (heuristic) variational method over the set of product states to qualitatively extract properties of a many-body system. In this section, we develop mean-field theory from a ``mean-field'' perspective via a textbook example, and briefly comment on its reformulation as a variational method.

A many-body Hamiltonian $H$ is in general difficult to solve due to coupling between particles. To circumvent this, mean-field theory constructs a decoupled and exactly solvable Hamiltonian $H_{mf}$. This process requires good physical intuition and insight into our system of interest, and there is \emph{a priori} no guarantee as to how well $H_{mf}$ approximates $H$. Nevertheless, mean-field theory has proven very successful in some physically important contexts, and thus is among the most widely used approximation methods in condensed matter physics.

The example we study here is the classical Ising model
\[
    H(\vec{x})=-J\sum_{\langle i,j\rangle}x_ix_j
\]
on a $D$-dimensional hypercubic lattice (see Equation~\ref{eqn:cIsing}) at thermal equilibrium, i.e. we study the system's Boltzmann distribution (see the glossary entry for thermal equilibrium in \S\ref{sscn:terms})), and the tuning parameter for us is the temperature $T$. Let us now describe the property of $H$ we wish to calculate in terms of $T$. Note that mapping $x_i\rightarrow-x_i$ for all $i$ leaves the Hamiltonian invariant, i.e. the mapping is a symmetry of the Hamiltonian. However, the ground states (i.e. the all-spin-up and the all-spin-down states) do not respect this symmetry, as they are not left invariant under the action of this mapping (i.e. all-spin-up is mapped to all-spin-down and vice versa). This is called \emph{spontaneous symmetry breaking}, a phenomenon in which the physical states are less symmetric than the Hamiltonian itself. Note that spontaneous symmetry breaking implies a degenerate ground space.

We can try to break this degeneracy by introducing a symmetry breaking perturbation, e.g., an infinitesimal strength magnetic field in the positive direction. (We use infinitesimal strength here as we do not wish to change our system.) This is modeled via a linear term as follows:
\begin{equation}\label{eqn:perturbed}
    H(\vec{x})=-J\sum_{\langle i,j\rangle}x_ix_j-h\sum_ix_i
\end{equation}
with $h\rightarrow0^+$. Then, we claim that $\langle x_i\rangle=1$ for all $i$ at zero temperature $T=0$ and $\langle x_i\rangle=0$ at $T=+\infty$. To see this, first recall that at thermal equilibrium, the probability of observing each configuration $\vec{x}$ of the system is proportional to $\exp\inc{-H(\vec{x})/T}$. Then, for $T\rightarrow 0$, our system is in the all-spin-up state, since the magnetic field we added gives this configuration slightly less energy than the all-spin-down state; thus, $\langle x_i\rangle=1$. For $T\rightarrow +\infty$, on the other hand, all configurations $\vec{x}$ are equally likely; thus, $\langle x_i\rangle=0$. In physics terminology, when $\langle x_i\rangle$ is non-vanishing, we say we have \emph{spontaneous magnetization}. Thus, in our case we have spontaneous magnetization at $T=0$, and there must exist a critical temperature $T_c$ between $T=0$ and $T=+\infty$ at which point the spontaneous magnetization vanishes. Indeed, the system undergoes a phase transition at $T_c$ in the sense that other physical quantities also become singular, e.g., the specific heat and the correlation length diverge. Hence, in ``solving this model'', our goal is to estimate the critical temperature $T_c$.

Before doing so, let us briefly review known exact results regarding this model. For $D=1$, the model is easily solved by dynamic programming (known as the \emph{transfer-matrix method} in physics), and $T_c=0$. For $D=2$, the model was first solved by Onsager \cite{Ons44}, and $T_c=2J/\ln(1+\sqrt2)\approx2.27J$. 
Onsager's method is notorious for being mathematically involved; however, it can be reformulated and simplified by fermionization \cite{SML64}. For $D\ge3$, no exact solution is known.

Returning to mean-field theory, we now define our decoupled mean field Hamiltonian $H_{mf}$. For this, consider an unknown parameter (to be determined later) $m=\langle x_i\rangle$ for all $i$, which is physically motivated since the magnetization is expected to be uniform. Then, by substituting the straightforward identity
\[
    x_i x_j=x_i \langle x_j\rangle + \langle x_i\rangle x_j - \langle x_i \rangle\langle x_j\rangle + (x_i - \langle x_i\rangle)(x_j - \langle x_j\rangle),
\]
into Equation~\ref{eqn:perturbed}, we can write
\begin{eqnarray}
H_{mf}(m,\vec{x})&=&-J\sum_{\langle i,j\rangle}\left(x_i \langle x_j\rangle + \langle x_i\rangle x_j - \langle x_i \rangle\langle x_j\rangle +\right.\nonumber\\ &&\left.(x_i - \langle x_i\rangle)(x_j - \langle x_j\rangle)\right)-h\sum_ix_i\nonumber\\
&\approx&-J\sum_{\langle i,j\rangle}\langle x_i\rangle x_j+x_i\langle x_j\rangle-\langle x_i\rangle\langle x_j\rangle\nonumber\\
&=&DJm^2-2DJm\sum_ix_i,\label{eqn:mf}
\end{eqnarray}
where in the second step, we drop the linear magnetization term since $h\rightarrow 0^+$ and we use the approximation $x_i\approx \langle x_i \rangle$ for all $i$. $H_{mf}(m)$ is now decoupled (non-interacting) and can be easily solved. Let $\langle x_i\rangle_{mf}$ denote the expectation of $x_i$ evaluated using the Boltzmann distribution of the mean-field Hamiltonian $H_{mf}$ at temperature $T$ (we omit parameters $m$ and $T$ in the notation $\langle x_i\rangle_{mf}$ for simplicity), i.e. $\langle x_i\rangle_{mf}$ is the mean-field magnetization. Then:
\begin{eqnarray*}
\langle x_i\rangle_{mf}&=&\frac{\sum_{\vec{x}}x_i\exp(-\beta H_{mf}(\vec{x}))}{\sum_{\vec{x}}\exp(-\beta H_{mf}(\vec{x}))}\\&=&\frac{\sum_{x_i=\pm1}x_i\exp(2\beta DJmx_i)}{\sum_{x_i=\pm1}\exp(2\beta DJmx_i)}\\&=&\tanh(2\beta DJm),
\end{eqnarray*}
where the second equality follows by Equation~\ref{eqn:mf}, and where $H_{mf}(\vec{x})$ also implicitly takes parameter $m$). The mean-field free energy is:
\begin{eqnarray*}
F_{mf}&=&-\ln Z_{mf}\\&=&-\ln \left(\sum_{\vec{x}}\exp(-\beta H_{mf}(\vec{x}))\right)\\&=&n\beta DJm^2-n\ln(2\cosh(2\beta DJm)),
\end{eqnarray*}
where $Z_{mf}$ is the mean-field partition function, and $n$ is the total number of spins. The self-consistency $m=\langle x_i\rangle_{mf}$ of mean-field theory implies the mean-field equation
\begin{equation} \label{mf}
m=\tanh(2\beta DJm).
\end{equation}
This equation has only one solution $m=0$ when $2\beta DJ\le1$ and has three solutions $m=0,\pm m_0$ when $2\beta DJ>1$. In the latter case, one easily verifies $F_{mf}(\pm m_0)<F_{mf}(0)$. Since in classical statistical mechanics the physical solution is the one with the lowest free energy, mean-field theory predicts spontaneous magnetization when $2\beta DJ>1$, i.e., $T_c=2DJ$. In comparison, recall Onsager's result~\cite{Ons44} for $D=2$ of $T_c\approx2.27J$. The mean-field result thus yields a reasonably good approximation for $D=2$.

As an aside, note that instead of a mean-field approach, one can carry out a renormalization group (RG) analysis \cite{Wil71a, Wil71b, KF72, Wil72}, which is much more sophisticated. For $D\le 3$, RG outperforms mean-field theory. For $D\ge 4$, however, both methods agree with each other (on critical exponents).


We now briefly comment on how to reformulate the technique used here as a variational method. Instead of imposing the self-consistency condition by hand, we can view $H_{mf}(m)$ as a family of variational Hamiltonians and do not interpret $m$ as $\langle x_i\rangle$. To obtain the most ``physical'' variational Hamiltonian, we minimize the mean field free energy $F_{mf}(m)$ with respect to $m$ and find that $\mathrm{d}F_{mf}(m)/\mathrm{d}m=0$ is equivalent to the mean-field equation (\ref{mf}). This is not a surprise, but rather a general feature of mean-field theory as a consequence of the Hellmann-Feynman theorem \cite{Fey39}.

Finally, a remark about classical versus quantum phase transitions. In this section, we studied a classical phase transition at finite temperature. Mean-field theory also applies to quantum phase transitions at zero temperature. Briefly speaking, we construct a family of variational (mean-field) Hamiltonians and minimize the mean-field ground state energy. Note that unlike in the finite temperature case discussed above, here we are at zero temperature; thus, our physical state is the ground state and we minimize the ground state energy as opposed to the free energy. (Roughly, free energy takes into account both energy and thermal (entropic) effects at finite temperature.) The ground states of the mean-field Hamiltonians are product states since the mean-field Hamiltonians are decoupled. This roughly explains folklore intuition as to why the variational method over product states is called mean-field theory. We point motivated readers to standard physics textbooks (e.g., \cite{Sac11}) for more examples of mean-field theory.

\section{Tensor networks}\label{sscn:tensor}

Tensor networks were first discussed in \S\ref{ssscn:classicalsim}. Due to their success and popularity in the field, we now give an introduction geared towards computer scientists. Specifically, since arbitrary quantum states $\ket{\psi}\in(\B)^{\otimes n}$ may require exponentially many bits to represent classically, physicists have derived clever ways of encoding certain classes of entangled quantum states in succinct forms. One such approach is via \emph{tensor networks}. Such networks include as special cases Matrix Product States (MPS)~\cite{FNW92,V03,PVWC07} and Projected Entangled Pair States (PEPS)\cite{VC04,VWPC06}.

Informally, to a computer programmer, a \emph{tensor} $M(i_1,i_2,\ldots,i_k)$ is simply a $k$-dimensional array; one plugs in $k$ indices, and out pops a complex number. Hence, in terms of linear algebra, a $1$-dimensional tensor is a vector, and a $2$-dimensional tensor is a matrix. Physicists often like to make this more confusing than it is by simplifying the notation and placing indices as super- or sub-scripts --- for example, they might denote a 3D array $M(i_1,i_2,i_3)$ by $M^{i_1,i_2}_{i_3}$. (There is, of course, a physically motivated reason to write the indices as super- or sub-scripts; however, for those seeing tensor networks for the first time, it is likely simpler to avoid this notation for now.) To make it easier to work with tensors, there is a simple but extremely useful graphical representation. Figure~\ref{fig:disp}(a) shows $M$, for example. Here, the vertex corresponds to the tensor $M$. Each edge corresponds to one of the input parameters to $M$.
\begin{figure}\centering
  \includegraphics[height=3cm]{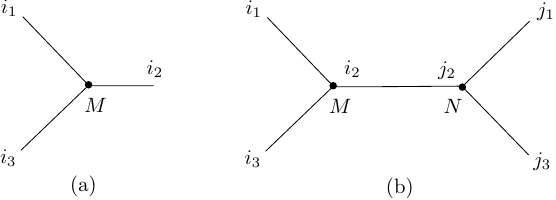}
  \caption{(a) A single tensor $M(i_1,i_2,i_3)$. (b) Two tensors $M(i_1,i_2,i_3)$ and $N(j_1,j_2,j_3)$ contracted on the edge $(M,N)$.}\label{fig:disp}
\end{figure}

Continuing our informal discussion, in Figure~\ref{fig:disp}(b), an edge with two vertices as endpoints corresponds to the operation of \emph{contracting} tensors on edges. Specifically, Figure~\ref{fig:disp}(b) takes two 3D tensors $M$ and $N$, and contracts them on their $i_2$ and $j_2$ inputs, respectively. The resulting mathematical object is a $4$-dimensional tensor $P$ defined as
\[
    P(i_1,i_3,j_1,j_3)=\sum_{k}M(i_1,k,i_3)N(j_1,k,j_3).
\]
Note the resulting tensor in Figure~\ref{fig:disp}(b) has four ``legs'' (i.e. edges with only one endpoint); this is because $P$ takes in four inputs.

More formally, a $k$-dimensional tensor $M$ (as defined above) is a map $M:[d_1]\times\cdots\times[d_k]\mapsto\complex$, where each $d_i$ is a natural number. (We remark that sometimes the dimension $k$ of a tensor is referred to as its \emph{rank}. Note that this notion of ``rank'' is \emph{not} the same as the usual linear algebraic notion of rank for matrices.) For this reason, we can observe the following straightforward way to connect $n$-qubit states $\ket{\psi}$ and $n$-tensors. Let $\ket{\psi}=\sum_{i=1}^{2^n}\alpha_i\ket{i}$ for $\set{\ket{i}}_{i=1}^{2^n}\subseteq(\B)^{\otimes n}$ the computational basis. Then, for any $i\in[2^n]$, letting $i_1\cdots i_n$ denote the binary expansion of $i$, we can define an $n$-tensor $M(i_1,\ldots,i_n)$ for $i_k\in\set{0,1}$ which simply stores all $2^n$ amplitudes of $\ket{\psi}$, i.e. $M(i_1,\ldots, i_n):=\alpha_{i_1\ldots i_n}$. In other words, we can write
\begin{equation}\label{eqn:tensvec}
    \ket{\psi}=\sum_{i=1}^{2^n}M(i_1,\ldots, i_n)\ket{i}.
\end{equation}
More generally, one can generalize this correspondence to represent $n$-qudit systems with local dimension $d$. Then, each index to $M$ would take a value in $[d]$, and $d$ is called the \emph{bond dimension}.

\begin{Question}\label{q:tens1}
In Figure~\ref{fig:disp2}, we depict five different tensor networks. For simplicity, we assume here that all input parameters to a tensor are from the set $[d]$.
\begin{enumerate}
    \item For (a), what type of linear algebraic object does the figure correspond to?
    \item Which operations on objects of the type in (a) do images (b) and (c) depict? What is the output of the tensors in (b) and (c)?
    \item How many tensors is the network in (d) composed of? How many input parameters does each of these constituent tensor networks have (before contraction)? How many input parameters does the final, contracted tensor network have?
    \item Image (d) corresponds to an $m$-dimensional tensor, which using the tensor-vector correspondence in Equation~\ref{eqn:tensvec}, can be thought of as representing an $m$-qubit vector $\ket{\psi}$ (whose amplitudes are computed using the specific contractions between tensors indicated by the network). With this picture in mind, what does $(e)$ correspond to?
    \item Image (e) combines $2m$ tensors into a network which takes no inputs and outputs a complex number $\alpha$. Assuming the bond dimension $d$ is a constant, given these $2m$ tensors, how can we compute $\alpha$ in time polynomial in $m$? Hint: Consider an iterative algorithm which in step $i\in[m]$ considers all tensors up to $N_i$ and $M_i$.
\end{enumerate}
\begin{figure}\centering
  \includegraphics[height=6.2cm]{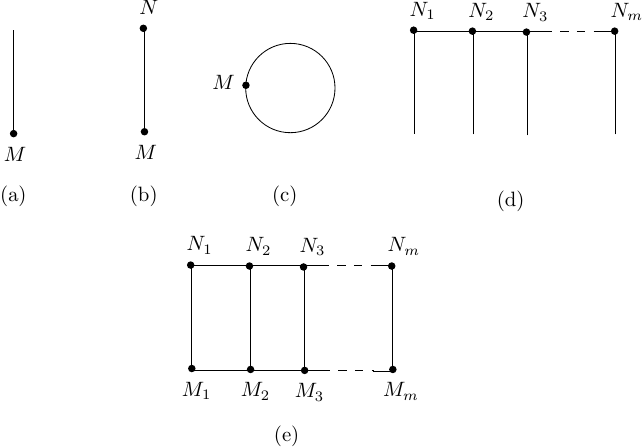}
  \caption{Five tensor networks, studied in Question~\ref{q:tens1}.}
  \label{fig:disp2}
\end{figure}
\end{Question}

There is another view of tensors which also proves useful, in which a tensor is seen as a linear map. Let $S$ denote the set of legs of a tensor $M$, and partition $S$ into subsets $S_1$ and $S_2$. Then, by fixing inputs to all legs in $S_1$, we ``collapse'' $M$ into a new tensor $M'$ corresponding to some vector in $\Bdi{\abs{S_2}}$. For example, consider again our tensor $M$ in Figure~\ref{fig:disp}(a), and let $S_1=\set{i_1}$ and $S_2=\set{i_2,i_3}$. Then, denote by $M_{k}$ the tensor obtained by hardcoding $i_1=k$, i.e. $M_k(i_2,i_3):=M(k,i_2,i_3)$. By Equation~\ref{eqn:tensvec}, $M_k$ corresponds to some vector $\ket{\psi_k}\in\Bdi{\abs{S_2}}$. In other words, we have just demonstrated a mapping which, given any computational basis state $\ket{k}\in\Bd$, outputs a vector $\ket{\psi_k}\in\Bdi{2}$, i.e. we have a linear map $\Phi:\Bd\mapsto\Bdi{2}$. Returning to our more general example with $M$, $S$, $S_1$ and $S_2$, this approach allows us to view a tensor $M$ with legs in $S$ as a linear map $M:\Bdi{\abs{S_1}}\mapsto \Bdi{\abs{S_2}}$.

To demonstrate the effectiveness of this linear map view of tensors, we revisit Question~\ref{q:tens1}(5). This time, we break up this tensor into tensors $T_i$, as depicted in Figure~\ref{fig:disp3}. Then, we can think of $T_1$ as representing the conjugate transpose of a vector $\ket{\psi}\in\Bdi{2}$. Next, $T_i$ for $i\in\set{2,\ldots, m-1}$ can be thought of as linear maps from $\Bdi{2}$ to $\Bdi{2}$, where the two left legs are the inputs, and the two right legs are outputs. It follows that after contracting $T_1$ through $T_{m-1}$, the result is the conjugate transpose of some vector $\ket{\psi}\in\Bdi{2}$. Since the last tensor $T_m$ represents some $\ket{\phi}\in\Bdi{2}$, performing the final contraction computes the inner product $\braket{\psi}{\phi}$ outputting a scalar, as claimed in Question~$\ref{q:tens1}(5)$. Note that since the bond dimension $d$ is considered a constant, this linear map view implies that the contraction of the entire network can clearly be performed in time polynomial in $m$.

\begin{figure}\centering
  \includegraphics[height=3.3cm]{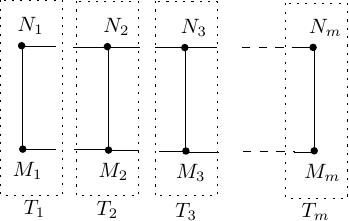}
  \caption{Demonstrating the linear map view of tensor networks.}
  \label{fig:disp3}
\end{figure}

\section{Density Matrix Renormalization Group}\label{sscn:DMRG}

In \S\ref{ssscn:classicalsim}, we discussed the variational principle, and in \S\ref{sscn:tensor}, we introduced tensor networks. We now combine the two to discuss the Density Matrix Renormalization Group (DMRG) algorithm, which is nowadays generally considered the most powerful numerical method for studying one-dimensional quantum many-body systems. In many applications of DMRG, we are able to obtain the low-energy physics (i.e. physical properties at low energy, such as the ground state energy, ground state correlation functions, etc\ldots) of a $1$D quantum lattice model with extraordinary precision and moderate computational resources. Historically, White's invention of DMRG~\cite{W92, W93} in the early 1990's was stimulated by the failure of Wilson's numerical renormalization group~\cite{Wil75} for homogeneous systems. A subsequent milestone was achieved when it was realized~\cite{OR95,RO97,VPC04, VMC08, WVSCD09} that DMRG is in fact a variational algorithm over a specific class of tensor networks known as Matrix Product States (MPS) (see \S\ref{sscn:MPS} below for a definition of MPS).

The purpose of this section is to outline at a high level how DMRG works from the MPS point of view. For further details, we refer the reader to the following review papers on the topic. Schollw\"{o}ck~\cite{Sch11} is a very detailed account of coding with MPS. The earlier paper of Schollw\"{o}ck~\cite{Sch05} discusses DMRG mostly in its original formulation without explicit mention of MPS. Finally, Verstraete, Murg and Cirac~\cite{VMC08} and Cirac and Verstraete~\cite{CV09} focus on the role MPS plays in DMRG, as well as other variational classes of states, such as Tree Tensor States, Multiscale Entanglement Renormalization Ansatz (MERA) and Projected Entangled Pair States (PEPS).

\subsection{Matrix Product States}\label{sscn:MPS}

Matrix Product States (MPS) are the simplest class of tensor network states, and as such, have received much attention. Consider a $1$D quantum spin system of local dimension $d$. We associate each site $i$ with $d$ matrices $A_i^{j=1,2,\ldots,d}$ of dimension $D\times D$, except at the boundaries, where $A_1^{j=1,2,\ldots,d}$ is of dimension $1\times D$ and $A_n^{j=1,2,\ldots,d}$ is of dimension $D\times1$ (where $n$ is the total number of sites). Then an MPS is given by
\[
    |\psi\rangle=\sum_{j_1,\ldots,j_n=1}^dA_1^{j_1}A_2^{j_2}\ldots A_n^{j_n}|j_1,\ldots,j_n\rangle.
\]
How can one interpret this expression for $\ket{\psi}$? First, note that for any $i\in\set{2,\ldots,n-1}$, we have $A_i^{j_i}:[d]\mapsto\lin{\complex^D}$. In other words, fixing an index $j_i\in[d]$ pops out a $D\times D$ complex matrix $A_i^{j_i}$. Similarly, $A_n^{j_n}$ ($A_1^{j_1}$) outputs a complex vector (conjugate transpose of a complex vector). It follows that for any string $j_1\cdots j_n\in [d]^n$, the expression $A_1^{j_1}A_2^{j_2}\ldots A_n^{j_n}$ yields a complex number (since it is of the form $\bra{v_1}V_2\cdots V_{n-1}\ket{v_n}$), i.e. it yields the \emph{amplitude} corresponding to $\vec{j}$. Thus, the amplitudes are encoded as products of matrices, justifying the name \emph{Matrix Product State}. Some additional terminology: The indices $j_i$ are referred to as \emph{physical} indices, as their dimension $d$ is fixed by the physical system. The value $D$ is called the \emph{bond dimension}, which we discuss in more depth shortly. Graphically, an MPS is given by Figure~\ref{fig:disp2}(d), where the vertical lines denote physical indices, and the horizontal lines represent tensor contractions or matrix products.

With a bit of thought, one can see that \emph{any} state $|\psi\rangle\in(\complex^d)^{\otimes n}$ can be written as an MPS exactly if the bond dimension $D$ is chosen large enough. Indeed, this can be achieved by setting $D$ to be at least the maximum Schmidt rank of $|\psi\rangle$ across any bipartite cut. In general, however, such a value of $D$ unfortunately grows exponentially with $n$, and thus large values of $D$ are not computationally feasible. The strength of MPS is hence as follows: Any $n$-particle quantum states whose entanglement across bipartite cuts is ``small'' (i.e. of polynomial Schmidt rank in $n$) can be represented succinctly by an MPS.

Moreover, this niche filled by MPS turns out to be quite interesting, as condensed matter physicists are mainly interested in ground states which are highly \emph{non}-generic. For example, recall that in 1D gapped systems, we have an area law~\cite{Ha07, ECP10, BH12, AKLV13, Hua14}, implying that in the 1D setting the entanglement entropy across any bipartite cut is bounded by a constant independent of $n$. In particular, Reference~\cite{AKLV13} shows that ground states of $1$D gapped systems with energy gap $\epsilon$ can be well approximated by an MPS with sublinear bond dimension $D=\exp\left(\tilde{O}\left(\frac{\log^{3/4}n}{\epsilon^{1/4}}\right)\right)$. In conformally invariant critical (gapless) systems, the area law is slightly violated with a logarithmic factor $\sim\log n$~\cite{CC04, CC09}, suggesting that MPS is still a fairly efficient parametrization.

Finally, a key property of MPS is that, given an MPS description of a quantum state, we can \emph{efficiently} compute its physical properties, such as energy, expectation of order parameters, correlation functions, and entanglement entropy~\cite{Sch11}. This is in sharp contrast to more complicated tensor networks such as Projected Entangled Pair States (PEPS), which are known to be $\#$P-complete to contract~\cite{SWVC07}.

\subsection{Implementation of DMRG}\label{IDMRG}

Having introduced MPS, we now briefly review the idea behind DMRG from an MPS perspective. Specifically, given an input Hamiltonian $H$, we compute the MPS of some bond dimension $D$ that best approximates the ground state by minimizing the energy $\langle\psi|H|\psi\rangle$ with respect to all such MPS $|\psi\rangle$, i.e., with respect to $O(ndD^2)$ parameters. Note that in general the bond dimension $D$ must to grow with the system size $n$ (especially in critical/gapless systems). Unfortunately, if $D=\poly(n)$ the aforementioned minimization problem can be NP-hard even for frustration-free Hamiltonians~\cite{SCV08}. To cope with this, DMRG is thus a \emph{heuristic} algorithm for finding local minima: There is no guarantee that the local minima we find are global minima, nor that the algorithm converges rapidly. However, perhaps surprisingly, in practice DMRG works fairly well even in critical/gapless systems.

At a high level, the DMRG algorithm proceeds as follows. We start with an MPS denoted by $\{A_{i=1,2,\ldots,n}^{j=1,2,\ldots,d}\}$, and subsequently perform a sequence of local optimizations. A local optimization at site $i_0$ means minimizing $\langle\psi|H|\psi\rangle$ with respect to $A_{i_0}^{j=1,2,\ldots,d}$, while keeping all other matrices $A_{i\neq i_0}^{j=1,2,\ldots,d}$ fixed. Such local optimizations are performed in a number of ``sweeps'' until our solution $\{A_{i=1,2,\ldots,n}^{j=1,2,\ldots,d}\}$ converges. Here, a \emph{sweep} consists of applying local optimizations in sequence starting at site $1$ up to site $n$, and then backwards back to site $1$. In other words, we apply the optimization locally in the following order of sites: $1,2,\ldots,n-1,n,n-1,\ldots,2,1$.

\section{Multi-Scale Entanglement Renormalization Ansatz}\label{sscn:MERA}

We now discuss a specific type of tensor network known as the Multi-Scale Entanglement Renormalization Ansatz (MERA)~\cite{V07,V08} (see \S\ref{sscn:tensor} for a definition of tensor networks), which falls somewhere between MPS and PEPS. Like MPS and unlike PEPS, the expectation value of local observables for MERA states can be computed exactly efficiently. Like PEPS and unlike MPS, MERA can be used to well-approximate (certain) states in $D$-dimensional lattices for $D\geq 1$. It should be noted that, as with MPS and PEPS, there is not necessarily any guarantee as to how well MERA can approximate a particular state; rather, as with many ideas in physics, MERA is an intuitive idea which appears to work well for certain Hamiltonian models, such as the 1D quantum Ising model with transverse magnetic field on an infinite lattice~\cite{V07}.

There are two equivalent ways to think about MERA. The first is to give an efficient (log-depth) quantum circuit which, given a MERA description of a state $\ket{\psi}$, prepares $\ket{\psi}$ from the state $\ket{0}^{\otimes n}$. The disadvantage of this view, however, is that it does not yield much intuition as to why MERA is structured the way it is. The second way to think about MERA is through a physics-motivated view in terms of DMRG; as this view provides the beautifully simple rationale behind MERA, we present it first.

\paragraph{The DMRG-motivated view.} This viewpoint is presented in~\cite{V07}, and proceeds as follows. To begin, the general idea behind Wilson's real-space renormalization group (RG) methods (see~\cite{WN92}) is to partition the sites of a given quantum system into \emph{blocks}. One then simplifies the description of this space by truncating part of the Hilbert space corresponding to each block; this process is known as \emph{coarse-graining}. The entire coarse-graining procedure is repeated iteratively on the new lower dimensional systems produced, until one obtains a polynomial-size (approximate) description (in the number of sites, $n$) of the desired system.

The key insight of White~\cite{W92,W93} was to realize that for 1D systems, the ``correct'' choice of truncation procedure on a block $B$ of sites is to simply discard the Hilbert space corresponding to the ``small'' eigenvalues of $\rho_B$, where $\rho_B$ is the reduced state on $B$ of the initial $n$-site state $\ket{\psi}$. Here, the value of ``small'' depends on the approximation precision desired in the resulting tensor network representation. Intuitively, such an approximation works well if $B$ in $\ket{\psi}$ is not highly entangled with the remaining sites. When this condition does not hold, however, DMRG seems to be in a bind. The idea of MERA is hence to precede each truncation step by a \emph{disentangling} step, i.e. by a local unitary which attempts to reduce the amount of entanglement along the boundary of $B$ between $B$ and the remaining sites before the truncation is carried out.

More formally, MERA is defined on a $D$-dimensional lattice $L$ as follows~\cite{V07}. For simplicity, we restrict our attention to the case of $D=1$ on spin-$(1/2)$ systems with periodic boundary conditions, but the ideas here extend to $D\geq 1$ on higher dimensional systems. Let $L$ correspond to  Hilbert space $V_L\equiv \bigotimes_{s\in L}V_s$, where $s\in L$ denote the lattice sites with respective finite-dimensional Hilbert spaces $V_s$. Consider now a block $B\subset L$ of neighboring sites, whose Hilbert space we denote as $V_B\equiv \bigotimes_{s\in B}V_s$. For simplicity, let us assume $B$ consists of two sites $s_1$ and $s_2$, with neighboring sites $s_0$ and $s_3$ immediately to the left and right, respectively. The disentangling step is performed by carefully choosing unitaries $U_{01},U_{23}\in\unitary{\complex^4}$ (the specific choice of $U_{01},U_{23}$ depends on the input state $\ket{\psi}$), and applying $U_{ij}$ to consecutive sites $i$ and $j$. The \emph{truncation} step follows next by applying isometry $V_{12}:\lin{\complex^4}\mapsto\lin{\complex^2}$ to sites $1$ and $2$, where  $\complex^2$ is the truncated space we wish to keep and where $V_{12}V_{12}^\dagger=I$. By applying this procedure to neighboring disjoint pairs of spins, we obtain a new spin chain with $n/2$ sites (assuming $n$ is even in this example). The entire procedure is now repeated on these $n/2$ coarse-grained sites. After $O(\log n)$ iterations, we end up with a single site. The tensor network is then obtained by writing down tensors corresponding to the linear maps of each $U_{ij}$ and $V_{ij}$, and connecting these tensors according to the geometry underlying the process outlined above. The resulting tensor network has a tree-like structure, with a single vertex at the top, and $n$ legs at the bottom corresponding to each of the $n$ original sites. It is depicted in Figure~\ref{fig:MERA1}(a).

\begin{figure}\centering
  \includegraphics[height=3.5cm]{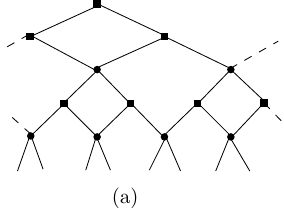}\hspace{10mm}
  \includegraphics[height=3.5cm]{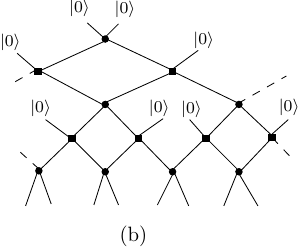}\hspace{10mm}
  \includegraphics[height=3.5cm]{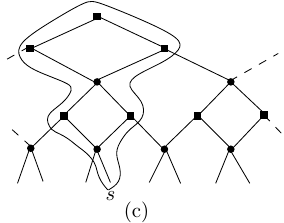}
  \caption{A MERA network on $8$ sites. The circle vertices represent ``disentangling'' unitaries. The square vertices represent isometries. (a) The tensor network view. (b) The quantum circuit view. (c) The causal cone of the site labeled $s$.}\label{fig:MERA1}
\end{figure}

Note that if we assume the bond dimension for each isometric tensor $V_{ij}$ is $d$, then the MERA representation requires $O(nd^4)$ bits of storage; this is because there are $2n-1$ tensors in the network, and each tensor stores at most $O(d^4)$ complex numbers.

\paragraph{The quantum circuit view.} In a sense, we have cheated the reader, because the DMRG view already prescribes the method for the quantum circuit view of MERA. Specifically, imagine we reverse the coarse-graining procedure described above, i.e. instead of working our way from the $n$ sites of $\ket{\psi}$ up to a single site, we go in the opposite direction. Then, intuitively, the DMRG view yields a quantum circuit which, starting from the state $\ket{0}^{\otimes n}$, prepares (an approximation to) the desired state $\ket{\psi}$ via a sequence of the same isometries and unitaries prescribed by the tensor network. This view is depicted in Figure~\ref{fig:MERA1}(b).

\paragraph{Computing with MERA.} A succinct representation of a quantum system would not necessarily be useful without the ability to compute \emph{properties} of the system from this succinct format. A strength of MERA is that, indeed, expectation values of local observables against $\ket{\psi}$ can be efficiently computed. This follows simply because given a MERA network $M$ representing $\ket{\psi}$, the reduced state of $\ket{\psi}$ on $\Theta(1)$ sites can be computed in time $O(\log n)$ (assuming the dimension $D$ of our lattice is considered a constant). To see this, we partition the tensors in our MERA network in terms of horizontal layers or \emph{time slices} from top to bottom. Specifically, in Figure~\ref{fig:MERA1}(b), time slice $0$ is before the top unitary is run, slice $1$ immediately after the top unitary is run and before the following pair of isometries are run, and so forth until slice $5$, which is immediately after the four bottom-most unitaries are run. Then, in each layer, the causal cone $C_s$ for any site $s$ can be shown to have at most \emph{constant} width (more generally, at most $4\cdot3^{D-1}$ width~\cite{V08}). Here, the \emph{causal cone} of $C_s$ is the set of vertices and edges in the network which influence the leg of the network corresponding to site $s$; see Figure~\ref{fig:MERA1}(c). The \emph{width} of $C_s$ in a time slice is the number of edges in $C_s$ in that slice. Thus, by viewing the MERA network in terms of the quantum circuit view, we see that the reduced state on site $s$ is given by a quantum circuit with $O(\log n)$ gates. Moreover, at any point in the computation, this circuit needs to keep track of the state of only $\Theta(1)$ qubits. Such a circuit can be straightforwardly simulated classically in $O(\log n)$ time via brute force (i.e. multiply the unitaries in the circuit and trace out qubits which are no longer needed), yielding the claim.

\section{Area laws}\label{sscn:arealaw_overview}

We now discuss {area laws} in further depth, which were first mentioned briefly at the end of \S\ref{sscn:physicshistory}. Recall that, roughly, an \emph{area law} states that for certain interesting classes of quantum many-body systems, the amount of entanglement
between a subsystem and its complement grows as the \emph{surface area or the boundary} rather than the \emph{volume} of the subsystem. In literature, the amount of entanglement is typically formulated as entanglement entropy, i.e.~the von Neumann entropy of the reduced density matrix on the subsystem of interest.

For example, suppose that a region $L$ of a quantum lattice system is carved out as in Figure~\ref{fig:arealaw1}. An arbitrary quantum state on such a lattice follows a volume law instead of an area law; to see this, simply consider the case in which every particle in $L$ forms an EPR pair with some other particle in $\overline L$. In this case, the amount of entanglement between $L$ and $\overline L$ is proportional to the number of particles in $L$, i.e.~the \emph{volume} of $L$, rather than the size of the boundary $|\partial L|$.  However, it is widely believed in the physics community that many interesting classes of quantum states do satisfy an area law, most notably the ground states of gapped local Hamiltonians. Note, however, that for critical/gapless systems, there are known constructions which violate the area law~\cite{AGIK09,I10,GI09}.

\begin{figure}\centering
  \includegraphics{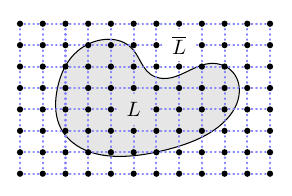}
  \caption{A quantum lattice system partitioned into two parts $L$ and $\overline L$. If the represented quantum state obeys an area law, the amount of entanglement between $L$ and $\overline L$ is bounded above by a quantity that is proportional to the surface area $|\partial L|$, rather than the volume $|L|$.}
  \label{fig:arealaw1}
\end{figure}

\paragraph{Motivations.} The idea that the information content of a region depends on its surface area rather than its volume is not foreign to physics. This intuition, often dubbed the \emph{holographic principle}, is inspired by black hole thermodynamics, where the entropy of a black hole is believed to scale as the surface area of the event horizon rather than the volume of the black hole. In fact, the original interest in area laws in quantum systems also sprang from this intriguing analogy, in connection with the conjecture that the origin of the black hole entropy is the quantum entanglement between the inside and outside of the black hole~\cite{Blackhole1, Blackhole2, Blackhole3}.

Aside from black holes, however, there are other fundamental reasons to study area laws. For instance, area laws are a means for characterizing the structure of entanglement occurring in naturally arising systems. For example, it is well known that a generic state in Hilbert space obeys a volume law rather than an area law~\cite{HLW:aspects}. Therefore, the existence of an area law for most physically relevant systems would imply that much of nature's ``interesting physics'' takes place in a small corner of Hilbert space.

Last but not least, a significant motivation lies in  the classical representability of quantum many-body systems. In contrast to general quantum states that require exponentially many parameters to describe, states that satisfy an area law may be expected to admit an efficient classical description via (e.g.) tensor networks of small bond dimension. In fact, area laws are closely related to tensor networks. For instance, any tensor network with constant bond dimension automatically satisfies an area law in the following sense; if we pick any subset $L$ of vertices, the Schmidt rank between $L$ and $\overline L$ is upper-bounded by $D^{|\partial L|}$, where $D$ represents the bond dimension. This immediately implies that the entanglement entropy is bounded by $|\partial L| \log D$, giving rise to an area law. This bound also shows that tensor networks such as MPS or PEPS, even with polynomially large bond dimension, cannot describe entanglement that scales via a volume law.

\paragraph{Why should area laws hold?} Area laws are a way of formulating (albeit not exactly) the physics intuition that entanglement in ``natural'' quantum systems should roughly live on the boundary. For example, a seminal work of Hastings \cite{H:exp_decay} shows that the ground state of a gapped local Hamiltonian on a lattice of any dimension exhibits an exponential decay of correlations, i.e. two-point correlation functions decay exponentially with respect to spatial separation between the two points. At first glance, this result already seems to suggest that the entanglement between a region and its complement should live mostly on or close to the boundary; unfortunately, it is far from trivial to lift this statement about correlations to one about entanglement.

\paragraph{Known results.} In general, establishing area laws is a challenging task. While a number of specific Hamiltonian models had been shown to obey an area law \cite{ECP10,VLRK03,IJK05,FKR04}, it was not until 2007 that Hastings proved~\cite{Ha07} that the ground states of \emph{all} 1D gapped local Hamiltonians obey an area law. (Hastings~\cite{Ha07} also showed that the ground state of such a system can be well approximated by an MPS of polynomial bond dimension.) Subsequently, Aharonov, Arad, Landau and Vazirani gave an alternative combinatorial proof (whereas Hastings' proof relies on Lieb-Robinson bounds) of Hastings' 1D area law in the frustration-free case~\cite{AALV11, ALV12}. Yet another combinatorial proof was given by Arad, Kitaev, Landau, and Vazirani~\cite{AKLV13}, which also applies in frustrated settings and improves significantly on the area law parameters obtained. The approach of~\cite{AKLV13} was later adapted to the setting of a constant-fold degenerate ground space by Huang~\cite{Hua14}. Here, any state in the ground space obeys an area law. As an aside, we remark that this study of combinatorial proofs of the area law (and in particular, the concept of an \emph{approximate ground state projection}, or AGSP) played an instrumental role in the development of the first rigorous polynomial time algorithm for finding the ground state of a gapped 1D system~\cite{LVV13}. In 2012, Brand\~ao and Horodecki~\cite{BH12} showed that 1D local Hamiltonians that have an exponential decay of correlations satisfy an area law. Together with Hastings' result~\cite{H:exp_decay} that gapped local Hamiltonians exhibit exponential decay of correlations, this yields still another proof of Hastings' 1D area law~\cite{Ha07}. Finally, Wolf~\cite{W06} showed that translationally invariant critical fermionic systems of any spatial dimension satisfy an area law up to a logarithmic correction.

Interestingly, in contrast to the ground state case just discussed, \emph{thermal states} of local Hamiltonians always satisfy an area law regardless of the energy gap or the spatial dimension \cite{MSVC14, WVHC08}.\\

For further details on combinatorial area law proofs, the reader is referred to \S\ref{sscn:arealaw}, where the result of~\cite{AKLV13} is sketched. For a physics-oriented survey of area laws, see Eisert, Cramer, and Plenio~\cite{ECP10}.

\paragraph{Open problems.} Arguably the most important open question in this area is whether an area law holds in dimensions larger than $1$D. Unfortunately, the known approaches do not seem generalize easily to this setting. As for alternate approaches, we list two possible routes. First, Masanes~\cite{masanes09} has shown that the following two criteria are sufficient to establish an area law: (1) There is sufficiently fast decay of correlations, and (2) the number of ``low-energy'' states is not exponentially large. Second, Van Acoleyen, Mari\"en, and Verstraete~\cite{AMV13} have recently shown that if two gapped systems $H_1$ and $H_2$ are adiabatically connected, i.e.~there is a smooth path between $H_1$ and $H_2$ such that the system remains gapped at every point on the path, then the ground state of $H_1$ satisfies an area law if and only if the ground state of $H_2$ also satisfies an area law.

\chapter{Reviews of Selected Results}\label{scn:results}
Having discussed a number of Hamiltonian complexity concepts originating from the physics literature in \S\ref{scn:motivation} and~\ref{scn:physicsforCS}, we now expound on a selected number of central computer science-based results in the area. Specifically, \S\ref{sscn:5LH} reviews Kitaev's original proof that $5$-local Hamiltonian is QMA-complete. \S\ref{sscn:2LH} discusses the ensuing proof by Kitaev, Kempe and Regev using perturbation theory-based gadgets that $2$-local Hamiltonian is QMA-complete. In \S\ref{sscn:CLH}, we review Bravyi and Vyalyi's Structure Lemma and its use in proving that the $2$-local commuting Hamiltonian problem is in NP, and thus unlikely to be QMA-hard. \S\ref{sscn:Q2SAT} gives a quantum information theoretic presentation of Bravyi's polynomial time algorithm for Quantum 2-SAT. Finally, \S\ref{sscn:arealaw} provides an intuitive review of the combinatorial proof of an area law for 1D gapped systems due to Arad, Kitaev, Landau and Vazirani.

\section{5-local Hamiltonian is QMA-complete}\label{sscn:5LH}

One of the cornerstones of classical computational complexity theory is the Cook-Levin theorem, which states that classical constraint satisfaction is NP-complete. The quantum version of this theorem is due to Kitaev, who showed that the $5$-local Hamiltonian problem is QMA-complete~\cite{KSV02}. In this section, we review Kitaev's proof. For a more in-depth treatment, we refer the reader to the detailed surveys of Aharonov and Naveh~\cite{AN02} and Gharibian~\cite{G13}.

We begin by showing that $k$-local Hamiltonian for any $k\in\Theta(1)$ is in QMA.

\paragraph{Local Hamiltonian is in QMA}

\begin{theorem}(Kitaev~\cite{KSV02}) For any constant $k\geq 1$, $\klhh\in\qma$.
\end{theorem}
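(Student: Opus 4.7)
The plan is to exhibit a QMA verification protocol in which Merlin's witness is the alleged ground state $\ket{\psi}$ of $H$, and Arthur estimates $\bra{\psi}H\ket{\psi}$ by randomly measuring a single local term. First I would preprocess by shifting and rescaling each $H_i$ so that $0 \le H_i \le I$; this multiplies the thresholds $a,b$ by a polynomially bounded factor but keeps the promise gap bounded below by $1/\poly(n)$. Writing $r$ for the number of local terms (also polynomial in $\enc{H}$), the rescaled Hamiltonian satisfies $H/r = \frac{1}{r}\sum_i H_i$, so sampling a single index $i$ uniformly at random and measuring $H_i$ yields an unbiased estimator of $\bra{\psi}H\ket{\psi}/r$.

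Concretely, Arthur's circuit $Q_n$ would: (1) receive $\ket{\psi}$ on the $n$ qudits; (2) sample $i \in \{1,\ldots,r\}$ uniformly using his ancilla bits; (3) on the $k$ qudits on which $H_i$ acts, implement the two-outcome POVM $\{I - H_i,\, H_i\}$ via a Naimark dilation, accepting iff the $(I-H_i)$ outcome is observed. Since $k = O(1)$ and each $H_i \in \herm{\complex^d}^{\otimes k}$ is given explicitly in the input, the POVM acts on a constant-dimensional space and can be implemented by a constant-size quantum circuit; uniformity of the family $\{Q_n\}$ follows from classical preprocessing of the input Hamiltonian to compute the dilation. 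The overall acceptance probability is $1 - \bra{\psi}H\ket{\psi}/r$, so in the YES case some $\ket{\psi}$ is accepted with probability at least $1 - a/r$, while in the NO case every $\ket{\psi}$ is accepted with probability at most $1 - b/r$. The completeness-soundness gap is therefore at least $(b-a)/r \ge 1/\poly(n)$.

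It remains to amplify this inverse-polynomial gap to the $2/3$ vs.\ $1/3$ gap required by Definition~\ref{0_def:QMA}. My plan is to apply standard QMA error reduction: have Merlin send $m = \poly(n)$ witness copies, run the single-shot protocol independently on each copy, and accept iff a suitable threshold fraction of runs accept, using a Chernoff bound to boost the gap. The main subtlety, and the step I expect to require the most care, is that a cheating Merlin in the NO case might send an \emph{entangled} state across the $m$ proof registers rather than the intended product state, potentially inflating the acceptance probability. This is handled by a standard convexity argument: since Arthur's overall acceptance operator is a symmetric combination of independent single-copy POVMs and his acceptance probability is affine in the witness density matrix, one can show (see~\cite{KSV02,AN02}) that the optimal Merlin strategy is attained on a pure product state $\ket{\psi^*}^{\otimes m}$, which reduces the analysis to independent trials and completes the proof.
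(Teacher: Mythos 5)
Your proposal is correct and takes essentially the same approach as the paper: the witness is the ground state, the verifier selects a local term $H_i$ uniformly at random and measures the two-outcome POVM $\set{I-H_i,\,H_i}$ on the relevant qudits (the paper realizes the random choice via a coherent index register and the POVM via an explicit dilation $W_j$ built from the spectral decomposition of $H_j$), giving acceptance probability $1-\bra{\psi}H\ket{\psi}/r$ and hence an inverse-polynomial completeness--soundness gap. The only difference is that you spell out the normalization of the terms and the standard amplification (parallel witness copies plus the product-eigenvector argument against entangled cheating witnesses), which the paper leaves implicit by appealing to the fact that an inverse-polynomially separated gap suffices for QMA.
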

\noindent \emph{Proof sketch.} The basic idea is that whenever we have a YES instance of $\klhh$, the quantum proof sent to the verifier is essentially the ground state of the local Hamiltonian $H$ in question. The verifier then runs a simple ``local'' version of phase estimation to roughly determine the energy penalty incurred by the given proof.

To begin, suppose we have an instance $(H,a,b)$ of $\klhh$ with $k$-local Hamiltonian $H=\sum_{j=1}^r H_j\in \LL((\B)^{\otimes n})$. We construct an efficient quantum verification circuit $V$ as follows. First, the quantum proof is $\ket{\psi}\in\C^r\otimes(\B)^{\otimes n}\otimes\B$, s.t.
\begin{equation}
    \ket{\psi}=\left(\frac{1}{\sqrt{r}}\sum_{j=1}^{r}\ket{j}\right)\otimes\ket{\eta}\otimes\ket{0},
\end{equation}
for $\set{\ket{j}}_{j=1}^r$ an orthonormal basis for $\complex^r$, and $\ket{\eta}$ an eigenvector corresponding to some eigenvalue $\lambda$ of $H$. We call the first register of $\ket{\psi}$ the \emph{index} register, the second the \emph{proof} register, and the last the \emph{answer} register. The circuit $V$ is defined as $V:=\sum_{j=1}^r\ketbra{j}{j}\otimes W_j$, where $W_j$ is defined as follows. For our Hamiltonian $H=\sum_{j=1}^{r}H_j$, suppose $H_j$ has spectral decomposition $H_j=\sum_{s}\lambda_s \ketbra{\lambda_s}{\lambda_s}$. Then, define $W_j$ acting on the proof and answer registers with action
\begin{equation}
    W_j\left(\ket{\lambda_s}\otimes\ket{0}\right) = \ket{\lambda_s}\otimes\left(\sqrt{\lambda_s}\ket{0} + \sqrt{1-\lambda_s} \ket{1}\right).
\end{equation}

\begin{Question}
    Show that if we apply $V$ to the proof $\ket{\psi}$ and measure the answer register in the computational basis, the probability of obtaining outcome $1$ is $1- \frac{1}{r}\bra{\eta} H\ket{\eta}$. Conclude that since the thresholds $a$ and $b$ are inverse polynomially separated,  $\klhh\in\class{QMA}$.
\end{Question}
\begin{Hint}
    Observe that since we may assume the index register is implicitly measured at the end of the verification, $V$ above can be thought of as using the index register to choose an index $j\in[r]$ uniformly at random, followed by applying $W_j$ to the proof register. As a result, the probability of outputting $1$ can be expressed as
\begin{equation}
    \pr(\text{output 1}) = \sum_{j=1}^r\frac{1}{r}\pr(\text{output 1}\mid W_j\text{ is applied}).
\end{equation}
\end{Hint}
\begin{Hint}
    When considering the action of any $W_j$ on $\ket{\eta}$, rewrite $\ket{\eta}$ in the eigenbasis of $H_j$ as $\ket{\eta}=\sum_s \alpha_s\ket{\lambda_s}$ (the values of the coefficients $\alpha_s$ will not matter).
\end{Hint}

\paragraph{$5$-Local Hamiltonian is QMA-hard}\mbox{ }\\

\noindent To show that $5$-local Hamiltonian is QMA-hard, Kitaev gives a quantum adaptation of the Cook-Levin theorem~\cite{KSV02}. Specifically, he shows a polynomial-time many-one or Karp reduction from an arbitrary problem in QMA to $\flh$, which we now discuss.

Let $P$ be a promise problem in $\qma$, and let $V=V_LV_{L-1}\dots V_1$ be a verification circuit for $P$ composed of unitaries $V_k$. Without loss of generality, we may assume each $V_k$ acts on pairs of qubits, and that $V\in \UU((\B)^{\otimes m}\otimes (\B)^{\otimes N-m})$, where the $m$-qubit register contains the proof $V$ verifies, and the remaining qubits are ancilla qubits. Using $V$, our goal is to define a $5$-local Hamiltonian $H$ that has a small eigenvalue if and only if there exists a proof $\ket{\psi}\in(\B)^{\otimes m}$ causing $V$ to accept with high probability.

We let $H$ act on $(\B)^{\otimes m}\otimes (\B)^{\otimes N-m}\otimes\complex^{L+1}$, which is simply the initial space $V$ acts on, tensored with an $(L+1)$-dimensional \emph{counter} or \emph{clock} register. We label the three registers $H$ acts on as $p$ for proof, $a$ for ancilla, and $c$ for clock, respectively. We now define $H$ itself:
\begin{equation}
    H := \hin + \hprop + \hout,
\end{equation}
with the terms $\hin$, $\hprop$, and $\hout$ defined as follows. Let
\begin{eqnarray}
    \hin&:=&I_p\otimes
    \left(I_a - \ketbra{0\dots0}{0\dots0}_a\right)\otimes \ketbra{0}{0}_c\\
    \hout&:=&\left(\ketbra{0}{0}\otimes I_{(\B)^{\otimes m-1}}\right)_p\otimes
     I_a\otimes \ketbra{L}{L}_c\\
    \hprop &:=& \sum_{j=1}^{L} H_j \text{,~~~~where}\nonumber\\
    H_j &:=& -\frac{1}{2}V_j\otimes\ketbra{j}{j-1}_c -\frac{1}{2}V_j^\dagger\otimes\ketbra{j-1}{j}_c +\\&&\frac{1}{2}I\otimes(\ketbra{j}{j}+\ketbra{j-1}{j-1})_c.\label{eqn:H_j}
\end{eqnarray}

\begin{Question}
Suppose that for any YES-instance of promise problem $P$, $V$ accepts a valid proof $\ket{\psi}$ with \emph{certainty}. Verify that the following state $\ket{\eta}$, known as the \emph{history state}, lies in the null space of $H$. Why do you think $\ket{\eta}$ is called the history state?
\begin{equation}\label{eqn:eta}
    \ket{\eta} := \frac{1}{\sqrt{L+1}}\sum_{j=0}^{L}\left(V_j\dots V_1 \ket{\psi}_p\otimes\ket{0}^{\otimes N-m}_a\right)\otimes \ket{j}_c.
\end{equation}
\end{Question}

In order to ease the analysis of $H$'s smallest eigenvalue, it turns out to be extremely helpful to apply the following change-of-basis operator to $H$:
\begin{equation}
    W=\sum_{j=0}^{L} V_j\dots V_1\otimes\ketbra{j}{j}_c.
\end{equation}

\begin{Question}
    Show that:
    \begin{enumerate}
        \item $\ket{\hat{\eta}}:=W\ket{\eta}=\ket{\psi}_p\otimes\ket{0}^{\otimes N-m}_a\otimes\ket{\gamma}_c$, where we define
$
    \ket{\gamma} := \frac{1}{\sqrt{L+1}}\sum_{j=0}^L\ket{j}.
$
        \item $\hat{H}_{\rm in}:=W^\dagger \hin W=\hin$,
        \item $\hat{H}_{\rm out}:=W^\dagger \hout W= (V^\dagger\otimes I_c)\hout(V\otimes I_c)$,
        \item $\hat{H}_j:=W^\dagger H_j W = I_{p,a}\otimes\frac{1}{2}(\ketbra{j-1}{j-1}-\ketbra{j-1}{j}-\ketbra{j}{j-1}+\ketbra{j}{j})_c$, and hence $\hat{H}_{\rm prop}$ equals
        \begin{equation}
    I_{p}\otimes I_a\otimes\left(
      \begin{array}{cccccc}
        \frac{1}{2} & -\frac{1}{2} & 0 & 0 & 0 & \dots \\
        -\frac{1}{2} & 1 & -\frac{1}{2} & 0 & 0 & \dots \\
        0 & -\frac{1}{2} & 1 & -\frac{1}{2} & 0 & \dots \\
        0 & 0 & -\frac{1}{2} & 1 & -\frac{1}{2} & \dots \\
        0 & 0 & 0 & -\frac{1}{2} & \ddots & \ddots \\
        \vdots & \vdots & \vdots & \vdots & \ddots & \ddots \\
      \end{array}
    \right)=:I_{p}\otimes I_a\otimes E_c,
\end{equation}
where we have let $E$ denote a tridiagonal matrix acting on the clock register.
    \end{enumerate}
\end{Question}

Henceforth, when we refer to $H$, $\hin$, $\hout$, $\hprop$, $H_j$, and $\ket{\eta}$, we implicitly mean $\hat{H}$, $\hat{H}_{\rm in}$, $\hat{H}_{\rm out}$, $\hat{H}_{\rm prop}$, $\hat{H}_j$, and $\ket{\hat{\eta}}$, respectively.\\

\paragraph{The YES case: $H$ has a small eigenvalue}

\begin{Question}
    Suppose that given proof $\ket{\psi}$, $V$ accepts with probability at least $1-\epsilon$ for $\epsilon\geq 0$. Show that
    \begin{equation}
    \bra{\eta}H\ket{\eta} \leq \frac{1}{L+1}\epsilon.
    \end{equation}
    Conclude that if there exists a proof $\ket{\psi}$ accepted with ``high'' probability by $V$, then $H$ has a ``small'' eigenvalue.
\end{Question}

\paragraph{The NO case: $H$ has no small eigenvalues}\mbox{ }\\

\noindent If there is no proof $\ket{\psi}$ accepted by $V$ with high probability, then we wish to show that $H$ has no small eigenvalues. To do so, write $H=A_1+A_2$ for $A_1 := \hin + \hout$ and $A_2 := \hprop$. If $A_1$ and $A_2$ were to commute, then analyzing the smallest eigenvalues of $A_1$ and $A_2$ independently would yield a lower bound on the smallest eigenvalue of $H$. Unfortunately, $A_1$ and $A_2$ do not commute; hence, if we wish to use information about the spectra of $A_1$ and $A_2$ to lower bounds $H$'s eigenvalues, we will need a stronger technical tool, given below.

\begin{lemma}[Kitaev~\cite{KSV02}, Geometric Lemma, Lemma 14.4]\label{l:pluslemma}
    Let $A_1,A_2\succeq 0$, such that the minimum \emph{non-zero} eigenvalue of both operators is lower bounded by $v$. Assume that the null spaces $\nl$ and $\nll$ of $A_1$ and $A_2$, respectively, have trivial intersection, i.e.\ $\nl\cap\nll=\set{\ve{0}}$. Then
    \begin{equation}
        A_1+A_2 \succeq 2v\sin^2\frac{\alpha(\nl,\nll)}{2}I\enspace,
    \end{equation}
    where the \emph{angle} $\alpha(\spa{X},\spa{Y})$ between $\spa{X}$ and $\spa{Y}$ is defined over vectors $\ket{x}$ and $\ket{y}$ as \[\cos\left[\angle(\spa{X},\spa{Y})\right] := \max_{\substack{\ket{x}\in\spa{X},\ket{y}\in\spa{Y}\\\norm{\ket{x}}=\norm{\ket{y}}=1}}\abs{\braket{x}{y}}.\]
\end{lemma}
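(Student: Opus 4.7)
The plan is to reduce the statement to a pure question about the two projectors $P_1, P_2$ onto $\nl, \nll$ respectively. Since $A_i \succeq 0$ and every non-zero eigenvalue of $A_i$ is at least $v$, we immediately have $A_i \succeq v(I - P_i)$, so $A_1 + A_2 \succeq v(2I - P_1 - P_2)$. Since $1 - \cos\alpha = 2\sin^2(\alpha/2)$, the lemma will follow once I establish the operator inequality
\[
    P_1 + P_2 \preceq (1 + \cos\alpha)\,I.
\]

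To prove this inequality, I first observe the clean reformulation $\|P_1 P_2\| = \cos\alpha$: indeed, for arbitrary unit vectors $|\phi\rangle,|\psi\rangle$, $|\langle\phi|P_1 P_2|\psi\rangle| = |\langle P_1\phi|P_2\psi\rangle|$, which is maximized by taking $|\phi\rangle \in \nl$ and $|\psi\rangle \in \nll$, matching the definition of $\cos\alpha$ given in the lemma. Next, I let $\mu$ be the largest eigenvalue of $P_1+P_2$, with unit eigenvector $|\psi\rangle$, and I aim to show $\mu \leq 1 + \cos\alpha$. Writing $|\psi\rangle = P_1|\psi\rangle + (I-P_1)|\psi\rangle$ and using $(P_1+P_2)|\psi\rangle = \mu|\psi\rangle$, I will solve for $P_2|\psi\rangle$ as
\[
    P_2|\psi\rangle = (\mu - 1)P_1|\psi\rangle + \mu(I-P_1)|\psi\rangle.
\]
Taking squared norms of both sides (using orthogonality of $P_1|\psi\rangle$ and $(I-P_1)|\psi\rangle$) and using $\|P_2|\psi\rangle\|^2 = \mu - \|P_1|\psi\rangle\|^2$ (which follows from $\mu = \langle\psi|P_1+P_2|\psi\rangle$) will yield, after short algebra, the identity $\|P_1|\psi\rangle\|^2 = \mu/2$ (assuming $\mu \neq 1$, the only case of interest). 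Hence $\|P_2|\psi\rangle\|^2 = \mu/2$ as well.

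Finally, I will compute the overlap $\langle P_1\psi | P_2\psi\rangle$ in two ways: on one hand, using the displayed expression for $P_2|\psi\rangle$, it equals $(\mu - 1)\|P_1|\psi\rangle\|^2 = (\mu-1)\mu/2$; on the other hand, since $P_1|\psi\rangle \in \nl$ and $P_2|\psi\rangle \in \nll$, the definition of $\alpha$ gives $|\langle P_1\psi|P_2\psi\rangle| \leq \|P_1|\psi\rangle\|\,\|P_2|\psi\rangle\|\cos\alpha = (\mu/2)\cos\alpha$. Combining, $|\mu - 1| \leq \cos\alpha$, yielding $\mu \leq 1 + \cos\alpha$ as desired. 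The main obstacle is not any single technical estimate but rather finding the right decomposition of $|\psi\rangle$ that makes the eigenvalue equation tractable; the decomposition along $\nl$ and $\nl^\perp$ is the natural choice because it converts the eigenvalue equation into a linear relation between $P_1|\psi\rangle$ and $(I-P_1)|\psi\rangle$, after which the angle bound plugs in directly. Substituting back, $A_1 + A_2 \succeq v(2 - 1 - \cos\alpha)I = 2v\sin^2(\alpha/2)\,I$, completing the argument.
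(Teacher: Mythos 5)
Your proof is correct, and every step checks out: the reduction $A_i \succeq v(I-P_i)$ is valid because $A_i$ is Hermitian with null space $\nl[i]$... (more precisely, $A_i - v(I-P_i)$ is PSD since every nonzero eigenvalue of $A_i$ is at least $v$), the algebra giving $\|P_1\ket{\psi}\|^2 = \|P_2\ket{\psi}\|^2 = \mu/2$ for a top eigenvector with $\mu\neq 1$ is right, and the two evaluations of $\langle P_1\psi|P_2\psi\rangle$ correctly yield $|\mu-1|\leq\cos\alpha$. Note that the paper itself does not prove this lemma --- it quotes it from Kitaev--Shen--Vyalyi and uses it as a black box in the QMA-hardness argument --- so there is no in-paper proof to compare against. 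Your route is essentially the standard one: everything reduces to the two-projector inequality $\norm{P_1+P_2}\leq 1+\cos\alpha$, which you establish by a direct eigenvector computation rather than by invoking Jordan's lemma (the simultaneous block-diagonalization of two projectors into $2\times 2$ blocks), which is how such statements are often packaged; your version is more elementary and self-contained. Two small remarks for completeness: (i) when $\mu\leq 1$ the desired bound $\mu\leq 1+\cos\alpha$ is immediate since $\cos\alpha\geq 0$, so restricting to $\mu>1$ is legitimate (and then $P_1\ket{\psi},P_2\ket{\psi}$ are nonzero, so normalizing them to apply the definition of $\cos\alpha$ is valid); (ii) the hypothesis $\nl\cap\nll=\set{\ve{0}}$ is not actually needed for the inequality to hold --- it only guarantees $\cos\alpha<1$, i.e., that the resulting lower bound is nonvacuous.
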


\begin{Question}
    For complex Euclidean spaces $\spa{X}$ and $\spa{Y}$, is the statement $\spa{X}\cap\spa{Y}=\set{\ve{0}}$ equivalent to $\spa{X}$ and $\spa{Y}$ being orthogonal spaces?
\end{Question}

We use Kitaev's Geometric Lemma with $A_1=\hin + \hout$ and $A_2=\hprop$ to lower bound the smallest eigenvalue of $H$ in the NO case.

\begin{Question}
    For $A_1=\hin + \hout$ and $A_2=\hprop$, what non-zero value of $v$ can we use for the Geometric Lemma?
\end{Question}
\begin{Hint}
    For $A_1$, recall that commuting operators simultaneously diagonalize.
\end{Hint}
\begin{Hint}
    For $A_2$, the eigenvalues are given by $\lambda_k=1-\cos[\pi k/(L+1)]$ for $0\leq k\leq  L$. Use this to show that the smallest \emph{positive} eigenvalue of $A_2$ is at least $1-\cos(\pi/(L+1))\geq c/L^2$.
\end{Hint}
\begin{Question}
    In order to compute $\alpha(\nl,\nll)$, convince yourself first that
    \begin{eqnarray}
    \nl &=& \left[((\B)^{\otimes m})_p\otimes \ket{0}^{\otimes N-m}_a\otimes \ket{0}_c\right]\oplus
    \nonumber\\
                    &&\left[ ((\B)^{\otimes N})_{p,a}\otimes \operatorname{span}(\ket{1},\ldots \ket{L-1})_c\right]\oplus\nonumber\\
                    && \left[
                    V^\dagger (\ket{1}\otimes(\B)^{\otimes N-1})_{p,a}\otimes \ket{L}_c\right],\label{eqn:nl}\\
    \nll &=& ((\B)^{\otimes N})_{p,a}\otimes \ket{\gamma}_c,\label{eqn:nll}
\end{eqnarray}
\end{Question}

\noindent To compute $\sin^2\frac{\alpha(\nl,\nll)}{2}$ for the Geometric Lemma, we now upper bound
\[
        \cos^2\alpha(\nl,\nll) \leq 1-\frac{1-\sqrt{\epsilon}}{L+1}.
\]
\begin{Question}
    Show that     $\cos^2\alpha(\nl,\nll)=
    \operatorname{max}_{\substack{\ket{y}\in\nll\\\norm{\ket{y}}=1}}\bra{y}\Pi_{\nl}\ket{y}.\label{eqn:6}$
\end{Question}
    Observe from Equation~\ref{eqn:nl} that $\nl$ is a direct sum of three spaces, and hence the projector onto $\nl$ can be written as the sum of three respective projectors $\Pi_1+\Pi_2+\Pi_3$.
\begin{Question}
    Observe by Equation~\ref{eqn:nll} that for any $\ket{y}\in\nll$, $\ket{y}=\ket{\zeta}_{p,a}\otimes\ket{\gamma}_c$ for some $\ket{\zeta}\in(\B)^{\otimes m}\otimes(\B)^{\otimes {N-m}}$.
    \begin{enumerate}
        \item Show that $\bra{y}\Pi_{1}\ket{y}=\frac{L-1}{L+1}$.
        \item One can show that \[\bra{y}\Pi_2+\Pi_3\ket{y}\leq \cos^2\varphi(\mathcal{K}_1,\mathcal{K}_2),\] where $\mathcal{K}_1=(\B)^{\otimes m}\otimes \ket{0}^{\otimes N-m}$ and $\mathcal{K}_2=V^\dagger \ket{1}\otimes(\B)^{\otimes N-1}$. Use the fact that in the NO case, any proof is accepted by $V$ with probability at most $\epsilon$ to conclude that
            \[
                \bra{y}\Pi_2+\Pi_3\ket{y}\leq \frac{1}{L+1}(1+\sqrt{\epsilon}).
            \]
    \end{enumerate}
\end{Question}

    \noindent Combining the results of the question above, we have that $\cos^2\alpha(\nl,\nll)\leq1-((1-\sqrt{\epsilon})/(L+1))$. Using the identities $\sin^2 x + \cos^2 x=1$ and $\sin (2x)=2\sin x\cos x$, this yields
    \[
    \sin^2\frac{\alpha(\nl,\nll)}{2}\geq \frac{1}{4}\sin^2\alpha(\nl,\nll)\geq \frac{1-\sqrt{\epsilon}}{4(L+1)}.
    \]
We conclude that in the NO case, the minimum eigenvalue of $H$ scales as $\Omega((1-\sqrt{\epsilon})/L^3)$.
\begin{Question}
    Recall that in the YES case, the smallest eigenvalue of $H$ is upper bounded by ${\epsilon}/(L+1)$. Why do the eigenvalue bounds we have obtained in the YES and NO cases thus suffice to show that $\flh$ is QMA-hard?
\end{Question}

\paragraph{Making $H$ $5$-local}\mbox{ }\\

\noindent We are almost done! The only remaining issue is that we would like $H$ to be $5$-local, but a binary representation of the $(L+1)$-dimensional clock register is unfortunately $\log(n)$-local. To alleviate this~\cite{KSV02}, we switch to a \emph{unary} representation of time. In other words, we now let $H$ act on $(\B)^{\otimes m}\otimes (\B)^{\otimes N-m}\otimes(\B)^{\otimes L}$, where the counter register is now given in \emph{unary}, i.e.\ $\ket{j}\in\complex^{L+1}$ is represented as
\begin{equation}
    |\underbrace{1,\ldots,1}_{j},0,\ldots,0\rangle.\label{eqn:7}
\end{equation}
The operator basis $\ketbra{i}{j}$ for $\LL(\complex^{L+1})$ translates easily to this new representation (omitted here; see Reference~\cite{KSV02}). To enforce the clock register to indeed always be a valid representation of some time $j$ in unary, we add a new fourth penalty term to $H$ which acts only on the clock register, namely
\begin{equation}
    \hstab := I_{p,a}\otimes \sum_{j=1}^{L-1}\ketbra{0}{0}_j\otimes\ketbra{1}{1}_{j+1}.
\end{equation}
Hence, the new $H$ is given by $H=\hin+\hprop+\hout+\hstab$. By using the fact that both $\hin+\hprop+\hout$ and $\hstab$ act invariantly on the original space the old $H$ used to act on, it is a fairly straightforward exercise to verify that the analysis obtained above goes through for this new definition of $H$ as well~\cite{KSV02}. We conclude that $\flh$ is QMA-hard.

\section{2-local Hamiltonian is QMA-complete}\label{sscn:2LH}
In~\cite{KSV02}, Kitaev showed that the $5$-local Hamiltonian problem is QMA-complete. In this section, we review Kitaev, Kempe, and Regev's perturbation theory proof that even $2$-local Hamiltonian is QMA-hard~\cite{KKR06}. Note that Reference~\cite{KKR06} also provides an alternative ``simpler'' proof based on elementary linear algebra and the so-called \emph{Projection Lemma} in the same paper; however, as the Projection Lemma can be derived via perturbation theory, and since Reference~\cite{KKR06}'s idea of using perturbation theory gadgets has since proven useful elsewhere in Hamiltonian complexity~(e.g.~\cite{OT05}), we focus on the latter proof technique. Besides, perturbation theory is a standard tool in a physicist's toolbox, and our goal in this survey is to better understand what goes on in physicists' minds!

The proof that $2$-local Hamiltonian is QMA-complete is quite complicated. To aid in its assimilation, we therefore begin with a high-level overview of how the pieces of the proof fit together. The starting point shall be Kempe and Regev's result~\cite{KR03} that $3$-LH is also QMA-complete. Roughly, the latter result is obtained via a circuit-to-Hamiltonian construction similar to Kitaev's from \S\ref{sscn:5LH}, except that one first uses only a single qubit to refer to the clock in the propagation Hamiltonian $\hprop$ --- this reduces the Hamiltonian's locality from $5$ to $3$. To compensate for this, one next imposes a large energy penalty on the invalid clock space by multiplying $\hstab$ by a polynomial factor.

\paragraph{Overview of the proof.} To prove that $\tlh$ is QMA-hard, the approach is to show a Karp or mapping reduction from an arbitrary instance of $\thlh$ to $\tlh$. To achieve this, given a $3$-local Hamiltonian $H$ acting on $n$ qubits, we map it to a $2$-local Hamiltonian $\tH$ as follows.
\begin{itemize}
    \item (Step 1: Rewrite $H$ by isolating $3$-local terms) Rewrite $H$ in a form which resembles $Y-6B_1B_2B_3$, where $B_1B_2B_3$ is shorthand for $B_1\otimes B_2\otimes B_3$, the $B_i$ are one-local and positive semidefinite, and $Y$ is a $2$-local Hamiltonian.
    \item (Step 2: Construct $\tH$) Define $\tH=Q+P(Y,B_1,B_2,B_3)$, where $P$ is an operator with small norm and which depends on $Y,B_1,B_2,B_3$, and where $Q$ has large spectral gap and depends only on the spectral gap of $H$. We refer to $P$ as the \emph{perturbation} and $\tH$ as the \emph{perturbed Hamiltonian}.
\end{itemize}
This outlines the reduction itself. It now remains to outline the proof of correctness, i.e. to show that the $2$-local Hamiltonian $\tH$ reproduces the low energy spectrum of the input $3$-local Hamiltonian $H$. In order to facilitate understanding, we present the analysis in a backwards fashion compared to the presentation in~\cite{KKR06}.

\begin{itemize}
    \item (Step 3: Define an effective Hamiltonian $\heff$) We first define an \emph{effective} $3$-local Hamiltonian $\heff$ whose low-energy spectrum is (by inspection) identical to that of $H$. We will see next that $\tH$ has been cleverly chosen to simulate $\heff$ with only $2$-local interactions.
    \item (Step 4: Define the self-energy, $\se$) A standard tool in perturbation theory is an operator-valued function known as the \emph{self-energy}, denoted $\se$, for $z\in\complex$. In this step, we show that for an appropriate choice of $z$, we have $\snorm{\heff-\se}\leq \epsilon$ for some small $\epsilon>0$. Intuitively, this relationship will hold because $\heff$ is simply a truncated version of the series expansion of $\se$.
    \item (Step 5: Relate the low energy spectrum of $\se$ to that of $\tH$) This step is where the actual perturbation theory analysis comes in. The outcome of this step will be that, assuming $\snorm{\heff-\se}\leq \epsilon$, the $j$th smallest eigenvalue of $\heff$ is $\epsilon$-close to the $j$th smallest eigenvalue of $\tH$.
\end{itemize}

To recap, once we define the $2$-local Hamiltonian $\tH$, the spectral analysis we perform follows the chain of relationships:
\[
    H\approx \heff \approx\se \approx \tH,
\]
where here $\approx$ roughly indicates that the two operators in question share a similar ground space. We now discuss each of these steps in further detail.

\subsection{Step 1: Rewrite $H$ by isolating $3$-local terms}

\begin{Question}
    Convince yourself that $H$ can be rewritten, up to rescaling by a constant, in the form
    \begin{equation}\label{eqn:rewrite}
        H \propto Y - 6\sum_{i=1}^M B_{i1}\otimes B_{i2}\otimes B_{i3},
    \end{equation}
    where each $B_{ij}$ is a one-local positive semidefinite operator and $Y$ is a $2$-local Hamiltonian whose operator norm is upper bounded by an inverse polynomial in $n$.
\end{Question}
\begin{Hint}
    Rewrite each local term of $H$ in the local Pauli operator basis, i.e. as a linear combination of terms of the form $\sigma_1\otimes \sigma_2\otimes \sigma_3$ for $\sigma_i\in\set{ I,\sigma^x, \sigma^y, \sigma^z}$. Then, for each such term involving Pauli operators, try to add 1-local multiples of the identity in order to obtain positive semidefinite terms $B_{i1}\otimes B_{i2}\otimes B_{i3}$. You will then have to subtract off certain terms to make up for this addition; these subtracted terms will form $Y$. Think about why $Y$ must indeed be $2$-local.
\end{Hint}

\subsection{Step 2: Construct $\tH$}
Using the decomposition of $H$ in Equation~\ref{eqn:rewrite}, we can now construct our desired $2$-local Hamiltonian, $\tH$. As done in~\cite{KKR06}, for simplicity, we assume that $M=1$ in Equation~\ref{eqn:rewrite}, i.e. that $H=Y-6B_1B_2B_3$. The extension to arbitrary $M$ follows similarly~\cite{KKR06}.

To construct $\tH$, suppose $H\in\LL ((\B)^{\otimes n})$. Then, we introduce three auxiliary qubits and define $\tH\in\LL((\B)^{\otimes n}\otimes(\B)^{\otimes 3})$ as follows~\cite{KKR06}.
\begin{eqnarray}
    \tH&:=&Q+P,\label{eqn:tildeQ}\\
    Q &:=& -\frac{1}{4\delta^3}I\otimes (\sigma_1^z\sigma_2^z+\sigma_1^z\sigma_3^z+\sigma_2^z\sigma_3^z - 3I),\\
    P &:=& (Y+\frac{1}{\delta}(B_1^2+B_2^2+B_3^2))\otimes I -\nonumber \\&&\frac{1}{\delta^2}(B_1\otimes \sigma_1^x+B_2\otimes \sigma_2^x + B_3\otimes \sigma_3^x),
\end{eqnarray}
where $\delta>0$ is some small constant, and $\sigma_j^i$ denotes the $i$th Pauli operator applied to qubit $j$. Notice that the ``unperturbed'' Hamiltonian $Q$ contains no information about $H$ itself, whereas the term that \emph{does} contain the information about $H$, $P$, is thought of as the ``perturbation''. The reason for this is that $Q$ is thought of as a \emph{penalty term} with a large spectral gap (compared to $\snorm{P}$), so that intuitively, the ground space of $\tH$ will be forced to be a subspace of the ground space of $Q$ (since $Q$ will enforce a large penalty on any vector not in this space).

\begin{Question}\label{q:GS1}
    Show that $Q$ has eigenvalues $0$ and $1/\delta^3$. Conclude that for ``small'' constant $\delta$, $Q$ has a ``large'' constant-sized spectral gap.
\end{Question}
\begin{Question}\label{q:GS2}
    Show that associated with the null space of $Q$ is the space
    \[
        L_{-} =(\B)^{\otimes n}\otimes \Span{\ket{000},\ket{111}}.
    \]
    For simplicity, we henceforth define $C:=\Span{\ket{000},\ket{111}}$. Conclude that we can think of $C$ as a \emph{logical} qubit, and that we can define logical Pauli operators $\sigma_C^i$ acting on $C$.
\end{Question}


\subsection{Step 3: Define an effective Hamiltonian $\heff$}
\begin{Question}
    Show that the effective Hamiltonian
    \begin{equation}\label{eqn:heff}
        \heff := Y\otimes I_C - 6B_1B_2B_3\otimes \sigma_C^x
    \end{equation}
    has the same ground state energy as $H=Y-6B_1B_2B_3$. Conclude that it suffices to show that the ground state energy of $\heff$ approximates that of $\tH$.
\end{Question}

\subsection{Step 4: Define the self-energy $\se$}

Let $\delta>0$ be a small constant. In this section, we define an operator-valued function $\se$, known as the self-energy, and show that for certain values of $z$, we have $\snorm{\heff-\se}\in O(\delta)$. At a high-level, the claim follows by using the series expansion of $\se$ to observe that for appropriate $z$, one has $\se=\heff + O(\delta)$.

\paragraph{Definition of $\se$.} To begin, suppose $\tH=Q+P$ acts on Hilbert space $\spa{H}$, for $Q$ the unperturbed Hamiltonian and $P$ the perturbation. Let $\lambda_*\in\reals$ be some cutoff value. Then, let $\spa{L}_-$ ($\spa{L}_+$) denote the span of $Q$'s eigenvectors with eigenvalue strictly less than $\lambda_*$ (at least $\lambda_*$), and let $\Pi_-$ ($\Pi_+$) denote the projector onto $\spa{L}_-$ ($\spa{L}_+$). For notational convenience, for any operator $A$, we define
\begin{eqnarray*}
    A_{+}&:=&\Pi_+ A\Pi_+\\
    A_{+-}&:=&\Pi_+ A\Pi_-\\
    A_{-+}&:=&\Pi_- A\Pi_+\\
    A_{-}&:=&\Pi_- A\Pi_-.
\end{eqnarray*}
To define the self-energy operator $\se$, we now require the notion of the \emph{resolvent} of an operator $A$, defined $R(z,A):=(zI-A)^{-1}$. Intuitively, the resolvent will be the vital link allowing us to connect the low-energy eigenvalues of our constructed $2$-local Hamiltonian $\tH$ to those of $\heff$ (and hence to those of our original $3$-local Hamiltonian $H$). Roughly, the range of $z$ we will be interested in for the resolvent is $z\in[c-\epsilon, d+\epsilon]$, where $c$ and $d$ are defined such that the eigenvalues of $\heff$ lie in the range $[c,d]$ for some $c<d$. Further details can be found in \S\ref{ssscn:low}, where explicit use of the resolvent comes into play. Returning to our discussion here, the self-energy is now defined as $\se:= zI_--R^{-1}_-(z,\tH)$. In this section, we use the fact~\cite{KKR06} that $\se$ has a simple and useful series expansion, given by
\begin{eqnarray}
    \se &=& Q_- + P_-+P_{-+}R_+P_{+-}+P_{-+}R_+P_{++}R_+P_{+-}+\nonumber\\&&P_{-+}R_+P_+R_+P_+R_+P_{+-}+\cdots\label{eqn:series}
\end{eqnarray}

\paragraph{$\se$ is close to $\heff$.} For our specific definition of $\tH$, to show that $\se$ is close to $\heff$, we simply show that $\heff$ is the series expansion of $\se$ up to the third order. Define $\Delta:=1/\delta^3$, and consider Equation~\ref{eqn:series}.

\begin{Question} We now compute the expression for $\se$ for $\tH$. Recall that our goal is to show that the low order terms of $\se$ are precisely our desired effective Hamiltonian, $\heff$.
    \begin{enumerate}
        \item Show that the zeroth order term of $\se$ is zero, i.e. $Q_-=0$.
        \item Show that $R_+=(z-\Delta)^{-1}I_{\spa{L}_+}$.
        \item Use parts 1 and 2 above to conclude that in our case,
        \begin{eqnarray*}
            \se &=& P_- + \frac{1}{z-\Delta}P_{-+}P_{+-} + \frac{1}{(z-\Delta)^2}P_{-+}P_{++}P_{+-} + \\&& \frac{1}{(z-\Delta)^3}P_{-+}P_+P_+P_{+-}+\cdots
        \end{eqnarray*}
        \item Show that
        \begin{eqnarray*}
            P_{-+} &=& -\frac{1}{\delta^{2}}(B_1\otimes \ketbra{000}{100}+B_2\otimes\ketbra{000}{010}+\\&&B_3\otimes\ketbra{000}{001}+B_1\otimes\ketbra{111}{011}+\\&&B_2\otimes\ketbra{111}{101}+B_3\otimes\ketbra{111}{110}).
        \end{eqnarray*}
        Derive similar expressions for $P_{+-}$, $P_-$, and $P_+$.
        \item For ease of notation, define $X:= (Y+\frac{1}{\delta}(B_1^2+B_2^2+B_3^2))$, and let $I_C$ denote the projector onto space $C$ from Question~\ref{q:GS2}. Show that
        \begin{eqnarray*}
            P_-&=& X\otimes I_C,\\
            P_{-+}P_{+-} &=&\frac{1}{\delta^4}(B_1^2+B_2^2+B_3^2)\otimes I_C,\\
            P_{-+}P_{++}P_{+-} &=& \frac{1}{\delta^4}(B_1XB_1+B_2XB_2+B_3XB_3)\otimes I_C - \\&&\frac{6}{\delta^6}B_1B_2B_3\otimes \sigma^x_C.
        \end{eqnarray*}
        \item By setting $z\ll\Delta$ a constant, show that $(z-\Delta)^{-1}=-\delta^3 + O(\delta^6)$.
        \item Finally, using all parts above, show that, as desired,
            \begin{equation*}
                \se = Y\otimes I_C - 6B_1B_2B_3\otimes \sigma^x_C+O(\delta)=\heff + O(\delta).
            \end{equation*}
            Conclude that $\snorm{\heff-\se}\in O(\delta)$.
    \end{enumerate}
\end{Question}

\subsection{Step 5: Relate the low energy spectrum of $\se$ to that of $\tH$}\label{ssscn:low}

In this section, we plug in Theorem 3 of \cite{KKR06}, which allows us to conclude that the small eigenvalues of $\se$ are close to the small eigenvalues of $\tH$.

\begin{theorem}\label{thm:3}{(Kitaev, Kempe, Regev~\cite{KKR06}, Theorem 3)} Let $\lambda_*$ be the cutoff on the spectrum of $Q$, as before. Assume the eigenvalues of $Q$ lie in the range $(-\infty,\lambda_*+\alpha/2]\cup[\lambda_*+\alpha/2,\infty)$ for some $\alpha\in\reals$, and that $\snorm{P}< \alpha/2$. Fix arbitrary $\epsilon>0$. Then, if there exists operator $\heff$ whose eigenvalues lie in the range $[c,d]$ for some $c<d<\lambda_*-\epsilon$, and such that $\snorm{\se-\heff}\leq \epsilon$ for all $z\in[c-\epsilon, d+\epsilon]$, then the $j$th eigenvalue of $\widetilde{\Pi}_-\tH\widetilde{\Pi}_-$ is $\epsilon$-close to the $j$th eigenvalue of $\heff$. Here, we assume eigenvalues are ordered for each operator in non-decreasing order, and we define $\widetilde{\Pi}_-$ as the projector onto the span of the eigenvectors of $\tH$ of eigenvalue strictly less than $\lambda_*$.
\end{theorem}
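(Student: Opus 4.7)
The plan is to exploit a standard Schur-complement identity that relates eigenvalues of $\tH$ below the cutoff $\lambda_*$ to ``fixed points'' of $\Sigma_-(z)$, and then to convert the uniform approximation $\Sigma_-(z) \approx \heff$ into an eigenvalue approximation via monotonicity. First I would block-decompose $\tH$ with respect to $\spa{H} = \spa{L}_- \oplus \spa{L}_+$ and apply block-matrix inversion to $(zI-\tH)^{-1}$ to obtain
\[
\Sigma_-(z) \;=\; \tH_- + \tH_{-+}\bigl(zI_+ - \tH_+\bigr)^{-1}\tH_{+-}.
\]
The gap hypothesis on $Q$ together with $\snorm{P} < \alpha/2$ and Weyl's inequality gives $\tH_+ \succeq \lambda_* I_+$ on $\spa{L}_+$, so $(zI_+ - \tH_+)^{-1}$ is well-defined and analytic for all real $z < \lambda_*$, which is precisely the range we need.

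Next I would establish the spectral correspondence: for $\lambda < \lambda_*$, $\lambda$ is an eigenvalue of $\tH$ if and only if $\lambda$ is an eigenvalue of $\Sigma_-(\lambda)$. This follows from the Schur factorization
\[
\det(zI - \tH) \;=\; \det(zI_+ - \tH_+)\cdot\det\bigl(zI_- - \Sigma_-(z)\bigr),
\]
together with the fact that the first factor is nonzero for $z < \lambda_*$. Thus the eigenvalues of $\widetilde{\Pi}_- \tH \widetilde{\Pi}_-$ are exactly the solutions $\lambda < \lambda_*$ of the nonlinear eigenvalue problem $\Sigma_-(\lambda) v = \lambda v$. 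The analytic engine is then monotonicity: differentiating yields
\[
\frac{d}{dz}\Sigma_-(z) \;=\; -\tH_{-+}\bigl(zI_+ - \tH_+\bigr)^{-2}\tH_{+-} \;\preceq\; 0,
\]
so each eigenvalue $\mu_j(z)$ of $\Sigma_-(z)$ (ordered non-decreasingly) is non-increasing in $z$, and $g_j(z) := z - \mu_j(z)$ is strictly increasing with slope at least $1$. Hence each zero $\lambda_j^\star$ of $g_j$ is unique, and the collection $\{\lambda_j^\star\}$ enumerates the eigenvalues of $\widetilde{\Pi}_-\tH\widetilde{\Pi}_-$ with the correct multiplicities.

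Finally, combining with the hypothesis $\snorm{\Sigma_-(z) - \heff} \leq \epsilon$ on $[c-\epsilon, d+\epsilon]$, Weyl's inequality gives $|\mu_j(z) - \nu_j| \leq \epsilon$ on this interval, where $\nu_j$ is the $j$-th eigenvalue of $\heff$. Evaluating $g_j$ at $\nu_j \pm \epsilon$ shows $g_j(\nu_j - \epsilon) \leq 0 \leq g_j(\nu_j + \epsilon)$, so by monotonicity $\lambda_j^\star \in [\nu_j - \epsilon, \nu_j + \epsilon]$, yielding the desired $\epsilon$-closeness; note that $\lambda_j^\star \leq d + \epsilon < \lambda_*$, so we stay in the range where the self-energy analysis is valid. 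The main obstacle I anticipate is the bookkeeping in the spectral correspondence --- in particular, lining up the $j$-th smallest eigenvalue of $\tH$ below $\lambda_*$ with the zero of the $j$-th eigenvalue curve $g_j$ in the presence of possible level crossings of the $\mu_j(z)$ as $z$ varies. This is a purely technical point that follows from continuity of the $\mu_j$ in $z$ together with the uniform slope-one lower bound on each $g_j$, but it is the step where care is needed to ensure the result holds eigenvalue-by-eigenvalue rather than just for the smallest.
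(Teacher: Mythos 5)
Your argument is correct, and it is worth noting that the survey itself offers no proof of this statement --- it is imported as a black box from \cite{KKR06} --- so the relevant comparison is with the original source, and your route is essentially theirs. The Feshbach/Schur-complement identity $\se=\tH_-+\tH_{-+}(zI_+-\tH_+)^{-1}\tH_{+-}$, the determinant factorization identifying eigenvalues of $\tH$ below $\lambda_*$ with solutions of $z\in\operatorname{spec}(\se)$, and the monotonicity $\tfrac{d}{dz}\se=-\tH_{-+}(zI_+-\tH_+)^{-2}\tH_{+-}\preceq 0$ (so that each $g_j(z)=z-\mu_j(z)$ increases with slope at least one and has a unique zero, pinned to within $\epsilon$ of the $j$th eigenvalue $\nu_j$ of $\heff$ by Weyl's inequality) are exactly the ingredients of the proof of Theorem 3 in \cite{KKR06}. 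The bookkeeping point you flag is the content of their counting lemma, and it does go through: the number of eigenvalues of $\tH$ below $\lambda_*$ equals $\dim\spa{L}_-$ (again by the gap hypothesis and $\snorm{P}<\alpha/2$, which is also what makes $\widetilde{\Pi}_-$ well defined and $\tH_+\succ\lambda_* I_+$), the ordering $\mu_j(z)\leq\mu_k(z)$ for $j\leq k$ forces the zeros $\lambda_j^\star$ to appear in non-decreasing order, and the slope-at-least-one lower bound makes each vanishing factor of $\det(zI_--\se)$ vanish to exactly first order, so multiplicities match. One small reading note: the stated spectral range of $Q$ in the theorem contains a typo ($\lambda_*+\alpha/2$ appears as both endpoints); you correctly interpreted it as the gap $(-\infty,\lambda_*-\alpha/2]\cup[\lambda_*+\alpha/2,\infty)$, without which the hypothesis is vacuous and the argument cannot start.
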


To sketch at a high-level the idea behind the proof of Theorem~\ref{thm:3}, recall that our goal in this section is to approximate the low-energy spectrum of an input $3$-local Hamiltonian $H$ with a $2$-local Hamiltonian $\tH$. To this end, we first observed that the low-energy spectrum of $H$ is identical to that of $\heff$. Next, we showed that $\heff$ is well-simulated in this low-energy space by the self-energy, $\Sigma_-(z)$, since the former is a truncation of the series expansion of the latter. Finally, it remains to link the spectrum of $\Sigma_-(z)$ to that of $\tH_-$ (where recall $\tH_-$ is $\tH$ projected onto its low-energy space), which is precisely the task of Theorem~\ref{thm:3}. Here, we are interested in the eigenvalues of $\tH_-$, and it is here that the resolvent plays a crucial role. Specifically, to show Theorem~\ref{thm:3}, one first shows that the eigenvalues of $\tH_-$ are encoded in the poles of $R(z,\tH)$. Moreover, projecting $R(z,\tH)$ onto the low energy space of $\tH$ preserves these poles, i.e. the poles of $R(z,\tH)_-$ \emph{also} encode the eigenvalues of $\tH$. But now we are in good shape, since the self-energy $\Sigma_-(z)$ is defined directly in terms of $R(z,\tH)_-$; it can thus be shown that the poles of $R(z,\tH)_-$ are related to the eigenvalues of $\Sigma_-(z)$. Finally, since $\Sigma_-(z)$ is ``close'' to $\heff$ by assumption in Theorem~\ref{thm:3}, the claim follows by applying a (special case of) Weyl's inequalities relating the distance between two operators in spectral norm to distances between their respective eigenvalues. This completes the proof sketch.

\begin{Question}
    Apply Theorem~\ref{thm:3} with $c=-\snorm{\heff}$, $d=\snorm{\heff}$, $\lambda_*=\Delta/2$ to complete the proof of correctness for the reduction.
\end{Question}

\section{Commuting $k$-local Hamiltonians and the Structure Lemma}\label{sscn:CLH}
A natural case of the $k$-local Hamiltonian problem whose complexity remains open (for arbitrary $k$ and local dimension $d$) is that of \emph{commuting} local Hamiltonians, i.e. where the local constraints pairwise commute. This class of Hamiltonians is particularly interesting, in that it intuitively seems ``closer'' to the classical world of constraint satisfaction (in which all constraints are diagonal in the computational basis and hence commute), and yet such Hamiltonians are nevertheless rich enough to give rise to highly entangled ground states, such as the toric code Hamiltonian~\cite{Kit03}. The main complexity theoretic question in this area is to characterize the complexity of the problem for various values of $k$ and $d$: Is it in NP? QCMA (i.e. QMA with a classical prover)? Or could it be QMA-complete?

To date, it is known that the commuting cases of $2$-local Hamiltonian for $d\geq 2$~\cite{BV05}, 3-local Hamiltonian with $d=2$ (as well as $d=3$ with a ``nearly Euclidean'' interaction graph)~\cite{AE11}, and 4-local Hamiltonian with $d=2$ on a square lattice are in NP~\cite{s11}. At the heart of these results is Bravyi and Vyalyi's  \emph{Structure Lemma}~\cite{BV05}, which is a powerful tool for dissecting the structure of commuting local Hamiltonians. The primary focus of this section is to prove and discuss this lemma.

We remark that often the commuting $\klhh$ problem is phrased with each local term being an orthogonal projection; this is without loss of generality, as since all terms simultaneously diagonalize, the ground state lies completely in some eigenspace of each constraint.

\subsection{Statement of the Structure Lemma}

Intuitively, the Structure Lemma says the following. Suppose we have two Hermitian operators $A\in\herm{\sX\otimes \sY}$ and $B\in\herm{\sY\otimes \sZ}$ for complex Euclidean spaces $\sX,\sY,\sZ$, such that $A$ and $B$ commute. Then, the space $\sY$ can be sliced up in such a way, that if we focus on just one slice $\sYi$ of the space (which is claimed by the NP prover to contain the ground state), then in this subspace $A$ and $B$ are completed \emph{decoupled}. Specifically, the lemma says we can write $\sY=\bigoplus_i \sYi$, such that if we restrict $A$ and $B$ to any one space $\sYi$, the resulting operators can be seen to act on disjoint parts of the space $\sYi$, hence eliminating their overlap. We now state the lemma more formally.

\begin{lemma}[Structure Lemma~\cite{BV05}]\label{l:structure}
    Suppose $A\in\herm{\sX\otimes \sY}$ and $B\in\herm{\sY\otimes \sZ}$ for complex Euclidean spaces $\sX,\sY,\sZ$, and such that $[A,B]=0$. Then, one can write $\sY=\bigoplus_i \sYi$, such that for any $i$, the following two properties hold:
    \begin{enumerate}
        \item $A$ and $B$ act invariantly on $\sYi$, and
        \item $\sYi = \sYio\otimes\sYit$ for some Hilbert spaces $\sYio$ and $\sYit$, such that $A|_{\sYi}\in\herm{\sX\otimes \sYio}$ and $B|_{\sYi}\in\herm{\sYit\otimes \sZ}$. In other words, within the subspace $\sYi$, $A$ and $B$ act non-trivially only on $\sYio$ and $\sYit$, respectively.
    \end{enumerate}
\end{lemma}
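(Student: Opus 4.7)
The natural approach is to pass from the operators $A$ and $B$ to the *-subalgebras of $\lin{\sY}$ they generate via their matrix coefficients, and then invoke the structure theorem for finite-dimensional C*-algebras. Specifically, I would define
\[
    \mathcal{A} \defeq \text{alg}\set{\, (\bra{x}\otimes I_{\sY}) \, A \, (\ket{x'}\otimes I_{\sY}) \,:\, \ket{x},\ket{x'}\in\sX\,} \subseteq \lin{\sY},
\]
and analogously $\mathcal{B}\subseteq\lin{\sY}$ from matrix elements of $B$ between vectors in $\sZ$. Because $A=A^\dagger$, the coefficient $(\bra{x}\otimes I)A(\ket{x'}\otimes I)$ has adjoint $(\bra{x'}\otimes I)A(\ket{x}\otimes I)$, so $\mathcal{A}$ is closed under adjoints; together with closure under products and linear combinations, $\mathcal{A}$ is a unital *-subalgebra of $\lin{\sY}$, and likewise for $\mathcal{B}$.

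Next I would translate the hypothesis $[A,B]=0$ into $[\mathcal{A},\mathcal{B}]=0$. This is a short computation: writing $A = \sum_{x,x'}\ketbra{x}{x'}\otimes A_{xx'}$ and $B=\sum_{z,z'} B_{zz'}\otimes\ketbra{z}{z'}$, the commutator $[A,B]$ lives in $\lin{\sX\otimes\sY\otimes\sZ}$, and computing its matrix element between $\ket{x}\otimes I_{\sY}\otimes\ket{z}$ and $\ket{x'}\otimes I_{\sY}\otimes\ket{z'}$ directly gives $A_{xx'}B_{zz'} - B_{zz'}A_{xx'}=0$. Since these generators commute, so do the algebras $\mathcal{A}$ and $\mathcal{B}$, i.e.\ $\mathcal{B}\subseteq \mathcal{A}'$, where $\mathcal{A}'$ denotes the commutant.

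Now I would apply the standard structure theorem for finite-dimensional *-algebras: there is a decomposition $\sY=\bigoplus_i \sYio\otimes\sYit$ such that
\[
    \mathcal{A} \;=\; \bigoplus_i \lin{\sYio}\otimes I_{\sYit}, \qquad \mathcal{A}' \;=\; \bigoplus_i I_{\sYio}\otimes\lin{\sYit}.
\]
Let $\sYi \defeq \sYio\otimes\sYit$ and let $\Pi_i$ project onto $\sYi$. Since $\Pi_i$ lies in the center $\mathcal{A}\cap\mathcal{A}'$, it commutes with every element of $\mathcal{A}$ and of $\mathcal{B}\subseteq\mathcal{A}'$; hence each matrix coefficient $A_{xx'}$ and $B_{zz'}$ leaves $\sYi$ invariant, which immediately gives that $A$ and $B$ act invariantly on $\sX\otimes\sYi$ and $\sYi\otimes\sZ$ respectively (property~1). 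Finally, restricted to $\sYi$, each $A_{xx'}\in\mathcal{A}$ acts as $(\text{something on }\sYio)\otimes I_{\sYit}$, so $A|_{\sYi} \in \herm{\sX\otimes\sYio}\otimes I_{\sYit}$; symmetrically $B|_{\sYi}\in I_{\sYio}\otimes\herm{\sYit\otimes\sZ}$, which is precisely property~2.

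The main obstacle is the invocation of the structure theorem: it is a textbook fact, but one must be careful to use the version that yields the \emph{joint} decomposition $\sY=\bigoplus_i \sYio\otimes\sYit$ of $\mathcal{A}$ and $\mathcal{A}'$ simultaneously (rather than just a block-diagonal form for $\mathcal{A}$). Beyond that, the argument is essentially bookkeeping: verify the *-closure of $\mathcal{A}$ and $\mathcal{B}$, verify commutation on generators, and read off the two claimed properties from how $\mathcal{A}$ and $\mathcal{A}'$ sit inside $\lin{\sY}$.
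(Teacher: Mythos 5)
Your proposal is correct and follows essentially the same route as the paper: pass to the $*$-algebras generated by the matrix coefficients of $A$ and $B$ on $\sY$, observe that $[A,B]=0$ forces these coefficient algebras to commute, and then extract the decomposition $\sY=\bigoplus_i\sYio\otimes\sYit$ from the representation theory of finite-dimensional $C^*$-algebras. The only cosmetic difference is that you invoke the full structure theorem (simultaneous block/tensor form of $\mathcal{A}$ and $\mathcal{A}'$) in one step, whereas the paper assembles it from its Corollary on central decompositions and the trivial-center tensor-factorization lemma; these are the same argument packaged differently.
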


The strength of the Structure Lemma in proofs placing variants of commuting Hamiltonian in NP is as follows. First, note that by property 1 above, when looking for the joint ground state $\ket{\psi}$ of $A$ and $B$, we can safely restrict our attention to one appropriately chosen slice $i$. But which slice $i$ does $\ket{\psi}$ live in? This is not obvious \emph{a priori}; hence, we ask the NP prover to send us the correct choice of $i$. Then, by restricting $A$ and $B$ to space $\sYi$, by property 2 above the resulting Hamiltonians are decoupled. We can hence now easily diagonalize this system and determine the ground state energy, thus confirming whether it is indeed zero or bounded away from zero. Applying this idea repeatedly allows one to place the commuting $2$-local Hamiltonian problem in NP.

\subsection{Proof of the Structure Lemma}

In this section, we prove the Structure Lemma. The proof cleverly uses basic $C^*$ algebraic techniques. We remark that readers unfamiliar with $C^*$ algebras should not be put off; the Structure Lemma is a powerful tool worth understanding, and to be clear, once the terminology barrier of the $C^*$ formalism is overcome, the underlying proof is rather intuitive and simple. For this reason, we begin by defining the basic terminology required for the proof.\\

\noindent\textbf{Algebra:} Let $A$ be a vector space over $\complex$. Then $A$ is an algebra if it is endowed with a bilinear operation $\cdot:A\times A\mapsto A$. In our setting, $A$ will be some subset of linear operators taking $\complex^k$ to itself, and $\cdot$ is simply matrix multiplication. A {Banach} algebra is an associative algebra over the real or complex numbers which is also normed and complete, i.e. is a Banach space.\\

\noindent\textbf{C* Algebra:} To get a $C^*$ algebra, we start with a Banach algebra $A$ over $\complex$, and add a $*$-operation which has the following properties:
\begin{enumerate}
    \item For all $x\in A$, $x=(x^*)^*=x^{**}$.
    \item For all $x,y\in A$, $(xy)^*=y^*x^*$ and $(x+y)^*=x^*+y^*$.
    \item For all $c\in \complex$ and $x\in A$, $(cx)^*=\overline{c}x^*$.
    \item For all $x\in A$, $\norm{xx^*}=\norm{x}^2$.
\end{enumerate}
In our setting, the $*$-operation is simply the conjugate transpose of a matrix.\\

\noindent\textbf{Commutant:} Let $S\subseteq\lin{\complex^k}$. Then, the commutant of $S$ is defined as
\[
    S':=\set{x\in \lin{\complex^k}: xs=sx \text{ for all }s\in S}.
\]
A few facts about commutants come in handy: $S'$ is a $C^*$ algebra, $S\subseteq S^{''}$ (known as the closure of $S$), and $S^{'''}=S'$. Further, the following holds.
\begin{lemma}\label{l:commute}
    Let $A, B\subseteq\lin{\complex^k}$, and suppose for all $a\in A$ and $b\in B$, we have $ab=ba$. Then, for all $a\in A^{''}$ and $b\in B^{''}$, we have $ab=ba$.
\end{lemma}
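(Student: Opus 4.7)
\noindent\emph{Proof proposal.} The plan is to reduce the statement to the single formal fact about the commutant operation that inclusions are reversed: if $S\subseteq T \subseteq \lin{\complex^k}$, then $T'\subseteq S'$. This is immediate from the definition (anything commuting with a larger set certainly commutes with a smaller one). With this in hand, together with the identities $S\subseteq S''$ and $S'''=S'$ recalled in the excerpt, the lemma will follow by applying the commutant operation twice.

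First, I would restate the hypothesis in the language of commutants: saying that every $a\in A$ commutes with every $b\in B$ is exactly saying $B\subseteq A'$ (equivalently $A\subseteq B'$). Next, I would apply the inclusion-reversing property to $B\subseteq A'$ to conclude
\[
    A''=(A')'\;\subseteq\; B',
\]
which already yields a useful intermediate statement: every element of $A''$ commutes with every element of $B$. Finally, applying the inclusion-reversing property once more to $A''\subseteq B'$ gives
\[
    B''\;\subseteq\; (A'')',
\]
which by the very definition of commutant means $ab=ba$ for every $a\in A''$ and $b\in B''$, as desired.

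I do not expect any genuine obstacle here; the entire content is the interplay between the three elementary facts about commutants (inclusion reversal, $S\subseteq S''$, and $S'''=S'$), and in fact only inclusion reversal is strictly needed. The one thing worth being careful about in the writeup is the direction of each inclusion, since it is easy to ``collapse'' the argument to the trivial statement $A''\subseteq B'$ without noticing that a second application of inclusion reversal is what actually produces the symmetric conclusion about $B''$.
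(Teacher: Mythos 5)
Your proposal is correct and is essentially the paper's own proof: the paper likewise observes $B\subseteq A'$, deduces $A''\subseteq B'$, and then notes that elements of $B''$ commute with everything in $B'$ and hence with $A''$ — each of your two applications of ``inclusion reversal'' is exactly one of those steps, just stated as a packaged property of the commutant rather than unwound from the definition.
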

\begin{proof}
    Observe that $B\subseteq A^{'}$. Thus, the elements in $A^{''}$ pair-wise commute with all elements of $B$, as well as those in $A^{'}\backslash B$. This implies $A^{''}\subseteq B^{'}$. But the elements of $B^{''}$ pairwise commute with those of $B^{'}$, which in turn implies they commute with the elements of $A^{''}$.
\end{proof}

\noindent\textbf{Center:} The center of algebra $A$ is the set of all elements in $A$ which commute with everyone else in $A$, i.e. $C(A):=A\cap A'$ (recall $A'$ is the commutant of $A$). A \emph{trivial} center is one which satisfies $C(A)=\set{cI\mid c\in\complex}$.

A simple application of the definition of the center yields some very useful well-known properties for all $x\in A$, as stated in the following lemma.

\begin{lemma}\label{l:directsum}
     Let $A$ be a $C^*$ algebra such that $A\subseteq\lin{\spa{X}}$ for complex Euclidean space $\spa{X}$. Suppose there exists $M\in C(A)$ with diagonalization $M=\sum_i\lambda_i \Pi_i$, where $\lambda\in \reals$ and $\Pi_i$ are (not necessarily one-dimensional) orthogonal projections. Then, any $N\in A$ has a block diagonal structure with respect to basis $\set{\Pi_i}$, i.e. can be written
     \[
        N = \bigoplus_i N_i
     \]
     where $N_i$ is an operator acting on the space $\Pi_i$ projects onto.
\end{lemma}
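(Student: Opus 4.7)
The plan is to reduce the claim to the standard fact that operators commuting with a Hermitian operator must preserve each of its eigenspaces. Since $M \in C(A)$, we have $[M,N]=0$ for every $N\in A$ by the very definition of the center. The spectral decomposition $M=\sum_i \lambda_i \Pi_i$ is given with real $\lambda_i$, so $M$ is Hermitian, and (assuming we list all distinct eigenvalues) $\set{\Pi_i}$ is a resolution of the identity on $\spa{X}$, i.e.\ $\sum_i \Pi_i = I_\spa{X}$ with $\Pi_i\Pi_j = \delta_{ij}\Pi_i$.

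The first key step is to argue that $N$ commutes not only with $M$ but with each spectral projector $\Pi_i$ individually. The standard trick here is polynomial interpolation: since the $\lambda_i$ are distinct, for each $i$ there is a polynomial $p_i\in\reals[t]$ with $p_i(\lambda_j) = \delta_{ij}$, and then $p_i(M) = \sum_j p_i(\lambda_j)\Pi_j = \Pi_i$. From $[M,N]=0$ it follows by induction that $[M^k,N]=0$ for every $k$, and hence $[p_i(M),N]=0$, i.e.\ $[\Pi_i,N]=0$ for every $i$.

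Given that, the block decomposition drops out immediately. Using $\sum_i \Pi_i = I$ and $\Pi_i\Pi_j = \delta_{ij}\Pi_i$, we write
\[
    N \;=\; \Bigl(\sum_i \Pi_i\Bigr) N \Bigl(\sum_j \Pi_j\Bigr) \;=\; \sum_{i,j}\Pi_i N \Pi_j \;=\; \sum_{i,j}\Pi_i \Pi_j N \;=\; \sum_i \Pi_i N \Pi_i,
\]
where the third equality uses $[\Pi_j,N]=0$ and the last uses orthogonality of distinct projectors. Defining $N_i := \Pi_i N \Pi_i$, viewed as an operator on the subspace $\Pi_i\spa{X}$, we obtain $N = \bigoplus_i N_i$ as desired.

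I do not expect a real obstacle here; the only subtle point is justifying that each $\Pi_i$ is a polynomial in $M$ (so that commutation with $M$ transfers to commutation with $\Pi_i$), which is handled by the Lagrange interpolation argument above. Everything else is routine manipulation of orthogonal projectors summing to the identity.
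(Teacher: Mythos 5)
Your proof is correct and follows the same route as the paper's: both use Lagrange interpolation to write each spectral projector $\Pi_i$ as a polynomial in $M$, conclude that $N$ commutes with every $\Pi_i$, and read off the block structure. You simply spell out the final resolution-of-identity computation $N=\sum_i\Pi_i N\Pi_i$ that the paper leaves implicit.
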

\begin{proof}
    We claim that it is true that for all $i$, $\Pi_i\in C(A)$; assuming this is true, the statement of the lemma holds simply because any $N\in A$ must now commute with each $\Pi_i$. To thus see that $\Pi_i\in C(A)$, note that if $M\in C(A)$, then for any polynomial $p$, $p(M)\in C(A)$. Then, defining for all $j$ a polynomial $p_j$ such that  $p_j(\lambda_i)=\delta_{ij}$ completes the proof.
\end{proof}
\begin{corollary}\label{cor:directsum}
    If $C^*$ algebra $A\subseteq\lin{\spa{X}}$ has a non-trivial center, then there exists a direct sum decomposition $\spa{X}=\bigoplus_i \spa{X}_i$ such that any $M\in A$ acts invariantly on each subspace $\spa{X}_i$.
\end{corollary}

\paragraph{Proof of Structure Lemma.} We can now proceed with the proof of the Structure Lemma (Lemma~\ref{l:structure}). As in the statement of the claim, let $A\in\herm{\sX\otimes \sY}$ and $B\in\herm{\sY\otimes \sZ}$ such that $[A,B]=0$. For an appropriate choice of operators $\set{A_{ij}},\set{B_{kl}}\subseteq\lin{\sY}$, one can write
\begin{eqnarray}
    A&=&\sum_{ij}\ketbra{i}{j}_{\sX}\otimes (A_{ij})_{\sY}\otimes I_{\sZ}\label{eqn:A}\\
    B&=&\sum_{kl}I_{\sX}\otimes (B_{kl})_{\sY}\otimes \ketbra{k}{l}_{\sZ}\label{eqn:B}.
\end{eqnarray}
\begin{Question}
    Use Equations~\ref{eqn:A} and~\ref{eqn:B} and the fact that $[A,B]=0$ to conclude that for all $i,j,k,l$, we have $[A_{ij},B_{kl}]=0$.
\end{Question}

Consider now algebras $\tA$ and $\tB$ generated by $\set{A_{ij}}$ and $\set{B_{kl}}$, respectively, i.e. $\tA=\set{A_{ij}}^{''}$ and $\tB=\set{B_{kl}}^{''}$, whose elements pairwise commute by Lemma~\ref{l:commute}. Focusing first on $\tA$, assume that $\tA$ has non-trivial center. We reduce this case to one with trivial center, which can then be solved directly. Specifically, by Corollary~\ref{cor:directsum}, we can first decompose $\sY=\bigoplus_i \sY_i$ such that $\tA$ acts invariantly on each $\sY_i$. In order to decouple $A$ and $B$, recall that our goal is to split $\sY_i=\sY_{i1}\otimes \sY_{i2}$ such that $A$ and $B$ act non-trivially only on $\sY_{i1}$ and $\sY_{i2}$, respectively. To this end, let $\tA_{i}$ denote $\tA$ restricted to space $\sY_i$, and assume without loss of generality that the subalgebra $\tA_i$ has trivial center (otherwise, we can decompose the space further). We now apply the following lemma, which is a standard result in the representation theory of $C^*$ algebras.
\begin{lemma}[\cite{KLV00,BV05}]\label{l:tensorprod}
    Let $A\subseteq \lin{\spa{Y}}$ be a $C^*$ algebra with trivial center. Then one can decompose $\spa{Y}=\spa{Y}_1\otimes \spa{Y}_2$ such that $A = \lin{\spa{Y}_1}\otimes I_{\spa{Y}_2}$.
\end{lemma}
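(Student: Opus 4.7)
The plan is to reduce the statement to the Artin--Wedderburn classification of finite-dimensional $C^*$-algebras, and then to make the tensor factorization of $\spa{Y}$ explicit using a system of matrix units inside $A$.

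First I would invoke the structure theorem for finite-dimensional $C^*$-algebras: every such $A$ is $*$-isomorphic to a direct sum of full matrix algebras $\bigoplus_{k} M_{n_k}(\complex)$. The center of such a direct sum is $\bigoplus_k \complex\cdot I_{n_k}$, which reduces to $\complex\cdot I$ if and only if there is a single summand. Hence the hypothesis that $A$ has trivial center forces $A \cong M_n(\complex)$ for some integer $n$; in particular $A$ is a simple $C^*$-algebra (a ``factor'').

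Next, I would transport a system of matrix units $\{e_{ij}\}_{i,j=1}^n \subseteq A$ satisfying $e_{ij}e_{kl}=\delta_{jk}e_{il}$, $e_{ij}^* = e_{ji}$, and $\sum_i e_{ii} = I_{\spa{Y}}$. Set $\spa{Y}_2 := e_{11}\spa{Y}$, choose an orthonormal basis $\{\ket{\phi_\alpha}\}$ of $\spa{Y}_2$, and define vectors $\ket{i,\alpha} := e_{i1}\ket{\phi_\alpha}$ for $i \in [n]$. Using the matrix unit relations, a short calculation verifies that these vectors are orthonormal and span $\spa{Y}$ (since $I = \sum_i e_{i1}e_{1i}$), so they identify $\spa{Y} \cong \spa{Y}_1 \otimes \spa{Y}_2$ with $\spa{Y}_1 \cong \complex^n$. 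In this basis each $e_{ij}$ acts as $\ketbra{i}{j}_{\spa{Y}_1}\otimes I_{\spa{Y}_2}$, and since the $e_{ij}$ linearly span $A$, this gives $A = \lin{\spa{Y}_1} \otimes I_{\spa{Y}_2}$ as required. In particular, one recovers $\dim\spa{Y} = n \cdot \dim\spa{Y}_2$, consistent with $A$ acting as the defining representation of $M_n(\complex)$ with some multiplicity.

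The main obstacle is the first step, i.e.\ the classification theorem itself. A self-contained derivation would proceed by (a) observing that $A$ is semisimple (the $C^*$-condition $\norm{xx^*} = \norm{x}^2$ prevents non-zero nilpotent self-adjoint elements, killing the Jacobson radical), (b) choosing a minimal non-zero projection $p \in A$ and using triviality of the center to show that the two-sided ideal generated by $p$ is all of $A$, and (c) completing $p$ to a full system of matrix units by decomposing $I$ into minimal projections $p_i$ equivalent to $p$ via partial isometries $v_i$ (so $v_iv_i^* = p_i$, $v_i^*v_i = p$) and setting $e_{ij} := v_i v_j^*$. Once this bookkeeping is in place, steps two and three are essentially mechanical.
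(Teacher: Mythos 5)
Your proof is correct and is essentially the standard argument: the paper does not prove Lemma~\ref{l:tensorprod} itself but imports it from~\cite{KLV00,BV05}, where the reasoning is exactly as you describe --- triviality of the center forces $A\cong M_n(\complex)$ by the Artin--Wedderburn structure theorem, and a system of matrix units $\set{e_{ij}}$ then exhibits the factorization with $\spa{Y}_2=e_{11}\spa{Y}$ carrying the multiplicity and $\spa{Y}_1\cong\complex^n$. The only step worth flagging is $\sum_i e_{ii}=I_{\spa{Y}}$, which needs the unit of $A$ to be the identity on all of $\spa{Y}$; this is indeed guaranteed here because the paper defines a trivial center as $C(A)=\set{cI\mid c\in\complex}$ with $I$ the identity on $\spa{Y}$, so $I_{\spa{Y}}\in A$ and serves as its unit.
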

\noindent We hence obtain a decomposition $\sY_i=\sY_{i1}\otimes \sY_{i2}$ such that $\tA_i$ acts non-trivially only on $\sY_{i1}$. In sum, we have thus far shown the following.

\begin{obs}\label{obs:1}
    The algebra $\tA$ is precisely the set of all operators $W\in\lin{\spa{Y}}$ which can be written as $W=\bigoplus_i (W_i)_{\spa{Y}_i}=\bigoplus_i (W_i)_{\sY_{i1}}\otimes I_{\sY_{i2}}$.
\end{obs}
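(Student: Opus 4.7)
The plan is to establish the observation as two set inclusions, leveraging the construction that has just been carried out and the two structural lemmas (Lemma~\ref{l:directsum} and Lemma~\ref{l:tensorprod}). The key conceptual ingredient, which drives both directions, is that the projections $\Pi_i$ onto the summands $\spa{Y}_i$ are themselves elements of $\tA$ (indeed, of its center $C(\tA)$). This fact is implicit in the proof of Lemma~\ref{l:directsum}: the decomposition $\sY = \bigoplus_i \sY_i$ arose as the spectral decomposition of some central element $M \in C(\tA)$, and applying polynomial interpolation $p_j$ with $p_j(\lambda_i) = \delta_{ij}$ to $M$ produces $\Pi_j = p_j(M) \in C(\tA) \subseteq \tA$. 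I would first state this fact explicitly as a preliminary.

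For the forward direction ($W \in \tA$ implies $W$ has the stated form), I would argue as follows. Since each $\Pi_i$ lies in $\tA$ and commutes with every element of $\tA$, any $W \in \tA$ satisfies $\Pi_i W = W \Pi_i$ for all $i$, and hence admits a block-diagonal decomposition $W = \bigoplus_i W_i$ with $W_i \in \lin{\sY_i}$. By construction $W_i$ is the restriction of an element of $\tA$ to the invariant subspace $\sY_i$, so $W_i \in \tA_i$. Since $\tA_i$ has trivial center, Lemma~\ref{l:tensorprod} yields the factorization $\sY_i = \sY_{i1} \otimes \sY_{i2}$ together with $\tA_i = \lin{\sY_{i1}} \otimes I_{\sY_{i2}}$, so $W_i = (V_i)_{\sY_{i1}} \otimes I_{\sY_{i2}}$ for some $V_i \in \lin{\sY_{i1}}$. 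This is exactly the asserted form.

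For the reverse direction, take any $W$ of the form $\bigoplus_i (V_i)_{\sY_{i1}} \otimes I_{\sY_{i2}}$. By Lemma~\ref{l:tensorprod} applied to $\tA_i$, each summand $(V_i)_{\sY_{i1}} \otimes I_{\sY_{i2}}$ is an element of $\tA_i$; thus there exists $W_i' \in \tA$ whose restriction to $\sY_i$ is $(V_i)_{\sY_{i1}} \otimes I_{\sY_{i2}}$. Since $\Pi_i \in \tA$ and $\tA$ is closed under products and sums, the element $\sum_i \Pi_i W_i' \Pi_i$ lies in $\tA$, and evaluating it block by block shows that it equals $W$. Therefore $W \in \tA$.

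The only potentially subtle point, and what I expect to be the main obstacle, is making rigorous the claim that each $\Pi_i$ actually lies in $\tA$ (as opposed to merely in its commutant). This is handled by the polynomial interpolation argument described above, which relies on the finite-dimensional spectral theorem applied to the central element $M$ whose eigenspaces \emph{define} the decomposition $\sY = \bigoplus_i \sY_i$. Once this is in place, the observation is essentially an unpacking of the constructions already assembled: the block decomposition from Lemma~\ref{l:directsum} supplies the direct-sum structure, and Lemma~\ref{l:tensorprod} supplies the tensor-product factorization within each block.
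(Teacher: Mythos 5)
Your proposal is correct and takes essentially the same approach the paper intends: the paper states Observation~\ref{obs:1} without an explicit proof, offering it as a summary of the construction just performed via Corollary~\ref{cor:directsum} and Lemma~\ref{l:tensorprod}. Your two-inclusion argument, and in particular the explicit verification that each projector $\Pi_i$ lies in $C(\tA)\subseteq\tA$ via polynomial interpolation applied to the central element, simply fills in the details the paper leaves implicit.
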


We now use Observation~\ref{obs:1} to uncover the structure of $\tB$.

\begin{Question}\label{q:invar}
    For any $V\in \tB$ and $\spa{Y}_i$ in the decomposition of $\spa{Y}$ above, show that $V$ acts invariantly on $\spa{Y}_i$. Conclude that $V$ can also be written as a direct sum over spaces $\spa{Y}_i$, i.e. that $V=\bigoplus_i (V_i)_{\spa{Y}_i}$. (Hint: Use the fact that all operators in $\tA$ and $\tB$ pairwise commute, and Observation~\ref{obs:1}.)
\end{Question}

Having answered Question~\ref{q:invar}, we can now let $\tB_{i}$ denote $\tB$ restricted to space $\sY_i$. The answer to the following question completes the proof.

\begin{Question}
    Prove that any $V\in \tB_{i}$ has the form $V=I_{\sY_{i1}}\otimes (V_i)_{\sY_{i2}}$. (Hint: Use the fact that all operators in $\tA$ and $\tB$ pairwise commute, and Observation~\ref{obs:1}.)
\end{Question}

\section{Quantum 2-SAT is in P}\label{sscn:Q2SAT}

In this section, we discuss Bravyi's polynomial time algorithm~\cite{B06} for the Quantum 2-SAT problem. To begin, recall that in $k$-SATISFIABILITY ($k$-SAT), one is given as input a set of $k$-local constraints $\set{\Pi_{i}}$ acting on subsets of $k$ binary variables out of a total of $n$ variables $x_1,\ldots, x_n$. Each clause has the form
\[
    \Pi_i = x_{i,1}\vee\cdots\vee x_{i,k},
\]
where each $x_{i,j}$ is a \emph{literal} corresponding to either a variable or its negation. The question is whether there exists an assignment to the variables $x_1,\ldots, x_n$ such that all $\Pi_i$ evaluate to $1$.

The study of $k$-SAT has a long and rich history, which we shall not attempt to survey here. However, as is well-known, SAT is historically the first problem to be proven NP-complete~\cite{C72,L73}. Further, its restricted version $k$-SAT remains NP-complete for $k\geq 3$~\cite{K72}, but is polynomial-time solvable for $k=2$~\cite{K67,EIS76,APT79}. This raises the natural question: Can one define an appropriate quantum generalization of $k$-SAT, and could this generalization also lie in P when $k=2$?

In 2006, Bravyi answered both these questions in the affirmative~\cite{B06}. Here, we define Quantum $k$-SAT as follows.
\begin{definition}[Quantum $2$-SAT (\tq)~\cite{B06}]
        Given a set of $2$-local orthogonal projections $\set{\Pi_{ij}\mid 1\leq i,j,\leq n}$ acting on $n$ qubits, does there exist a satisfying quantum assignment, i.e. does there exist a state $\ket{\psi}\in(\complex^2)^{\otimes n}$ such that $\Pi_{ij}\ket{\psi}=0$ for all $1\leq i,j,\leq n$?
\end{definition}

\noindent Note that unlike in SAT, which would correspond to rank $1$ projections $\Pi_{ij}$, here the projections are allowed to be arbitrary rank. Thus, Bravyi's definition is more accurately a generalization of $2$-CSP, where arbitrary Boolean $2$-local constraints are allowed.

\subsection{The algorithm}

We now discuss Bravyi's algorithm for \tq. We remark that Bravyi's original exposition involved heavy use of tensors, which are perhaps not a typical tool in the computer scientist's toolkit. In contrast, we give a different description of the algorithm in terms of \emph{local filters}~\cite{Gi96} from entanglement theory, which, in our opinion, is arguably more accessible to the quantum computing community.

To begin, Bravyi's algorithm consists of three subroutines (defined subsequently), and can be described at a high level as follows.

\begin{enumerate}
    \item While there exists a constraint $\Pi_{ij}$ of rank at least $2$, run \emph{rankReduction}$(\Pi_{ij})$. If the call fails, reject.
    \item Run \emph{generateConstraints}. If the call succeeds, return to Step 1.
    \item Accept and return the output of \emph{solveSaturatedSystem}.
\end{enumerate}

\noindent Roughly, \emph{rankReduction} outputs a 2-QSAT instance (on a possibly smaller number of qubits) in which all constraints are rank $1$. Once all constraints are rank $1$, \emph{generateConstraints} attempts to add ``new'' constraints which are already implicit in the present constraint system in an attempt to ``saturate'' the system. If the call succeeds, we return to Step 1 to try to once again simplify the system. If, on the other hand, \emph{generateConstraints} fails, then we have arrived at a ``saturated'' system of constraints~\cite{B06}. At this point, we conclude the system is satisfiable; indeed, \emph{solveSaturatedSystem} outputs a satisfying assignment. We shall shortly discuss each of these three procedures in greater detail.

Before doing so, however, let us compare Bravyi's algorithm with known classical algorithms for $2$-SAT. In particular, Bravyi's algorithm can be thought of as a quantum generalization of Krom's 1967 algorithm~\cite{K67}, which we briefly sketch now. Specifically, one first runs a classical version of \emph{generateConstraints} repeatedly, which acts as follows: Given a pair of clauses overlapping on a bit with conflicting literals, say $(x_1\vee x_2)$ and $(\overline{x_2} \vee x_3)$, it produces a new redundant clause $(x_1\vee x_3)$. If we are able to generate a pair of conflicting clauses $(x_i\vee {x_i})$ and $(\overline{x_i}\vee \overline{x_i})$ for some $i$ (checking for this can be seen as a classical version of \emph{rankReduction}), we conclude the instance is unsatisfiable. Otherwise, as in Step 3 of Bravyi's algorithm, we conclude the instance is ``saturated'' and hence satisfiable, and we run a classical version of \emph{solveSaturatedSystem}, which extracts the satisfying assignment. We now discuss the components of Bravyi's algorithm in further depth.\\

\noindent\textbf{\emph{rankReduction}$(\Pi_{ij})$.} Given a constraint $\Pi_{ij}$ of rank at least $2$, act as follows:
\begin{itemize}
    \item If $\rankpi=4$, return \emph{fail}, as no assignment could satisfy this clause. This is analogous to checking in Krom's algorithm whether conflicting clauses $(x_i\vee {x_i})$ and $(\overline{x_i}\vee \overline{x_i})$ have been generated.
    \item If $\rankpi=3$, the assignment to qubits $i$ and $j$ is forced to be $I-\Pi_{ij}$. Set this as their assignment and remove them from the system, updating any $2$-local clauses acting on $i$ or $j$ as necessary.
    \item If $\rankpi=2$, qubits $i$ and $j$ are allowed to live in a $2$-dimensional subspace. Hence, combine $i$ and $j$ into a single merged qubit via an appropriate isometry. Update clauses acting on $i$ or $j$ as necessary.
\end{itemize}

\noindent\textbf{\emph{generateConstraints}.} We now give Bravyi's quantum generalization of Krom's iterative procedure for generating new redundant constraints from existing ones. Specifically, this subroutine examines all triples of qubits $\set{a,b,c}$ on which there exist clauses $\Pi_{ab}=\ketbra{\phi}{\phi}_{ab}$ and $\Pi_{bc}=\ketbra{\phi}{\phi}_{bc}$, and attempts to generate a new clause $\Pi_{ac}$.

To motivate the idea~\cite{B06}, suppose $\phiab=\phibc=\ket{\psi^-}$, for $\ket{\psi^-}=\ket{01}-\ket{10}$ the singlet (we omit normalization for simplicity). Then, since for two qubits $I-\ketbra{\psi^-}{\psi^-}$ projects onto the symmetric space, it follows that any assignment $\ket{\psi}$ must live in the symmetric space on qubits $\set{a,b}$ and $\set{b,c}$, and hence also on $\set{a,c}$. Thus, we can safely add the new (implicit) constraint $\Pi_{ac}=\ketbra{\phi}{\phi}_{ac}$.

Now, what if (say) $\ket{\phi}_{ab}$ is not the singlet? Here, we use the fact that any pure state on two qubits can be produced from the singlet via a \emph{local filter}~\cite{Gi96}. Specifically, there exist linear operators $A,C\in\lin{\complex^2}$ such that
\begin{equation}\label{eq:filters}
    \ket{\phi}_{ab}=A_a\otimes I_b \ket{\psi^-}_{ab}\quad\quad\text{and}\quad\quad    \ket{\phi}_{bc}=I_b\otimes C_c\ket{\psi^-}_{bc}.
\end{equation}
Note that while local filters were originally introduced to \emph{increase} the entanglement of a previously entangled state~\cite{Gi96} (in which case the filter must be invertible, as otherwise one can create entanglement via a local operation from a product state), in this setting, we are \emph{reducing} entanglement using a filter; thus, $A$ and $C$ in general will not be invertible. The following is our analogue of Lemma 1 in Reference~\cite{B06}.

\begin{lemma}\label{l:generate}
    For $\ket{\phi}_{ab}$ and $\ket{\phi}_{bc}$ as in Equation~\ref{eq:filters}, suppose $\ket{\psi}\in(\complex^2)^{\otimes n}$ satisfies $\braket{\psi}{\phi_{ab}}=\braket{\psi}{\phi_{bc}}=0$. Then, the constraint $\ket{\phi}_{ac}=A_a\otimes C_c\ket{\psi^-}$ on $\set{a,c}$ satisfies $\braket{\psi}{\phi_{ac}}=0$.
\end{lemma}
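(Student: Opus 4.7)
The plan is to reduce the general claim to a clean ``pure singlet'' base case where $A = C = I$: if $\ket{\psi}\in(\complex^2)^{\otimes n}$ satisfies $(\bra{\psi^-}_{ab}\otimes I_{\text{rest}})\ket{\psi}=0$ and $(\bra{\psi^-}_{bc}\otimes I_{\text{rest}})\ket{\psi}=0$, then $(\bra{\psi^-}_{ac}\otimes I_{\text{rest}})\ket{\psi}=0$. To reduce, I would introduce the filtered state $\ket{\psi''} := (A^\dagger_a \otimes C^\dagger_c)\ket{\psi}$ (with identity on the other qubits). Because $A^\dagger_a$ and $C^\dagger_c$ act on disjoint single qubits, each commutes past the partial inner product on the two-qubit region not containing it; combined with $\bra{\phi}_{ab} = \bra{\psi^-}_{ab}(A^\dagger_a \otimes I_b)$ and $\bra{\phi}_{bc} = \bra{\psi^-}_{bc}(I_b \otimes C^\dagger_c)$, the hypotheses immediately translate into $(\bra{\psi^-}_{ab}\otimes I_{\text{rest}})\ket{\psi''} = 0$ and $(\bra{\psi^-}_{bc}\otimes I_{\text{rest}})\ket{\psi''} = 0$. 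Applying the base case to $\ket{\psi''}$ gives $(\bra{\psi^-}_{ac}\otimes I_{\text{rest}})\ket{\psi''} = 0$; pulling $A^\dagger_a \otimes C^\dagger_c$ back onto $\bra{\psi^-}_{ac}$ produces precisely $\bra{\phi}_{ac}$, so $\braket{\psi}{\phi_{ac}} = 0$.

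It remains to prove the base case, which I would do with the symmetric-subspace / swap-operator trick already hinted at in the text preceding the lemma. On any two qubits $i,j$, the complement $I - \ketbra{\psi^-}{\psi^-}_{ij}$ is the projector onto the symmetric subspace, equal to $(I + S_{ij})/2$ for $S_{ij}$ the swap. The base-case hypotheses thus read $(S_{ab}\otimes I_{\text{rest}})\ket{\psi} = \ket{\psi}$ and $(S_{bc}\otimes I_{\text{rest}})\ket{\psi} = \ket{\psi}$. The three-qubit identity $S_{ac} = S_{ab}S_{bc}S_{ab}$ (check on basis vectors) then forces $(S_{ac}\otimes I_{\text{rest}})\ket{\psi} = \ket{\psi}$, so $\ket{\psi}$ is symmetric on $\{a,c\}$ and hence orthogonal to $\ket{\psi^-}_{ac}$ as required.

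As I see it, the only real hurdle is notational: expressions like ``$\braket{\psi}{\phi_{ab}} = 0$'' must be parsed as a partial inner product on the remaining $n-2$ qubits evaluating to the zero vector, and one has to keep track of which factors in $A^\dagger_a \otimes C^\dagger_c$ commute past which partial inner product. Once this bookkeeping is settled, the proof is really just the change of variables $\ket{\psi} \mapsto \ket{\psi''}$ composed with the swap identity, and should fit comfortably in a few lines.
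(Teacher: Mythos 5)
Your proof is correct and follows essentially the same route as the paper's: both pass to the filtered state $(A_a^\dagger\otimes C_c^\dagger)\ket{\psi}$, observe it lies in the symmetric subspace on $\{a,b\}$ and on $\{b,c\}$, and conclude it is symmetric on $\{a,c\}$, hence orthogonal to $\ket{\psi^-}_{ac}$. The only (welcome) addition is that you justify the step ``symmetric on $\{a,b\}$ and $\{b,c\}$ implies symmetric on $\{a,c\}$'' explicitly via the swap identity $S_{ac}=S_{ab}S_{bc}S_{ab}$, which the paper asserts without proof.
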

\begin{proof}
    Suppose for assignment $\ket{\psi}\in(\complex^2)^{\otimes n}$ that
\[
    \braket{\psi}{\phi_{ab}}=\bra{\psi}A_a\otimes I_b \ket{\psi^-}_{ab}=0.
\]
This implies that $A_a^\dagger\otimes I_b\ket{\psi}$ lives in the symmetric subspace on qubits $a$ and $b$. An analogous argument implies that $I_b\otimes C^\dagger_c\ket{\psi}$ lives in the symmetric subspace on $b$ and $c$. It follows that $A_a^\dagger\otimes C_c^\dagger\ket{\psi}$ lives in the symmetric subspace of both $\set{a,b}$ and $\set{b,c}$, and hence also of $\set{a,c}$. Thus, $\braket{\psi}{\phi_{ac}}=\bra{\psi}A_a\otimes C_c\ket{\psi^-}=0$, as desired.
\end{proof}

\noindent\textit{Remark.} For the reader interested in comparing Lemma~\ref{l:generate} above directly with Lemma 1 of Reference~\cite{B06}, the correspondence is given by $A=\phi\epsilon^\dagger$ and $C=\theta^T\epsilon$. This is easily seen by using the $\operatorname{vec}$ mapping~\cite{W08_2} (where roughly, $\operatorname{vec}$ ``reshuffles'' matrices into vectors) and its property that $A\otimes B\operatorname{vec}(X)=\operatorname{vec}(AXB^T)$.

In sum, \emph{generateConstraints} applies Lemma~\ref{l:generate} on triples of qubits until it generates a new clause on some pair of qubits $\set{a,c}$ which is linearly independent from existing clauses on $\set{a,c}$. If no such clause is found, the subroutine returns \emph{fail}.\\

\noindent\textbf{\emph{solveSaturatedSystem}.} A \emph{saturated} system of constraints is one in which (1) all constraints are rank $1$, and (2) for any triple of qubits $\set{a,b,c}$, Lemma~\ref{l:generate} fails to produce a new linearly independent constraint on $\set{a,c}$. We now give our analogue of Lemma 2 of Reference~\cite{B06}, which is in turn a quantum analogue of Krom's classical procedure for extracting a satisfying assignment from a ``saturated'' classical $2$-SAT system.

\begin{lemma}\label{l:homog}
    For any saturated system of constraints $\set{\Pi_{ij}}$, there exists an efficiently computable product state $\ket{\psi}=\bigotimes_{i=1}^n\ket{\psi_i}$ with $\ket{\psi_i}\in\complex^2$ such that $\Pi_{ij}\ket{\psi}=0$ for all $i,j$.
\end{lemma}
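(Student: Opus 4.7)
The approach is to greedily construct the product state by propagating assignments along the interaction graph, with the saturation condition guaranteeing that the propagation is never contradicted. First I would fix notation: write each rank-$1$ constraint as $\Pi_{ij} = \ketbra{\phi_{ij}}{\phi_{ij}}$ with $\ket{\phi_{ij}} = (A_{ij} \otimes B_{ij})\ket{\psi^-}$ via local filtering. Since $\ket{\psi^-}$ is orthogonal exactly to the symmetric subspace, and $\ket{u}\otimes\ket{v}$ lies in the symmetric subspace iff $\ket{u} \propto \ket{v}$, a product state $\bigotimes_k \ket{\psi_k}$ satisfies $\Pi_{ij}$ precisely when $A_{ij}^\dagger \ket{\psi_i}$ is proportional to $B_{ij}^\dagger \ket{\psi_j}$. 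This turns every constraint into a local ``parallel-transport'' rule between single-qubit states.

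Next I would describe the assignment. For each connected component of the interaction graph, pick a root qubit $r$ and assign it an arbitrary state $\ket{\psi_r}$, chosen generically so as to avoid the kernels of the filters $A^\dagger$ attached at $r$. Then run BFS: when reaching a new qubit $j$ across an edge carrying $\Pi_{ij}$ from an already-labeled neighbor $i$, solve $B_{ij}^\dagger \ket{\psi_j} \propto A_{ij}^\dagger \ket{\psi_i}$ for $\ket{\psi_j}$ --- this determines $\ket{\psi_j}$ up to an overall scalar (and up to $\ker B_{ij}^\dagger$, which is harmless). Whenever $A_{ij}^\dagger\ket{\psi_i} = 0$, the edge is already satisfied and the propagation may choose $\ket{\psi_j}$ freely. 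Isolated qubits are assigned arbitrarily. The total cost is clearly polynomial in $n$.

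The crux is consistency: when the propagation revisits qubit $c$ via a cycle, the newly prescribed value must coincide with the one already fixed. Suppose toward contradiction that two paths from a commonly labeled vertex, say $r$ to $c$, induce inconsistent values. I would iteratively apply Lemma~\ref{l:generate} along each path, collapsing it edge by edge into a derived rank-$1$ constraint on $\set{r,c}$. The saturation hypothesis forces each intermediate triple-reduction to yield a clause that is linearly dependent on the clauses already present, so both induced constraints on $\set{r,c}$ land in the same at-most-one-dimensional span. But then the two induced parallelism rules at $c$ coincide, contradicting the supposed disagreement between the two values of $\ket{\psi_c}$. Generic choice of $\ket{\psi_r}$ ensures the filtered vectors $A^\dagger\ket{\psi_i}$ never accidentally vanish on the relevant path, so Lemma~\ref{l:generate} is applicable at every step.

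The main obstacle is exactly this consistency argument --- specifically, proving that the composition of the local rules along a path matches the ``derived'' constraint produced by iterated applications of Lemma~\ref{l:generate}, so that saturation really does rule out the bad case. A secondary subtlety is handling degenerate edges where $A_{ij}^\dagger$ or $B_{ij}^\dagger$ has rank $1$ or $0$; this either makes the edge automatically satisfied (and removable from the propagation), or constrains one of the qubits to a unique state, both of which can be folded into the BFS with a careful case analysis.
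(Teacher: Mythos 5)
Your setup (rewriting each rank-one clause as a local-filter image of the singlet, so that a clause becomes a ``parallel-transport'' rule between single-qubit states) matches the paper's, but the cycle-consistency step---which you yourself flag as the crux---does not go through as described, and the failure mode points to the actual mechanism of the proof. Saturation says that the clause produced by Lemma~\ref{l:generate} on a pair $\set{a,c}$ is linearly dependent on the clauses already present on that pair. When $\set{a,c}$ is \emph{not} an edge (the typical situation for the endpoints of the two arcs of a cycle, e.g.\ $r$--$i_1$--$c$--$i_2$--$r$ with $r,c$ non-adjacent), there are no existing clauses, so linear dependence on the empty set forces the generated constraint to be the \emph{zero vector}: $A\otimes C\ket{\psi^-}=0$. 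Both of your collapsed path-constraints on $\set{r,c}$ therefore vanish; they ``land in the same span'' only vacuously, and nothing then forces the two transported values at $c$ to coincide. In general they will not coincide, so the contradiction you aim for never materializes.

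What rescues the construction is a different phenomenon, which is exactly what the paper computes: combining Lemma~\ref{l:a} with the identity $A\otimes C\ket{\psi^-}=0$ yields $\braket{\phi_{i,j}}{\psi_i}=0$ for every neighbor $q_i$ of the root and every $q_j$ at distance two. That is, once the root and its neighbors are assigned (root $\mapsto\ket{0}$, neighbor $q_i\mapsto E\overline{\ket{\phi_i}}$ with $\ket{\phi_i}=\braket{\psi_1}{\phi_{1,i}}$), every clause leaving the closed neighborhood is satisfied \emph{regardless} of the assignment on its far endpoint, while clauses between two neighbors of the root are handled by Lemma~\ref{l:generate} together with saturation. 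The algorithm then discards all clauses touching the closed neighborhood and restarts from a fresh, arbitrarily assigned root; no propagation beyond distance one and no consistency check around cycles is ever needed. In other words, the case you set aside as degenerate ($A_{ij}^\dagger\ket{\psi_i}=0$, ``choose $\ket{\psi_j}$ freely'') is not a corner case to be folded into the BFS: proving that it \emph{always} occurs across the boundary of the assigned region is the real content of the lemma, and it is precisely the piece your proposal leaves unproved.
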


In order to prove Lemma~\ref{l:homog}, we first require the following.
\begin{lemma}\label{l:a}
    Let $E:=iY$ for Pauli operator $Y$. Then, for any $A\in\lin{\complex^2}$,
    \[
        A\otimes I\ket{\psi^-}=I\otimes E^\dagger A^T E\ket{\psi^-}.
    \]
\end{lemma}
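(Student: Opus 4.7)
}
My plan is to reduce the claim to the well-known ``transpose trick'' for the unnormalized maximally entangled state $\ket{\Phi^+}:=\ket{00}+\ket{11}$, which says that $A\otimes I\ket{\Phi^+}=I\otimes A^T\ket{\Phi^+}$ for all $A\in\lin{\complex^2}$. This identity follows immediately from the fact that $\ket{\Phi^+}=\sum_i\ket{i}\ket{i}$, which allows one to ``slide'' linear maps across the tensor factor at the cost of taking a transpose.

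The first step will be to relate the singlet $\ket{\psi^-}$ to $\ket{\Phi^+}$ via $E$. Using $E=iY$ and the action $E\ket{0}=-\ket{1},\, E\ket{1}=\ket{0}$, together with $E^\dagger=-E$, a direct calculation gives
\[
(I\otimes E^\dagger)\ket{\Phi^+}=\ket{\psi^-}.
\]
Second, I will record the elementary fact that $E$ is (proportional to) a unitary, so $EE^\dagger=I$; this will be needed to ``undo'' one factor of $E$ at the end.

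The proof is then a short chain of equalities. Starting from the right-hand side and applying the first step,
\[
I\otimes E^\dagger A^T E\,\ket{\psi^-}
=(I\otimes E^\dagger A^T E)(I\otimes E^\dagger)\ket{\Phi^+}
=I\otimes (E^\dagger A^T\,EE^\dagger)\ket{\Phi^+}
=I\otimes E^\dagger A^T\ket{\Phi^+},
\]
where the last equality uses $EE^\dagger=I$. Now applying the transpose trick to pull $A^T$ back across the tensor product, and then using the first step a second time,
\[
I\otimes E^\dagger A^T\ket{\Phi^+}
=(I\otimes E^\dagger)(I\otimes A^T)\ket{\Phi^+}
=(I\otimes E^\dagger)(A\otimes I)\ket{\Phi^+}
=(A\otimes I)(I\otimes E^\dagger)\ket{\Phi^+}
=A\otimes I\,\ket{\psi^-},
\]
which is the claimed identity.

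There is no real obstacle here; the only thing to be careful about is the sign conventions, since $E^\dagger=-E$ and so one could easily flip a sign when writing $\ket{\psi^-}$ as $(I\otimes E^\dagger)\ket{\Phi^+}$ versus $(I\otimes E)\ket{\Phi^+}$. As a sanity check (and as an alternative, purely computational proof), one can simply expand $A=\sum_{i,j}A_{ij}\kbb{i}{j}$, act on $\ket{\psi^-}=\ket{01}-\ket{10}$, and verify the four amplitudes against those of $I\otimes E^\dagger A^T E\ket{\psi^-}$; this gives a coordinate-level confirmation that both sides equal $-A_{12}\ket{00}+A_{11}\ket{01}-A_{22}\ket{10}+A_{21}\ket{11}$.
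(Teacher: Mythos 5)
Your proof is correct and is essentially the same argument as the paper's: both rest on the relation $\ket{\psi^-}=(I\otimes E^\dagger)\ket{\Phi^+}$ (equivalently $I\otimes E\ket{\psi^-}=\ket{\phi^+}$), the unitarity of $E$, and the transpose trick $A\otimes I\ket{\Phi^+}=I\otimes A^T\ket{\Phi^+}$. The only difference is cosmetic --- you run the chain of equalities from the right-hand side to the left, while the paper inserts $E^\dagger E$ on the left-hand side and works forward.
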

\begin{proof}
    Let $\ket{\phi^+}=(\ket{00}+\ket{11})/\sqrt{2}$. Then,
    \begin{eqnarray*}
        A\otimes I\ket{\psi^-}&=&(I\otimes E^\dagger)(A\otimes E)\ket{\psi^-}\\&=&(I\otimes E^\dagger)(I\otimes A^T)\ket{\phi^+}\\&=&I\otimes E^\dagger A^T E\ket{\psi^-},
    \end{eqnarray*}
    where the first equality follows since $E^\dagger E=I$, the second since $I\otimes E\ket{\psi^-}=\ket{\phi^+}$ and $A\otimes I\ket{\phi^+}=I\otimes A^T\ket{\phi^+}$, and the third again since $I\otimes E\ket{\psi^-}=\ket{\phi^+}$.
\end{proof}

\begin{proof}[{Proof of Lemma~\ref{l:homog}}]
    We give a simple deterministic polynomial time algorithm which outputs $\ket{\psi}$. Pick an arbitrary qubit, $q_1$, and set its assignment to $\ket{0}$, i.e. set $\ket{\psi_1}=\ket{0}$. Now consider the neighbor set of $q_1$, denoted $N(q_1)$. For any $q_i\in N(q_1)$, suppose the forbidden space is spanned by $\ket{\phi_{1,i}}$. Then, observing that $\ket{\phi_i}:=\braket{\psi_1}{\phi_{1,i}}\in\complex^2$ and that $\bra{v}E\overline{\ket{v}}=0$ for any $\ket{v}\in \complex^2$ and where $\overline{\ket{v}}$ denotes the entry-wise complex conjugate of $\ket{v}$, it follows that $\bra{\phi_{1,i}}(\ket{\psi_1}\otimes E\overline{\ket{\phi_i}})=0$. Thus, setting $\ket{\psi_i}=E\overline{\ket{\phi_i}}$ satisfies all clauses between $q_1$ and its neighbors. Moreover, by Lemma~\ref{l:generate}, we  have that any clause between distinct qubits $q_i,q_j\in N(q_1)$ is also satisfied by this assignment.

    Let $S$ denote the set of qubits whose shortest path from $q_1$ is precisely $2$ in the interaction graph, i.e. $S=N(N(q_1))\backslash (N(q_1)\cup \set{q_1})$. If we can now show that for all $q_i\in N(q_1)$ and $q_j\in S$, the clause $\ket{\phi_{i,j}}$ is satisfied by the current assignment regardless of the assignment on $q_j$, then note that the proof is complete, as we can simply iterate the argument above by discarding all clauses which act on qubits in $\set{q_1\cup N(q_1)}$ and choosing a new starting vertex $q_j \in S$.

    Thus, we now prove that for all $q_i\in N(q_1)$ and $q_j\in S$, $\braket{\phi_{i,j}}{\psi_i}=0$. Let the clauses on $(1,i)$ and $(i,j)$ be given by
    \[
        \ket{\phi_{1,i}}=A_1\otimes I_i \ket{\psi^-}\quad\quad\text{and}\quad\quad\ket{\phi_{i,j}}=I_i\otimes C_j \ket{\psi^-}.
    \]
    Then, analogous to Reference~\cite{B06}, the key observation is that since $(1,j)$ is not an edge, then by Lemma~\ref{l:generate}, we must have $A\otimes C\ket{\psi^-}=0$. This, along with Lemma~\ref{l:a}, together imply:
    \begin{eqnarray*}
        \braket{\phi_{i,j}}{\psi_i} &=& (\bra{\psi^-}I_i\otimes C^\dagger_j )((\overline{\bra{\psi_1}}\otimes E_i)\overline{\ket{\phi_{1,i}}})\\
        &=&(\bra{\psi^-}I_i\otimes C^\dagger_j )((\overline{\bra{\psi_1}}\otimes E_i)(\overline{A}_1\otimes I_i \ket{\psi^-}))\\
        &=&(\bra{\psi^-}I_i\otimes C^\dagger_j )((\overline{\bra{\psi_1}}\otimes E_i)(I_1\otimes E_i^\dagger A_i^\dagger E_i\ket{\psi^-}))
        \\
        &=&(\bra{\psi^-}A_i^\dagger\otimes C^\dagger_j )(\overline{\bra{\psi_1}}\otimes E_i)\ket{\psi^-}\\
        &=&0.
    \end{eqnarray*}
\end{proof}

\section{Area laws for one-dimensional gapped quantum systems}\label{sscn:arealaw}

In this section, we review Arad, Kitaev, Landau and Vazirani's combinatorial proof~\cite{ALV12,AKLV13} of Hastings'~\cite{Ha07} 1D area law for gapped systems. Specifically, we consider a $1$D chain of $n$ quantum systems governed by a gapped local Hamiltonian $H$. By \emph{gapped}, we mean that
the difference between the smallest and second-smallest eigenvalues of $H$ is lower bounded by a constant $\epsilon>0$, and that the ground state is unique.

Let us begin by clarifying the statement regarding $H$ to be proven.  First, note that the statement of an area law is simple in the $1$D case, since by definition the surface area of any contiguous region in a 1D system must be constant (i.e. either $1$ or $2$). It follows that proving an area law for $1$D systems is equivalent to proving that for any $1\leq i\leq n$, the entanglement entropy of the ground state between particles $1,\ldots,i$ and particles $i+1,\ldots,n$ is bounded above by a constant. Here, the entanglement entropy is defined as the von Neumann entropy of the reduced density matrix on particles $1,\ldots,i$. Without loss of generality, it suffices to fix an arbitrary cut and prove the statement there; let $C$ denote this cut.

\paragraph{High-level overview.} Conceptually, the proof consists of two steps. First, it is shown that there exists a product state with respect to cut $C$ which has reasonably good, i.e.~constant, overlap with the ground state of $H$. Second, we show how one can ``transform'' such a product state to a much better approximation of the ground state \emph{without} increasing the entanglement entropy across $C$ too much. (Recall that our goal is to prove that in the ground state, the entanglement entropy across cut $C$ is constant.) In the proof of~\cite{ALV12, AKLV13}, both of these steps depend crucially on a theoretical construct called an \emph{Approximate Ground-Space Projection (AGSP)}, and the bulk of the work goes into the construction of an AGSP with sufficiently good parameters. In contrast, we remark that Hastings' original proof~\cite{Ha07} also follows the broad outline of the above two steps, but realizes them using different physics-inspired techniques such as the Lieb-Robinson bound or ``monogamy of entanglement''-type arguments.\\

For simplicity, in this review we focus on the simpler case of frustration-free Hamiltonians. Readers interested in the proof of the frustrated case are referred to \S6 of \cite{AKLV13}.

\paragraph{Organization.} In \S\ref{ssscn:AGSP}, we first define an AGSP. Then, we show that if a good AGSP and product state with non-trivial overlap onto the ground state of $H$ exist, then an area law holds (Lemma~\ref{l:arealaw1}). In \S\ref{ssscn:AGSPprod}, we show that the existence of a good AGSP already implies a good product state (Lemma~\ref{l:arealaw2}). Finally, \S\ref{ssscn:AGSPconstruct} constructs a good AGSP.

\subsection{Approximate Ground-Space Projection (AGSP)}\label{ssscn:AGSP}

We now motivate and define an AGSP. Specifically, returning to the second step of the high-level proof overview discussed above, we ask: Given a product state $\psiprod$ with non-trivial overlap with the ground state of $H$, how can we map $\psiprod$ to a good approximation of the ground state? One obvious idea is to apply to $\psiprod$ the projection onto the ground space, thus obtaining a scaled-down version of the ground state. Unfortunately, this approach does not give us a way to bound the amount of entanglement generated across cut $C$ when the projection is applied. The main idea behind an AGSP is to only \emph{approximately} project onto the ground space; in return, we obtain a rigorous bound on how the entanglement grows with each application of the AGSP.

\begin{definition}\label{def:agsp}
    An operator $K$ is said to be a $(D,\Delta)$-AGSP if the following conditions hold:
\begin{enumerate}
\item \textbf{Ground space invariance:} For any ground state $\ket\Gamma, K\ket\Gamma=\ket\Gamma$.
\item \textbf{Shrinking:} If $\ket{\Gamma^\perp}$ is any state orthogonal to the ground space, then $K\ket{\Gamma^\perp}$ is also orthogonal to the ground space, and moreover $||K\ket{\Gamma^\perp}||^2\leq \Delta$.
\item \textbf{Entanglement:} The Schmidt rank of $K$ across the given cut is at most $D$. Note that the Schmidt rank of an operator $A$ is defined as the smallest integer $m$ such that $A$ can be written in the form $A=\sum_{1\leq i\leq m} L_i\otimes R_i$.
\end{enumerate}
\end{definition}
\noindent In other words, if there exists an AGSP with good parameters, i.e.~small $D$ and $\Delta$, we can repeatedly apply it to $\psiprod$ until we obtain a good approximation to the ground state. Note that although the action of an AGSP will converge to that of the exact ground state projection as the number of iterations goes to infinity, a \emph{finite} number of iterations reveals a delicate tradeoff between proximity to the ground state and a bound on the entanglement entropy.

\paragraph{Using an AGSP with a good product state.} The high-level lemma which stitches together the various parts of the proof is the following. In words, it says that the existence of a good AGSP and a good product state suffices to establish an area law.

\begin{lemma}[Arad, Landau, and Vazirani~\cite{ALV12}]\label{l:arealaw1}
Let $\ket\Gamma$ be the ground state of $H$ and $\ket\phi$ a product state such that $|\braket{\phi}{\Gamma}|=\mu$. Then, the existence of a $(D,\Delta)$-AGSP $K$ implies that the entanglement entropy $S$ of $\ket\Gamma$ is bounded by
\[
S\leq O(1)\cdot \frac{\log \mu^2}{\log \Delta}\log D.
\]
\end{lemma}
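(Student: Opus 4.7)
The plan is to apply the AGSP $K$ iteratively to the product state $\ket{\phi}$ and use the resulting sequence of low-Schmidt-rank, high-overlap approximations to $\ket\Gamma$ to control the tail of $\ket\Gamma$'s Schmidt spectrum across the cut $C$. First, decompose $\ket{\phi} = \mu\ket\Gamma + \sqrt{1-\mu^2}\,\ket{\Gamma^\perp}$. Applying $K$ $\ell$ times and using the ground-space invariance and shrinking properties of Definition~\ref{def:agsp}, we get $K^\ell \ket{\phi} = \mu\ket\Gamma + \sqrt{1-\mu^2}\, K^\ell\ket{\Gamma^\perp}$, where the second summand is orthogonal to the ground space and has squared norm at most $(1-\mu^2)\Delta^\ell \leq \Delta^\ell$. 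Normalizing $\ket{\phi_\ell} := K^\ell \ket\phi / \|K^\ell \ket\phi\|$ yields $|\braket{\phi_\ell}{\Gamma}|^2 \geq \mu^2/(\mu^2 + \Delta^\ell) \geq 1 - \Delta^\ell/\mu^2$. Since $K$ has Schmidt rank at most $D$ across $C$ and $\ket{\phi}$ is a product state across $C$, a standard sub-multiplicativity argument gives that $\ket{\phi_\ell}$ has Schmidt rank at most $D^\ell$.

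Second, I would invoke the classical ``best rank-$r$ approximation'' fact: if $\ket\Gamma$ has Schmidt coefficients $\sqrt{p_1}\geq\sqrt{p_2}\geq\cdots$ across $C$, then for any state $\ket{\psi}$ of Schmidt rank at most $r$, one has $|\braket{\psi}{\Gamma}|^2 \leq \sum_{i=1}^r p_i$. Applied to $\ket{\phi_\ell}$ this yields the crucial tail bound
\[
    \sum_{i > D^\ell} p_i \;\leq\; \frac{\Delta^\ell}{\mu^2} \quad \text{for every } \ell \geq 0.
\]
This reduces the problem to an entirely combinatorial question: bound the Shannon entropy of a probability distribution $\{p_i\}$ satisfying these geometric tail bounds.

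Third, I would perform a dyadic decomposition of the Schmidt index set, partitioning indices into blocks $B_\ell = \{i : D^{\ell-1} < i \leq D^\ell\}$ for $\ell \geq 1$ (with $B_0 = \{1\}$). Letting $\pi_\ell = \sum_{i \in B_\ell} p_i$, the tail bound gives $\pi_\ell \leq \Delta^{\ell-1}/\mu^2$. Inside each block, the maximum-entropy distribution with fixed total mass $\pi_\ell$ and support size at most $D^\ell$ contributes at most $\pi_\ell \ell \log D + \pi_\ell \log(1/\pi_\ell)$ to the entropy (by the log-sum inequality). Summing over $\ell$ then gives
\[
    S \;\leq\; \log D \sum_\ell \ell\, \pi_\ell \;+\; \sum_\ell \pi_\ell \log(1/\pi_\ell).
\]
The cross-over point is at $\ell_0 := \log(1/\mu^2)/\log(1/\Delta)$, the smallest $\ell$ for which the tail bound becomes non-trivial ($\Delta^{\ell_0}=\mu^2$). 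For $\ell \leq \ell_0$ the bound is vacuous but $\sum_{\ell \leq \ell_0} \ell\pi_\ell \leq \ell_0$, while for $\ell > \ell_0$ the geometric decay $\pi_\ell \leq \Delta^{\ell-\ell_0-1}$ makes both series sum to $O(\ell_0)$ terms (with constants depending only on $1/(1-\Delta)$). Combining yields $S \leq O(1) \cdot \ell_0 \log D = O(1) \cdot (\log \mu^2/\log\Delta)\log D$, as claimed.

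The main obstacle is the third step: keeping the arithmetic of the dyadic summation clean while extracting precisely the ratio $\log\mu^2/\log\Delta$ rather than a weaker bound. One must be careful that the ``entropy-of-a-block'' contribution $\pi_\ell \log(1/\pi_\ell)$ does not swamp the dominant $\pi_\ell\ell\log D$ term, which requires noting that $\log(1/\pi_\ell) \leq (\ell-1)\log(1/\Delta) + \log(1/\mu^2) = O(\ell\log(1/\Delta))$ and hence contributes at the same order as (and can be absorbed into) the leading term after summation. Steps~1 and~2 are essentially bookkeeping once Definition~\ref{def:agsp} is in hand.
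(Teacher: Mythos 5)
Your overall route is the same as the paper's: iterate the AGSP on the product state to get, for each $\ell$, a state of Schmidt rank at most $D^\ell$ with overlap squared at least $\mu^2/(\mu^2+\Delta^\ell)$ with $\ket{\Gamma}$, invoke Eckart--Young to convert this into the tail bound $\sum_{i>D^\ell}p_i\leq\Delta^\ell/\mu^2$, and then bound the entropy block by block over a geometric partition of the Schmidt index set. Steps 1 and 2 are correct and identical to the paper's (the paper uses blocks $(D^{2\ell},D^{2(\ell+1)}]$ plus a bulk block of the first $D^{4\ell_0}$ indices, but that difference is immaterial). However, there are two concrete problems in your Step 3. First, the inequality $\log(1/\pi_\ell)\leq(\ell-1)\log(1/\Delta)+\log(1/\mu^2)$ goes the wrong way: the tail bound gives an \emph{upper} bound $\pi_\ell\leq\Delta^{\ell-1}/\mu^2$, which yields a \emph{lower} bound on $\log(1/\pi_\ell)$, not an upper bound (a block could carry exponentially less mass than the tail bound allows, making $\log(1/\pi_\ell)$ arbitrarily large). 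The term $\pi_\ell\log(1/\pi_\ell)$ must instead be controlled via the monotonicity of $x\mapsto x\log(1/x)$ on $(0,1/e)$: for blocks where $\Delta^{\ell-1}/\mu^2<1/e$ you may substitute the upper bound into $x\log(1/x)$, and for the remaining $O(\ell_0)$ blocks you use the universal bound $x\log(1/x)\leq(\log e)/e$. This is exactly the bookkeeping the paper carries out in its per-block estimate $\Delta^{\ell-\ell_0}(\ell-\ell_0)\log\bigl(D^{2(\ell+1)/(\ell-\ell_0)}/\Delta\bigr)$, so your conclusion survives, but as written the justification is incorrect.

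Second, your final constant is not the absolute $O(1)$ of the lemma statement: the geometric sums you invoke contribute factors of $1/(1-\Delta)$ and $1/(1-\Delta)^2$, which blow up as $\Delta\to1$, and you cannot absorb them into $O(1)$. The paper closes this gap with a normalization step you are missing: for $k=\lceil-1/\log\Delta\rceil$, the operator $K^k$ is a $(D^k,\Delta^k)$-AGSP with $\frac14\leq\Delta^k\leq\frac12$, so all $\Delta$-dependent constants become absolute, while the leading term is invariant under this substitution since $\frac{\log\mu^2}{\log\Delta^k}\log D^k=\frac{\log\mu^2}{\log\Delta}\log D$. With that final step added (and the $\pi_\ell\log(1/\pi_\ell)$ estimate repaired as above), your argument matches the paper's proof.
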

\begin{proof}
Let $\ket\Gamma=\sum_i \lambda_i \ket {L_i}\otimes \ket{R_i}$ be the Schmidt decomposition of the ground state $\ket\Gamma$, where the Schmidt coefficients $\lambda_i$ are in decreasing order. Then, the entanglement entropy is defined as $-\sum_i \lambda_i^2\log\lambda_i^2$. To bound this quantity, we consider the family of states $K^\ell\ket\phi$. By Definition~\ref{def:agsp}, we know that these states satisfy the following two properties:
\begin{enumerate}
\item The Schmidt rank of $K^\ell\ket\phi$ is at most $D^\ell$.
\item The inner product between $K^\ell\ket\phi$ and $\ket\Gamma$ is at least $\mu/\sqrt{\mu^2+\Delta^\ell(1-\mu^2)}$.
\end{enumerate}

We now use these facts to bound the entropy in two steps. We show the first of these steps, and guide the reader through the second step via a sequence of questions. Our starting point is the Eckart-Young theorem \cite{EY36}, which implies that the magnitude of the inner product between $\ket\Gamma$ and any normalized state with Schmidt rank $r$ is upper-bounded by the Euclidean norm of the vector of the first $r$ Schmidt coefficients, i.e. by $(\sum_{i=1}^r \lambda_i^2)^{1/2}$. Therefore,
\[
\sum_{i\leq D^\ell} \lambda_i^2\geq \frac{\mu^2}{\mu^2+\Delta^\ell (1-\mu^2)}\geq \frac{\mu^2}{\mu^2+\Delta^\ell},
\]
where the first inequality follows from the Eckart-Young theorem and point $2$ above. This implies that
\[
\sum_{i> D^\ell} \lambda_i^2\leq 1-\frac{\mu^2}{\mu^2+\Delta^\ell}=\frac{\Delta^\ell}{\mu^2+\Delta^\ell}\leq \frac{\Delta^\ell}{\mu^2} =: p_\ell.
\]
If we now choose $\ell_0=\lceil\frac{\log \mu^2}{\log \Delta}\rceil$ so that $p_{\ell_0}<1$, it follows that the contribution of entropy from the first $D^{4\ell_0}$ Schmidt coefficients is at most
\[
-\sum_{i\leq D^{4\ell_0}} \lambda_i^2\log\lambda_i^2 \leq \log D^{4\ell_0}=4\ell_0\log D=O(1)\cdot \frac{\log\mu^2}{\log \Delta}\log D,
\]
where the first inequality follows since the entropy of a (sub)normalized $d$-dimensional vector is at most $\log d$.

The next step is to bound the contribution to the entropy of the remaining Schmidt coefficients, which the following two questions guide the reader through.

\begin{Question}\label{q:arealaw1}
Show that if $\ell> \ell_0$, the contribution of entropy from the $(D^{2\ell}+1)$-th to $D^{2(\ell+1)}$-th Schmidt coefficients is at most $\Delta^{\ell-\ell_0}(\ell-\ell_0)\log \frac{D^{2(\ell+1)/(\ell-\ell_0)}}{\Delta}$.
\end{Question}
\begin{Hint}
Use the fact that $\sum_{i=D^{2\ell}+1}^{D^{2(\ell+1)}}\lambda_i^2\leq p_{2\ell}\leq p_\ell$.
\end{Hint}
\begin{Question}
Using the answer to Question \ref{q:arealaw1}, show that
\[
-\sum_{i> D^{4\ell_0}} \lambda_i^2\log\lambda_i^2 \leq \frac{\Delta}{(1-\Delta)^2}\log\frac{D^6}{\Delta}.
\]
\end{Question}
\begin{Hint}
Use the series equality $\sum_{j\geq 1} jr^j=\frac{r}{(1-r)^2}$.
\end{Hint}

Combining the two bounds we have derived, it follows that the entanglement entropy of the ground state satisfies
\[
S\leq O(1)\cdot \frac{\log\mu^2}{\log \Delta}\log D+\frac{\Delta}{(1-\Delta)^2}\log\frac{D^6}{\Delta}.
\]
Finally, note that for any $k\geq 1$, $K^k$ is a $(D^k,\Delta^k)$-AGSP. Moreover, if we choose $k=\lceil -\frac{1}{\log \Delta}\rceil$, we can ensure that $\frac{1}{4}\leq \Delta^k \leq \frac{1}{2}$. Substituting $D^k$ for $D$ and $\Delta^k$ for $\Delta$, we obtain
\[
S\leq O(1)\cdot k\cdot (-\log\mu^2\log D+\log D-1)\leq O(1)\cdot \frac{\log \mu^2}{\log \Delta}\log D.
\]

\end{proof}

\subsection{Good AGSP implies a good product state}\label{ssscn:AGSPprod}

Lemma~\ref{l:arealaw1} stated that to prove the area law, it suffices to have a good AGSP along with a product state with constant overlap with the ground state. It turns out that the former criterion actually implies the latter, as we now show in this section via the following lemma.

\begin{lemma}\label{l:arealaw2}
If there exists a $(D,\Delta)$-AGSP $K$ such that $D\cdot \Delta\leq \frac{1}{2}$, then there exists a product state $\ket \phi=\ket L\otimes\ket R$ whose overlap with the ground state $\ket \Gamma$ is $\mu=|\braket{\Gamma}{\phi}|\geq1/\sqrt{2D}$.
\end{lemma}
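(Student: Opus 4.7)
The plan is to prove the lemma by a contradiction-style extremal argument: take $\ket{\phi}$ to be the product state that \emph{maximizes} overlap with $\ket{\Gamma}$, apply the AGSP $K$ to it, and then use the bounded Schmidt rank of $K\ket{\phi}$ to extract a product state whose overlap with $\ket{\Gamma}$ cannot exceed the maximum we started with. This will yield a lower bound on $\mu$.

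More concretely, let $\mu = \max_{\ket{L}, \ket{R}} |\braket{\Gamma}{L \otimes R}|$ and fix a product state $\ket{\phi} = \ket{L}\otimes\ket{R}$ achieving this maximum. First I would decompose $\ket{\phi} = \mu\ket{\Gamma} + \sqrt{1-\mu^2}\ket{\Gamma^\perp}$ and use the ground-space invariance and shrinking properties of $K$ to compute $\braket{\Gamma}{K\phi} = \mu$ and $\enorm{K\ket{\phi}}^2 \leq \mu^2 + \Delta(1-\mu^2) \leq \mu^2 + \Delta$. Next I would normalize, setting $\ket{\psi} = K\ket{\phi}/\enorm{K\ket{\phi}}$, so that $|\braket{\Gamma}{\psi}|^2 \geq \mu^2/(\mu^2 + \Delta)$.

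The key step is then to invoke the third (Schmidt rank) property of the AGSP. Since $K$ has Schmidt rank at most $D$ and $\ket{\phi}$ is a product state, $\ket{\psi}$ has Schmidt rank at most $D$ and can be written $\ket{\psi} = \sum_{i=1}^{D} \alpha_i \ket{L_i}\otimes\ket{R_i}$ with $\sum_i |\alpha_i|^2 = 1$. Applying Cauchy--Schwarz to $\braket{\Gamma}{\psi} = \sum_i \alpha_i \braket{\Gamma}{L_i R_i}$ gives $|\braket{\Gamma}{\psi}|^2 \leq \sum_i |\braket{\Gamma}{L_i R_i}|^2 \leq D \cdot \max_i |\braket{\Gamma}{L_i R_i}|^2$, so some product state $\ket{L_i}\otimes\ket{R_i}$ has squared overlap with $\ket{\Gamma}$ at least $\mu^2/(D(\mu^2+\Delta))$.

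By the maximality of $\mu$, this overlap is itself bounded above by $\mu^2$, yielding $\mu^2 \geq \mu^2/(D(\mu^2+\Delta))$, which simplifies to $\mu^2 \geq 1/D - \Delta$. Finally, invoking the hypothesis $D\Delta \leq 1/2$ (so that $\Delta \leq 1/(2D)$) gives $\mu^2 \geq 1/(2D)$, as desired. The one subtlety to handle carefully is the degenerate case where $\mu = 0$, which would trivialize the Cauchy--Schwarz step; this is ruled out by the hypothesis, since $\mu = 0$ would force $\ket{\Gamma}$ to be orthogonal to every product state, but by picking $\ket{\phi}$ to be a product state with a nonzero component along the local Schmidt basis of $\ket{\Gamma}$, $\mu$ is strictly positive. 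The main obstacle, if any, is cleanly justifying the Cauchy--Schwarz-plus-extremality pigeonhole, but the rest of the argument is essentially immediate from the three AGSP axioms.
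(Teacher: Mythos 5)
Your proposal is correct and follows essentially the same argument as the paper: take the maximal-overlap product state, apply $K$, use the Schmidt-rank bound plus Cauchy--Schwarz to extract a product state with squared overlap at least $\mu^2/(D(\mu^2+\Delta))$, and conclude from maximality and $D\Delta\leq 1/2$. The only difference is that the paper phrases this as a proof by contradiction (assuming $\mu<1/\sqrt{2D}$ and deriving $D(\mu^2+\Delta)<1$) while you run the same inequality chain directly, which is an immaterial stylistic choice.
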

\begin{proof}
We proceed by contradiction. Let $\ket{\phi'}$ be a product state with the maximum possible overlap $\mu$ with the ground state, and assume for sake of contradiction that $\mu<1/\sqrt{2D}$. Then, consider the state $\ket \phi:={K\ket{\phi'}}/{||K\ket{\phi'}||}$. By definition of an AGSP, the Schmidt rank of $\ket\phi$ is at most $D$, and therefore it can be written as
$\ket\phi=\sum_{i=1}^D \lambda_i\ket {L_i}\otimes\ket{R_i}$. It follows that
\begin{eqnarray*}
\frac{\mu^2}{||K\ket{\phi'}||^2} &=&|\braket{\Gamma}{\phi}|^2\\&\leq& \left(\sum_{i=1}^D \lambda_i \left|\bra \Gamma (\ket {L_i}\otimes \ket{R_i})\right|\right)^2\\&\leq&  \sum_{i=1}^D\left|\bra \Gamma (\ket {L_i}\otimes \ket{R_i})\right|^2,
\end{eqnarray*}
where the last inequality uses the Cauchy-Schwarz inequality and the fact that $\sum_i \lambda_i^2=1$. Therefore, there exists some $i$ such that
\[
\left|\bra \Gamma (\ket {L_i}\otimes \ket{R_i})\right|^2 \geq \frac{\mu^2}{D||K\ket{\phi'}||^2}\geq \frac{\mu^2}{D(\mu^2+\Delta)},
\]
where the last inequality follows from the fact that $K$ is a $(D,\Delta)$-AGSP. However, $D(\mu^2+\Delta)=D\mu^2+D\Delta< \frac{1}{2}+\frac{1}{2}=1$, which implies that $\ket{L_i}\otimes \ket{R_i}$ has a larger overlap with the ground state than $\ket{\phi'}$ does. This is a contradiction, and therefore we have shown that $\mu\geq 1/\sqrt{2D}$.

\end{proof}

To recap, by combining Lemmas \ref{l:arealaw1} and \ref{l:arealaw2}, we have the following theorem, which sets the stage for our final area laws section on constructing a good AGSP (\S\ref{ssscn:AGSPconstruct}).

\begin{theorem}\label{thm:arealaw}
If there exists a $(D,\Delta)$-AGSP such that $D\cdot\Delta \leq \frac{1}{2}$, the ground state entanglement entropy is bounded by $O(1)\cdot \log D$.
\end{theorem}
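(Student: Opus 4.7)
The plan is to combine Lemmas~\ref{l:arealaw1} and~\ref{l:arealaw2} in the obvious way and then simply verify that the resulting bound collapses to $O(\log D)$ using the hypothesis $D\cdot\Delta\leq 1/2$. Since both lemmas have already been established in the excerpt, this theorem should be essentially a bookkeeping exercise; there is no new technical content, only a careful substitution. I do not expect a real obstacle here, but I do want to be careful about signs, because $\log\mu^2$ and $\log\Delta$ are both negative and it is easy to misread the quotient $\frac{\log\mu^2}{\log\Delta}$ that appears in Lemma~\ref{l:arealaw1}.

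First I would invoke Lemma~\ref{l:arealaw2}, whose hypothesis $D\cdot\Delta\leq 1/2$ is exactly the hypothesis of the theorem. This hands me a product state $\ket{\phi}=\ket{L}\otimes\ket{R}$ across the cut $C$ whose overlap with the ground state satisfies
\[
\mu \;=\; |\braket{\Gamma}{\phi}| \;\geq\; \frac{1}{\sqrt{2D}}.
\]
Next I would feed this product state, together with the same $(D,\Delta)$-AGSP, into Lemma~\ref{l:arealaw1}, which yields the entanglement-entropy bound
\[
S \;\leq\; O(1)\cdot\frac{\log\mu^2}{\log\Delta}\,\log D.
\]

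To finish, I just need to bound the ratio $\frac{\log\mu^2}{\log\Delta}$ by an absolute constant. Writing everything with positive quantities, $\log(1/\mu^2) \leq \log(2D)$ from the overlap bound, while $\log(1/\Delta) \geq \log(2D)$ from the hypothesis $\Delta\leq 1/(2D)$. Therefore
\[
\frac{\log\mu^2}{\log\Delta} \;=\; \frac{\log(1/\mu^2)}{\log(1/\Delta)} \;\leq\; \frac{\log(2D)}{\log(2D)} \;=\; 1,
\]
so Lemma~\ref{l:arealaw1} specializes to $S\leq O(1)\cdot\log D$, which is the desired area law.

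The only mild subtlety I foresee is that the hypothesis should guarantee $\Delta<1$ strictly so that $\log\Delta$ is nonzero and the ratio is well-defined; this is automatic since $D\geq 1$ forces $\Delta\leq 1/2$. One could also choose to first amplify the AGSP via $K\mapsto K^k$ with $k=\lceil-1/\log\Delta\rceil$, as done inside the proof of Lemma~\ref{l:arealaw1}, to make this cleaner, but it is not necessary for the statement claimed here.
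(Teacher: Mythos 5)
Your proposal is correct and is exactly the paper's argument: the theorem is obtained by combining Lemma~\ref{l:arealaw2} (which supplies the product state with $\mu\geq 1/\sqrt{2D}$ under the hypothesis $D\cdot\Delta\leq 1/2$) with Lemma~\ref{l:arealaw1}, and your verification that $\log(1/\mu^2)\leq\log(2D)\leq\log(1/\Delta)$ forces the ratio $\frac{\log\mu^2}{\log\Delta}\leq 1$ is the correct bookkeeping the paper leaves implicit.
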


\subsection{Constructing a good AGSP}\label{ssscn:AGSPconstruct}

Theorem~\ref{thm:arealaw} implies that in order to show an area law, it suffices to construct a good AGSP. In this section, we do precisely this. For simplicity in exposition, in this review, our exposition will attempt to deliver the main ideas behind the construction, without overwhelming the reader with technical details.

To begin, how does one go about constructing an AGSP, i.e.~an operator that leaves the ground state invariant and shrinks every vector that is orthogonal to the ground state? Since the only information regarding the ground state in our possession is the Hamiltonian $H$ itself, we use $H$ as our starting point. Consider the first attempt of $K=I-H/||H||$; it is easy to check that $K$ leaves the ground state invariant and cuts the norm of any orthogonal state to at most $1-\frac{\epsilon}{||H||}$ (recall that $\epsilon$ is the spectral gap of the given Hamiltonian). Thus, $K^k$ for large values of $k$ yields an AGSP with a good value of $\Delta$. Unfortunately, however, in general with this construction $D$ will also grow exponentially in $k$. Thus, this candidate AGSP does not satisfy our required condition that $D\cdot\Delta\leq\frac{1}{2}$.

However, this first attempt has taught us something --- that we can manipulate the spectrum of $H$ via matrix polynomials. Namely, if we have a polynomial $f(x)$ that maps $0$ to $1$ and $[\epsilon,||H||]$ to $[0,\Delta]$, $f(H)$ will be an AGSP with the shrinking factor $\Delta$. The remaining task then would be to show that $f(H)$ has a small Schmidt rank. Naturally one would expect that $f$ must have a small degree, because the naive bound on the Schmidt rank grows at least exponentially in the degree of the polynomial.

\begin{figure}\centering
  \includegraphics[height=5.3cm]{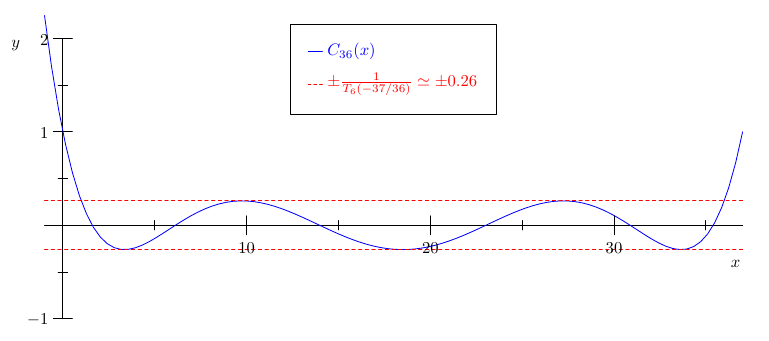}
  \caption{The polynomial $C_{36}(x)$, constructed based upon the Chebyshev polynomial of the first kind.}
  \label{fig:chebyshev}
\end{figure}

But which $f$ should we choose? In approximation theory, there is a well-known family of polynomials $T_\ell$ called the \textbf{Chebyshev polynomial of the first kind}, which has the following properties~\cite{Chebyshev_Poly}:
\begin{enumerate}
\item The degree of $T_\ell(x)$ is $\ell$.
\item For $x\in[-1,1]$, $|T_\ell(x)|\leq 1$.
\item For $x>1$, $T_\ell(x)\geq\frac{1}{2}e^{2\ell\sqrt{(x-1)/(x+1)}}$.
\end{enumerate}
The main point is that once $x$ passes the value $1$, the polynomial begins to increase very rapidly, giving rise to a threshold-like behavior. We will use this behavior to address the challenge that the AGSP should have an eigenvalue of $1$ for the ground state while it should have a very small eigenvalue for the first excited state which is only $\epsilon$ away from the ground state in the energy spectrum. The strategy is to make the first excited state correspond to $T_\ell(1)$ and the largest eigenvector to $T_\ell(-1)$ so that all excited states have eigenvalues at most $1$, while the ground state, which will then correspond to $T_\ell(1+y)$ for some small $y$, has a much greater eigenvalue. Then we can renormalize the operator so that the eigenvalue of the ground state becomes $1$, as required by the definition of an AGSP.

In other words, we construct a family of polynomials $C_\ell$ by scaling and translating $T_\ell$ as follows:
\[
C_\ell(x) = \left.T_\ell\left(\frac{||H||+\epsilon-2x}{||H||-\epsilon}\right)\middle/T_\ell\left(\frac{||H||+\epsilon}{||H||-\epsilon}\right)\right..
\]
It is then straightforward to check that $C_\ell(H)$ is an AGSP with $\Delta=4e^{-4\ell\sqrt{\epsilon/||H||}}$.

\begin{figure}\centering
  \includegraphics{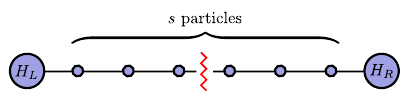}
  \caption{Hamiltonian truncation. We focus on a segment of $s$ particles around the cut, denoting the multiparticle Hamiltonians to the left and right of the segment by $H_L$ and $H_R$ respectively.}
  \label{fig:truncation}
\end{figure}

Unfortunately, an immediate difficulty in using the above bound on $\Delta$ is that $||H||$ can be very large, significantly slowing down the decay of $\Delta$. This issue can be addressed by a technique called \emph{Hamiltonian truncation}. The technique consists in picking out $s$ particles around the cut that we are concerned with and truncating the upper spectrum of the other ``less important'' particles. More precisely, we write the Hamiltonian as $H=H_L+H_1+\cdots+H_s+H_R$ as shown in Figure \ref{fig:truncation}, and then replace $H_L$ and $H_R$ by their respective truncated versions $H_L^{\leq t}$ and $H_R^{\leq t}$, where $A^{\leq t}$ denotes the matrix obtained from $A$ by replacing all eigenvalues greater than $t$ with $t$ and leaving the other eigenvalues unchanged. It is clear that the resulting truncated Hamiltonian $H'$ has norm bounded by $s+2t$, and it can also be shown that there exists some constant $t$ such that the spectral gap of $H'$ remains to be $\Omega(1)\cdot\epsilon$. Moreover, since the Hamiltonian is frustration-free, $H$ and $H'$ have the same (unique) ground state.\footnote{Note that this is the only use of the frustration-free assumption in our presentation. In fact, the proof of the frustrated case also follows exactly the same outline except that there we need a more delicate argument to prove that the ground states of $H$ and $H'$, which may now be distinct, are very close to each other. The interested reader can refer to \S6 of \cite{AKLV13}.} This means that we can use $H'$ in place of $H$ in our construction of the AGSP.

Finally, we need to show that the resulting AGSP $C_\ell(H')$ has a small Schmidt rank across the given cut. Intuitively speaking, this follows from the fact that each term in the expansion of $C_\ell(H')$ is a product of at most $\ell$ terms of $H'$, and therefore on average contributes Schmidt rank of $d^{\ell/(s+1)}$ to a cut, where $d$ is the local dimension of the particles (there are $s+1$ possible cuts in $H'$). The challenge here is that there are exponentially many terms in the expansion of $C_\ell(H')$ and thus we cannot simply sum these up. Fortunately, it is possible to circumvent this problem by cleverly grouping the exponential number of terms into a small number of large sums, with each sum again having small entanglement. The argument uses polynomial interpolation (see Lemma 4.2 of \cite{AKLV13}) and we end up with the bound of $D=(d\ell)^{O(\max\{\ell/s,\sqrt{\ell}\})}$.



Comparing this bound with our previously derived expression $\Delta=4e^{-4\ell\sqrt{\epsilon/||H||}}$ reveals that a suitable choice of parameters $\ell=O(s^2)$ and $s=O((\log^2 d)/\epsilon))$ gives us $D\cdot\Delta\leq \frac{1}{2}$. We conclude that, by Theorem~\ref{thm:arealaw}, there exists an area law for 1D gapped systems.

\begin{acknowledgements}
\addcontentsline{toc}{chapter}{Acknowledgements}
This Simons Institute Monograph has its roots in the UC Berkeley Quantum Reading Group of Spring 2013, during which many discussions took place. There were a number of participants involved in these discussions, whom we wish to thank: Piyush Srivastava, who contributed to our discussion on \emph{Thermal equilibrium, Boltzmann distribution \& Gibbs state} in \S\ref{sscn:terms}, Guoming Wang, as well as Vamsi K. Devabathini for remote participation and contributions from IIT Madras in India. SG would like to thank Niel de Beaudrap and Or Sattath for very helpful discussions regarding \S\ref{sscn:Q2SAT}, Maris Ozols for insightful comments on the first version of this document, Tomotoshi Nishino for helpful comments on the history of tensor network states, and David Gosset.

One of the initial aims of this survey was to act as an introduction to the field for participants of  the Spring 2014 program entitled ``Quantum Hamiltonian Complexity'', held at the Simons Institute for the Theory of Computing at UC Berkeley. Over the course of this program, the survey matured and was improved greatly due to feedback from and exposure to the field. In this regard, SG would like to especially thank Ignacio Cirac for a pair of wonderful talks and ensuing discussions (on which Chapter~\ref{scn:motivation} is based) at the Simons Institute. We also thank the Simons Institute for hosting both the authors and the Quantum Hamiltonian Complexity program itself.

SG acknowledges financial support from a Government of Canada NSERC Banting Postdoctoral Fellowship and the Simons Institute for the Theory of Computing at UC Berkeley. YH is supported by DARPA OLE. ZL is supported by ARO Grant W911NF-12-1-0541, NSF Grant CCF-0905626, Templeton Foundation Grant 21674, and the Simons Institute for the Theory of Computing. SWS is supported by NSF Grant CCF-0905626 and ARO Grant W911NF-09-1-0440.
\end{acknowledgements}

\backmatter

\bibliographystyle{plain}
\bibliography{Sevag_Gharibian_Central_Bibliography_Abbrv,Sevag_Gharibian_Central_Bibliography}

\begin{thebibliography}{100}

\bibitem{SFH}
Superfluid helium.
\newblock \url{https://www.youtube.com/watch?v=2Z6UJbwxBZI}.

\bibitem{A06}
S.~Aaronson.
\newblock The quantum {PCP} manifesto, 2006.
\newblock http://scottaaronson.com/blog/?p=139.

\bibitem{Chebyshev_Poly}
M.~Abramowitz and I.~A. Stegun.
\newblock {\em Handbook of Mathematical Functions with Formulas, Graphs, and
  Mathematical Tables}.
\newblock Dover Publications, New York, 1964.

\bibitem{AMV13}
K.~Van Acoleyen, M.~Mari\"en, and F.~Verstraete.
\newblock Entanglement rates and area laws.
\newblock {\em Physical Review Letters}, 111:170501, 2013.

\bibitem{ARL14}
G.~Adesso, S.~Ragy, and A.~R. Lee.
\newblock Continuous variable quantum information: {G}aussian states and
  beyond.
\newblock {\em Open Systems \& Information Dynamics}, 21:1440001, 2014.

\bibitem{AKLT87}
I.~Affleck, T.~Kennedy, E.~Lieb, and H.~Tasaki.
\newblock Rigorous results on valence-bond ground states in antiferromagnets.
\newblock {\em Physical Review Letters}, 59:799--802, 1987.

\bibitem{AKLT88}
I.~Affleck, T.~Kennedy, E.~Lieb, and H.~Tasaki.
\newblock Valence bond ground states in isotropic quantum antiferromagnets.
\newblock {\em Communications in Mathematical Physics}, 115(3):477--528, 1988.

\bibitem{AL86}
I.~Affleck and E.~Lieb.
\newblock A proof of part of {H}aldane's conjecture on spin chains.
\newblock {\em Letters in Mathematical Physics}, 12(1):57--69, 1986.

\bibitem{AAI10}
D.~Aharonov, I.~Arad, and S.~Irani.
\newblock Efficient algorithm for approximating one-dimensional ground states.
\newblock {\em Physical Review A}, 82:012315, 2010.

\bibitem{AALV09}
D.~Aharonov, I.~Arad, Z.~Landau, and U.~Vazirani.
\newblock The detectability lemma and quantum gap amplification.
\newblock In {\em Proceedings of 41st ACM Symposium on Theory of Computing
  ({STOC} 2009)}, pages 417--426, 2009.

\bibitem{AALV11}
D.~Aharonov, I.~Arad, Z.~Landau, and U.~Vazirani.
\newblock The 1{D} area law and the complexity of quantum states: A
  combinatorial approach.
\newblock In {\em Proceedings of the IEEE 52nd Annual Symposium on Foundations
  of Computer Science (FOCS 2011)}, pages 324--333, 2011.

\bibitem{AAV13}
D.~Aharonov, I.~Arad, and T.~Vidick.
\newblock The quantum {PCP} conjecture.
\newblock In {\em {ACM SIGACT News}}, volume~44, pages 47--79. 2013.

\bibitem{AE11}
D.~Aharonov and L.~Eldar.
\newblock On the complexity of commuting local {H}amiltonians, and tight
  conditions for {T}opological {O}rder in such systems.
\newblock In {\em Proceedings of the 52nd IEEE Symposium on Foundations of
  Computer Science (FOCS 2011)}, pages 334--343, 2011.

\bibitem{AE13_1}
D.~Aharonov and L.~Eldar.
\newblock Commuting local {H}amiltonians on expanders, locally testable quantum
  codes, and the {qPCP} conjecture.
\newblock Available at arXiv.org e-Print quant-ph/1301.3407, 2013.

\bibitem{AE13_2}
D.~Aharonov and L.~Eldar.
\newblock The commuting local {H}amiltonian problem on locally expanding graphs
  is approximable in {NP}.
\newblock {\em Quantum Information Processing}, 14(1):83--101, 2015.

\bibitem{AGIK09}
D.~Aharonov, D.~Gottesman, S.~Irani, and J.~Kempe.
\newblock The power of quantum systems on a line.
\newblock {\em Communications in Mathematical Physics}, 287:41--65, 2009.

\bibitem{AHLNSV14}
D.~Aharonov, A.~W. Harrow, Z.~Landau, D.~Nagaj, M.~Szegedy, and U.~Vazirani.
\newblock Local tests of global entanglement and a counterexample to the
  generalized area law.
\newblock In {\em Proceedings of the 55th IEEE Symposium on Foundations of
  Computer Science (FOCS 2014)}, pages 246--255, 2014.

\bibitem{AN02}
D.~Aharonov and T.~Naveh.
\newblock Quantum {NP} - {A} survey.
\newblock Available at arXiv.org e-Print quant-ph/0210077v1, 2002.

\bibitem{AKLV13}
I.~Arad, A.~Kitaev, Z.~Landau, and U.~Vazirani.
\newblock An area law and sub-exponential algorithm for {1D} systems.
\newblock Available at arXiv.org e-Print quant-ph/1301.1162, 2013.

\bibitem{ALV12}
I.~Arad, Z.~Landau, and U.~Vazirani.
\newblock Improved one-dimensional area law for frustration-free systems.
\newblock {\em Physical Review B}, 85:195145, 2012.

\bibitem{ASSZ15}
I.~Arad, M.~Santha, A.~Sundaram, and S.~Zhang.
\newblock Linear time algorithm for quantum {2SAT}.
\newblock arXiv:1508.06340, 2015.

\bibitem{AB90}
S.~Arora and B.~Barak.
\newblock {\em Computational Complexity: A Modern Approach}.
\newblock Cambridge University Press, 2009.

\bibitem{ALMSS98}
S.~Arora, C.~Lund, R.~Motwani, M.~Sudan, and M.~Szegedy.
\newblock Proof verification and the hardness of approximation problems.
\newblock {\em Journal of the ACM}, 45(3):501--555, 1998.
\newblock Prelim. version FOCS '92.

\bibitem{AS98}
S.~Arora and S.~Safra.
\newblock Probabilistic checking of proofs: A new characterization of {NP}.
\newblock {\em Journal of the ACM}, 45(1):70--122, 1998.
\newblock Prelim. version FOCS '92.

\bibitem{APT79}
B.~Aspvall, M.~F. Plass, and R.~E. Tarjan.
\newblock A linear-time algorithm for testing the truth of certain quantified
  boolean formulas.
\newblock {\em Information Processing Letters}, 8(3):121--123, 1979.

\bibitem{B82}
F.~Baharona.
\newblock On the computational complexity of ising spin glass models.
\newblock {\em Journal of Physics A - Mathematical and General}, 15:3241, 1982.

\bibitem{BBT09}
N.~Bansal, S.~Bravyi, and B.~M. Terhal.
\newblock Classical approximation schemes for the ground-state energy of
  quantum and classical {I}sing spin {H}amiltonians on planar graphs.
\newblock {\em Quantum Information \& Computation}, 9(7\&8):701--720, 2009.

\bibitem{Blackhole1}
J.~D. Bekenstein.
\newblock Black holes and entropy.
\newblock {\em Physical Review D}, 7:2333, 1973.

\bibitem{B80}
P.~Benioff.
\newblock The computer as a physical system: {A} microscopic quantum mechanical
  {H}amiltonian model of computers as represented by {T}uring machines.
\newblock {\em Journal of Statistical Physics}, 22:563--591, 1980.

\bibitem{B82_1}
P.~Benioff.
\newblock Quantum mechanical {H}amiltonian models of {T}uring machines.
\newblock {\em Journal of Statistical Physics}, 29:515--546, 1982.

\bibitem{B82_2}
P.~Benioff.
\newblock Quantum mechanical {H}amiltonian models of {T}uring machines that
  dissipate no energy.
\newblock {\em Physical Review Letters}, 48:1581--1585, 1982.

\bibitem{B31}
H.~Bethe.
\newblock Zur {T}heorie der {M}etalle.
\newblock {\em Zeitschrift f\"{u}r {P}hysik}, 71(3--4):205--226, 1931.

\bibitem{BL08}
J.~D. Biamonte and P.~J. Love.
\newblock Realizable {H}amiltonians for universal adiabatic quantum computers.
\newblock {\em Physical Review A}, 78:012352, 2008.

\bibitem{Blackhole2}
L.~Bombelli, R.~K. Koul, J.~Lee, and R.~D. Sorkin.
\newblock Quantum source of entropy for black holes.
\newblock {\em Physical Review D}, 34:373, 1986.

\bibitem{BH12}
F.~G. S.~L. Brand\~{a}o and M.~Horodecki.
\newblock Exponential decay of correlations implies area law.
\newblock {\em Communications in Mathematical Physics}, 333(2):761--798, 2015.

\bibitem{BC14}
F.~Brand{\~{a}}o and M.~Cramer.
\newblock Entanglement area law from specific heat capacity.
\newblock {\em Physical Review B}, 92:115134, 2015.

\bibitem{BH13}
F.~Brand{\~{a}}o and A.~Harrow.
\newblock Product-state approximations to quantum ground states.
\newblock In {\em Proceedings of the 45th {ACM} {S}ymposium on the {T}heory of
  {C}omputing (STOC 2013)}, pages 871--880, 2013.

\bibitem{B06}
S.~Bravyi.
\newblock Efficient algorithm for a quantum analogue of 2-{SAT}.
\newblock Available at arXiv.org e-Print quant-ph/0602108v1, 2006.

\bibitem{B14}
S.~Bravyi.
\newblock Monte {C}arlo simulation of stoquastic hamiltonians.
\newblock Available at arXiv.org e-Print quant-ph/1402.2295, 2014.

\bibitem{BBT06}
S.~Bravyi, A.~Bessen, and B.~Terhal.
\newblock Merlin-{A}rthur games and stoquastic complexity.
\newblock Available at arXiv.org e-Print quant-ph/0611021v2, 2006.

\bibitem{BCMNS12}
S.~Bravyi, L.~Caha, R.~Movassagh, D.~Nagaj, and P.~W. Shor.
\newblock Criticality without frustration for quantum spin-1 chains.
\newblock {\em Physical Review Letters}, 109:207202, 2012.

\bibitem{BDOT08}
S.~Bravyi, D.~DiVincenzo, R.~Oliveira, and B.~Terhal.
\newblock The complexity of stoquastic local {H}amiltonian problems.
\newblock {\em Quantum Information \& Computation}, 8(5):0361--0385, 2008.

\bibitem{BH14}
S.~Bravyi and M.~Hastings.
\newblock On complexity of the quantum {I}sing model.
\newblock Available at arXiv.org e-Print quant-ph/1410.0703, 2014.

\bibitem{BT09}
S.~Bravyi and B.~Terhal.
\newblock Complexity of stoquastic frustration-free {H}amiltonians.
\newblock {\em SIAM Journal on Computing}, 39(4):1462, 2009.

\bibitem{BV05}
S.~Bravyi and M.~Vyalyi.
\newblock Commutative version of the local {H}amiltonian problem and common
  eigenspace problem.
\newblock {\em Quantum Information \& Computation}, 5(3):187--215, 2005.

\bibitem{BMA+86}
W.~J.~L. Buyers, R.~M. Morra, R.~L. Armstrong, M.~J. Hogan, P.~Gerlach, and
  K.~Hirakawa.
\newblock Experimental evidence for the {H}aldane gap in a spin-1 nearly
  isotropic, antiferromagnetic chain.
\newblock {\em Physical Review Letters}, 56:371--374, Jan 1986.

\bibitem{CC04}
P.~Calabrese and J.~Cardy.
\newblock Entanglement entropy and quantum field theory.
\newblock {\em Journal of Statistical Mechanics: Theory and Experiment},
  2004(06):P06002, 2004.

\bibitem{CC09}
P.~Calabrese and J.~Cardy.
\newblock Entanglement entropy and conformal field theory.
\newblock {\em Journal of Physics A: Mathematical and Theoretical},
  42(50):504005, 2009.

\bibitem{CGW13}
A.~M. Childs, D.~Gosset, and Z.~Webb.
\newblock The {B}ose-{H}ubbard model is {QMA}-complete.
\newblock In {\em Proceedings of the 41st International Colloquium on Automata,
  Languages, and Programming ({ICALP} 2014)}, pages 308--319, 2013.

\bibitem{CF15}
C.~T. Chubb and S.~T. Flammia.
\newblock Computing the degenerate ground space of gapped spin chains in
  polynomial time.
\newblock Available at arXiv.org e-Print quant-ph/1502.06967, 2015.

\bibitem{CV09}
I.~Cirac and F.~Verstraete.
\newblock Renormalization and tensor product states in spin chains and
  lattices.
\newblock {\em Journal of Physics A: Mathematical and Theoretical},
  42(50):504004, 2009.

\bibitem{CZ12}
J.~I. Cirac and P.~Zoller.
\newblock Goals and opportunities in quantum simulation.
\newblock {\em Nature Physics}, 8:264--266, 2012.

\bibitem{C72}
S.~Cook.
\newblock The complexity of theorem proving procedures.
\newblock In {\em Proceedings of the 3rd ACM Symposium on Theory of Computing
  (STOC 1972)}, pages 151--158, 1972.

\bibitem{CM13}
T.~Cubitt and A.~Montanaro.
\newblock Complexity classification of local {H}amiltonian problems.
\newblock In {\em Proceedings of the 55nd IEEE Symposium on Foundations of
  Computer Science (FOCS 2014)}, pages 120--129, 2014.

\bibitem{CPW15}
T.~Cubitt, D.~Perez-Garcia, and M.~M. Wolf.
\newblock Undecidability of the spectral gap (full version).
\newblock Available at arXiv.org e-Print quant-ph/1502.04573, 2015.

\bibitem{dBG15}
N.~de~Beaudrap and S.~Gharibian.
\newblock A linear time algorithm for quantum {2-SAT}.
\newblock Available at arXiv.org e-Print quant-ph/1508.07338, 2015.

\bibitem{MJ99}
B.~DeMarco and D.~S. Jin.
\newblock Onset of {F}ermi degeneracy in a trapped atomic gas.
\newblock {\em Science}, 285(5434):1703--1706, 1999.

\bibitem{D07}
I.~Dinur.
\newblock The {PCP} theorem by gap amplification.
\newblock {\em Journal of the ACM}, 54(3), 2007.

\bibitem{DR13}
J.~Dubail and N.~Read.
\newblock Tensor network trial states for chiral topological phases in two
  dimensions.
\newblock Available at arXiv.org e-Print cond-mat.mes-hall/1307.7726, 2013.

\bibitem{EY36}
Carl Eckart and Gale Young.
\newblock The approximation of one matrix by another of lower rank.
\newblock {\em Psychometrika}, 1(3):211, 1936.

\bibitem{ECP10}
J.~Eisert, M.~Cramer, and M.~B. Plenio.
\newblock Area laws for the entanglement entropy.
\newblock {\em Reviews of Modern Physics}, 82:277--306, Feb 2010.

\bibitem{EIS76}
S.~Even, A.~Itai, and A.~Shamir.
\newblock On the complexity of time table and multi-commodity flow problems.
\newblock {\em SIAM Journal on Computing}, 5(4):691--703, 1976.

\bibitem{FKR04}
H.~Fan, V.~Korepin, and V.~Roychowdhury.
\newblock Entanglement in a valence-bond solid state.
\newblock {\em Physical Review Letters}, 93:227203, 2004.

\bibitem{FNW92}
M.~Fannes, B.~Nachtergaele, and R.F. Werner.
\newblock Finitely correlated states on quantum spin chains.
\newblock {\em Communications in Mathematical Physics}, 144(3):443--490, 1992.

\bibitem{Fey39}
R.~Feynman.
\newblock Forces in molecules.
\newblock {\em Physical Review}, 56:340--343, Aug 1939.

\bibitem{F82}
R.~Feynman.
\newblock Simulating physics with computers.
\newblock {\em International Journal of Theoretical Physics},
  21(6--7):467--488, 1982.

\bibitem{F85}
R.~Feynman.
\newblock Quantum mechanical computers.
\newblock {\em Optics News}, 11:11, 1985.

\bibitem{FV14}
J.~Fitzsimons and T.~Vidick.
\newblock A multiprover interactive proof system for the local hamiltonian
  problem.
\newblock In {\em Proceedings of the 2015 Conference on Innovations in
  Theoretical Computer Science (ITCS 2015)}, pages 103--112, 2015.

\bibitem{GE14}
Y.~Ge and J.~Eisert.
\newblock Area laws and efficient descriptions of quantum many-body states.
\newblock Available at arXiv.org e-Print quant-ph/1411.2995, 2014.

\bibitem{G10}
S.~Gharibian.
\newblock Strong {NP}-hardness of the quantum separability problem.
\newblock {\em Quantum Information and Computation}, 10(3\&4):343--360, 2010.

\bibitem{G13}
S.~Gharibian.
\newblock {\em Approximation, proof systems, and correlations in a quantum
  world}.
\newblock PhD thesis, University of Waterloo, 2013.
\newblock Available at arXiv.org e-Print quant-ph/1301.2632.

\bibitem{GK11}
S.~Gharibian and J.~Kempe.
\newblock Approximation algorithms for {QMA}-complete problems.
\newblock In {\em Proceedings of 26th IEEE Conference on Computational
  Complexity (CCC 2011)}, pages 178--188, 2011.

\bibitem{GK12}
S.~Gharibian and J.~Kempe.
\newblock Hardness of approximation for quantum problems.
\newblock In {\em Proceedings of 39th International Colloquium on Automata,
  Languages and Programming ({ICALP} 2012)}, pages 387--398, 2012.

\bibitem{GLSW14}
S.~Gharibian, Z.~Landau, S.~W. Shin, and G.~Wang.
\newblock Tensor network non-zero testing.
\newblock {\em Quantum Information \& Computation}, 15(9\&10):885--899, 2015.

\bibitem{GS14}
S.~Gharibian and J.~Sikora.
\newblock Ground state connectivity of local {H}amiltonians.
\newblock In {\em Proceedings of the 42nd International Colloquium on Automata,
  Languages, and Programming ({ICALP} 2015)}, pages 617--628, 2015.

\bibitem{Gi96}
N.~Gisin.
\newblock Hidden quantum nonlocality revealed by local filters.
\newblock {\em Physics Letters A}, 210(8):151--156, 1996.

\bibitem{GN13}
D.~Gosset and D.~Nagaj.
\newblock Quantum 3-{SAT} is {QMA1}-complete.
\newblock In {\em Proceedings of the 54th IEEE Symposium on Foundations of
  Computer Science (FOCS 2013)}, pages 756--765, 2013.

\bibitem{GTV15}
D.~Gosset, B.~M. Terhal, and A.~Vershynina.
\newblock Universal adiabatic quantum computation via the space-time
  circuit-to-{H}amiltonian construction.
\newblock {\em Physical Review Letters}, 114:140501, 2015.

\bibitem{GI09}
D.~Gottesman and S.~Irani.
\newblock The quantum and classical complexity of translationally invariant
  tiling and {H}amiltonian problems.
\newblock In {\em Proceedings of the 50th IEEE Symposium on Foundations of
  Computer Science (FOCS 2009)}, pages 95--104, 2009.

\bibitem{G04}
D.~J. Griffiths.
\newblock {\em Introduction to Quantum Mechanics}.
\newblock Pearson Prentice Hall, 2nd edition, 2004.

\bibitem{GW09}
Z.-C. Gu and X.-G. Wen.
\newblock Tensor-entanglement-filtering renormalization approach and
  symmetry-protected topological order.
\newblock {\em Physical Review B}, 80:155131, Oct 2009.

\bibitem{G03}
L.~Gurvits.
\newblock Classical deterministic complexity of {E}dmond's problem and quantum
  entanglement.
\newblock In {\em Proceedings of the 35th Symposium on Theory of computing
  (STOC 2003)}, pages 10--19. ACM Press, 2003.

\bibitem{Hal83a}
F.~Haldane.
\newblock Continuum dynamics of the {1-D} {H}eisenberg antiferromagnet:
  {I}dentification with the {O}(3) nonlinear sigma model.
\newblock {\em Physics Letters A}, 93(9):464--468, 1983.

\bibitem{Hal83b}
F.~Haldane.
\newblock Nonlinear field theory of large-spin {H}eisenberg antiferromagnets:
  {S}emiclassically quantized solitons of the one-dimensional easy-axis
  {N}\'eel state.
\newblock {\em Physical Review Letters}, 50:1153--1156, 1983.

\bibitem{HNN13}
S.~Hallgren, D.~Nagaj, and S.~Narayanaswami.
\newblock The {L}ocal {H}amiltonian problem on a line with eight states is
  {QMA}-complete.
\newblock {\em Quantum Information \& Computation}, 13(9\&10):0721--0750, 2013.

\bibitem{H:exp_decay}
M.~B. Hastings.
\newblock Lieb-{S}chultz-{M}attis in higher dimensions.
\newblock {\em Physical Review B}, 69:104431, 2004.

\bibitem{H06}
M.~B. Hastings.
\newblock Solving gapped {H}amiltonians locally.
\newblock {\em Physical Review B}, 73:085115, 2006.

\bibitem{Ha07}
M.~B. Hastings.
\newblock An area law for one-dimensional quantum systems.
\newblock {\em Journal of Statistical Mechanics}, P08024(08), 2007.

\bibitem{H12_2}
M.~B. Hastings.
\newblock Matrix product operators and central elements: {C}lassical
  description of a quantum state.
\newblock {\em Geometry \& Topology Monographs}, 18:115--160, 2012.

\bibitem{HLW:aspects}
P.~Hayden, D.~Leung, and A.~Winter.
\newblock Aspects of generic entanglement.
\newblock {\em Communications in Mathematical Physics}, 265(1):95--117, 2006.

\bibitem{Hua14}
Y.~Huang.
\newblock Area law in one dimension: {D}egenerate ground states and {R}enyi
  entanglement entropy.
\newblock Available at arXiv.org e-Print cond-mat.str-el/1403.0327, 2014.

\bibitem{H63}
J.~Hubbard.
\newblock Electron correlations in narrow energy bands.
\newblock {\em Proceedings of the Royal Society of London. Series A,
  Mathematical and Physical Sciences}, 276(1365):238--257, 1963.

\bibitem{I10}
S.~Irani.
\newblock Ground state entanglement in one-dimensional translationally
  invariant quantum systems.
\newblock {\em Journal of Mathematical Physics}, 51:022101, 2010.

\bibitem{IJK05}
A.~R. Its, B.-Q. Jin, and V.~E. Korepin.
\newblock Entanglement in the {XY} spin chain.
\newblock {\em Journal of Physics A: Mathematical and General}, 38:2975, 2005.

\bibitem{JS93}
M.~Jerrum and A.~Sinclair.
\newblock Polynomial-time approximation algorithms for the {I}sing model.
\newblock {\em SIAM Journal on Computing}, 22(5):1087--1116, 1993.

\bibitem{JGL10}
S.~P. Jordan, D.~Gosset, and P.~J. Love.
\newblock {Quantum-{M}erlin-{A}rthur-complete} problems for stoquastic
  {H}amiltonians and {M}arkov matrices.
\newblock {\em Physical Review A}, 81:032331, 2010.

\bibitem{JSGME08}
R.~J\"ordens, N.~Strohmaier, K.~G\"unter, H.~Moritz, and T.~Esslinger.
\newblock A {M}ott insulator of fermionic atoms in an optical lattice.
\newblock {\em Nature}, 455:204--207, 2008.

\bibitem{jozsa03a}
R.~Jozsa and N.~Linden.
\newblock On the role of entanglement in quantum-computational speed-up.
\newblock {\em Proceedings of the Royal Society of London; Series A,
  Mathematical and Physical Sciences}, 459:2011--2032, 2003.

\bibitem{K72}
R.~Karp.
\newblock Reducibility among combinatorial problems.
\newblock In {\em Complexity of Computer Computations}, pages 85--103. New
  York: Plenum, 1972.

\bibitem{K07}
A.~Kay.
\newblock Quantum-{M}erlin-{A}rthur-complete translationally invariant
  {H}amiltonian problem and the complexity of finding ground-state energies in
  physical systems.
\newblock {\em Physical Review A}, 76(3):030307, 2007.

\bibitem{KLM07}
P.~Kaye, R.~Laflamme, and M.~Mosca.
\newblock {\em An Introduction to Quantum Computing}.
\newblock Oxford University Press, 2007.

\bibitem{KKR06}
J.~Kempe, A.~Kitaev, and O.~Regev.
\newblock The complexity of the local {H}amiltonian problem.
\newblock {\em SIAM Journal on Computing}, 35(5):1070--1097, 2006.

\bibitem{KR03}
J.~Kempe and O.~Regev.
\newblock 3-local {H}amiltonian is {QMA}-complete.
\newblock {\em Quantum Information \& Computation}, 3(3):258--264, 2003.

\bibitem{K99}
A.~Kitaev.
\newblock Quantum {NP}, 1999.
\newblock Talk at Second Workshop on Algorithms in Quantum Information
  Processing (AQIP 1999), DePaul University.

\bibitem{Kit03}
A.~Kitaev.
\newblock Fault-tolerant quantum computation by anyons.
\newblock {\em Annals of Physics}, 303(1):2--30, 2003.

\bibitem{KSV02}
A.~Kitaev, A.~Shen, and M.~Vyalyi.
\newblock {\em Classical and Quantum Computation}.
\newblock American Mathematical Society, 2002.

\bibitem{KLV00}
E.~Knill, R.~Laflamme, and L.~Viola.
\newblock Theory of quantum error correction for general noise.
\newblock {\em Physical Review Letters}, 84:2525, 2000.

\bibitem{KW41}
H.~A. Kramers and G.~H. Wannier.
\newblock Statistics of the two-dimensional ferromagnet. {P}art {II}.
\newblock {\em Physical Review}, 60:263--276, 1941.

\bibitem{K67}
M.~R. Krom.
\newblock The decision problem for a class of first-order formulas in which all
  disjunctions are binary.
\newblock {\em Zeitschrift f\"{u}r Mathematische Logik und Grundlagen der
  Mathematik}, 13:15--20, 1967.

\bibitem{LVV13}
Z.~Landau, U.~Vazirani, and T.~Vidick.
\newblock A polynomial-time algorithm for the ground state of {1D} gapped local
  {H}amiltonians.
\newblock {\em Nature Physics}, 11:566--569, 2015.

\bibitem{L73}
L.~Levin.
\newblock Universal sequential search problems.
\newblock {\em Problems of Information Transmission}, 9(3):265--266, 1973.

\bibitem{LSM61}
E.~Lieb, T.~Schultz, and D.~Mattis.
\newblock Two soluble models of an antiferromagnetic chain.
\newblock {\em Annals of Physics}, 16(3):407--466, 1961.

\bibitem{LW68}
E.~H. Lieb and F.~Y. Wu.
\newblock Absence of {M}ott transition in an exact solution of the short-range,
  one-band model in one dimension.
\newblock {\em Physical Review Letters}, 20(25):1445--1448, 1968.

\bibitem{LPSW09}
N.~Linden, S.~Popescu, A.~J. Short, and A.~Winter.
\newblock Quantum mechanical evolution towards thermal equilibrium.
\newblock {\em Physical Review E}, 79:061103, 2009.

\bibitem{LCV07}
Y.-K. Liu, M.~Christandl, and F.~Verstraete.
\newblock Quantum computational complexity of the {N}-representability problem:
  {QMA} complete.
\newblock {\em Physical Review Letters}, 98:110503, 2007.

\bibitem{MAAV14}
M.~Mari{\"{e}}n, K.~M.~R. Audenaert, K.~Van Acoleyen, and F.~Verstraete.
\newblock Entanglement rates and the stability of the area law for the
  entanglement entropy.
\newblock Available at arXiv.org e-Print quant-ph/1411.0680, 2014.

\bibitem{MW05}
C.~Marriott and J.~Watrous.
\newblock Quantum {A}rthur-{M}erlin games.
\newblock {\em Computational Complexity}, 14(2):122--152, 2005.

\bibitem{masanes09}
L.~Masanes.
\newblock An area law for the entropy of low-energy states.
\newblock {\em Physical Review A}, 80:052104, 2009.

\bibitem{MS72}
A.~Meyer and L.~Stockmeyer.
\newblock The equivalence problem for regular expressions with squaring
  requires exponential time.
\newblock In {\em Proceedings of the 13th Symposium on Foundations of Computer
  Science}, pages 125--129, 1972.

\bibitem{MSVC14}
A.~Molnar, N.~Schuch, F.~Verstraete, and J.~I. Cirac.
\newblock Approximating {G}ibbs states of local {H}amiltonians efficiently with
  projected entangled pair states.
\newblock {\em Physical Review B}, 91(4):045138, 2015.

\bibitem{MBAH88}
R.~M. Morra, W.~J.~L. Buyers, R.~L. Armstrong, and K.~Hirakawa.
\newblock Spin dynamics and the {H}aldane gap in the spin-1
  quasi-one-dimensional antiferromagnet {${\mathrm{CsNiCl}}_{3}$}.
\newblock {\em Physical Review B}, 38:543--555, 1988.

\bibitem{MS14}
R.~Movassagh and P.~W. Shor.
\newblock Power law violation of the area law in quantum spin chains.
\newblock Available at arXiv.org e-Print quant-ph/1408.1657, 2014.

\bibitem{N08}
D.~Nagaj.
\newblock {\em Local Hamiltonians in Quantum Computation}.
\newblock PhD thesis, Massachusetts Institute of Technology, Boston, 2008.
\newblock Available at arXiv.org e-Print quant-ph/0808.2117v1.

\bibitem{NM06}
D.~Nagaj and S.~Mozes.
\newblock A new construction for a {QMA} complete 3-local {H}amiltonian.
\newblock {\em Journal of Mathematical Physics}, 48(7):072104, 2007.

\bibitem{NC00}
M.~A. Nielsen and I.~L. Chuang.
\newblock {\em Quantum Computation and Quantum Information}.
\newblock Cambridge University Press, 2000.

\bibitem{NHOH99}
T.~Nishino, T.~Hikihara, K.~Okunishi, and Y.~Hieida.
\newblock Density matrix renormalization group: Introduction from a variational
  point of view.
\newblock {\em International Journal of Modern Physics B}, 13(1):1--24, 1999.

\bibitem{OT05}
R.~Oliveira and B.~M. Terhal.
\newblock The complexity of quantum spin systems on a two-dimensional square
  lattice.
\newblock {\em Quantum Information \& Computation}, 8(10):0900--0924, 2008.

\bibitem{Ons44}
L.~Onsager.
\newblock Crystal statistics. {I. A} two-dimensional model with an
  order-disorder transition.
\newblock {\em Physical Review}, 65:117--149, 1944.

\bibitem{O11}
T.~J. Osborne.
\newblock Hamiltonian complexity.
\newblock {\em Reports on Progress in Physics}, 75(2):022001, 2012.

\bibitem{OR95}
S.~{\"{O}}stlund and S.~Rommer.
\newblock Thermodynamic limit of density matrix renormalization.
\newblock {\em Physical Review Letters}, 75:3537--3540, 1995.

\bibitem{PVWC07}
D.~Perez-Garcia, F.~Verstraete, M.~M. Wolf, and J.~I. Cirac.
\newblock Matrix product state representations.
\newblock {\em Quantum Information \& Computation}, 7(5):401--430, 2007.

\bibitem{Pfe70}
P.~Pfeuty.
\newblock The one-dimensional {I}sing model with a transverse field.
\newblock {\em Annals of Physics}, 57(1):79--90, 1970.

\bibitem{PBTO12}
F.~Pollmann, E.~Berg, A.~M. Turner, and M.~Oshikawa.
\newblock Symmetry protection of topological phases in one-dimensional quantum
  spin systems.
\newblock {\em Physical Review B}, 85:075125, 2012.

\bibitem{QH09}
J.~Quintanilla and C.~Hooley.
\newblock The strong-correlations puzzle.
\newblock {\em Physics World}, 22(6):32--37, 2009.

\bibitem{RVR+87}
J.~P. Renard, M.~Verdaguer, L.~P. Regnault, W.~A.~C. Erkelens,
  J.~Rossat-Mignod, and W.~G. Stirling.
\newblock Presumption for a quantum energy gap in the quasi-one-dimensional s =
  1 {H}eisenberg antiferromagnet {$\mathrm{Ni(C_2H_8N_2)_2NO_2(ClO_4)}$}.
\newblock {\em Europhysics Letters}, 3(8):945, 1987.

\bibitem{RO97}
S.~Rommer and S.~{\"{O}}stlund.
\newblock Class of ansatz wave functions for one-dimensional spin systems and
  their relation to the density matrix renormalization group.
\newblock {\em Physical Review B}, 55:2164--2181, 1997.

\bibitem{Sac11}
Subir Sachdev.
\newblock {\em Quantum Phase Transitions}.
\newblock Cambridge University Press, 2011.

\bibitem{S78}
T.~J. Schaefer.
\newblock The complexity of satisfiability problems.
\newblock In {\em Proceedings of the 10th Symposium on Theory of computing},
  pages 216--226, 1978.

\bibitem{SV67}
D.~Schiff and L.~Verlet.
\newblock Ground state of liquid helium-4 and helium-3.
\newblock {\em Physical Review}, 160:208, 1967.

\bibitem{SHWBBCHRR08}
U.~Schneider, L.~Hackerm\"{u}ller, S.~Will, Th. Best, I.~Bloch, T.~A. Costi,
  R.~W. Helmes, D.~Rasch, and A.~Rosch.
\newblock Metallic and insulating phases of repulsively interacting fermions in
  a 3{D} optical lattice.
\newblock {\em Science}, 322(5907):1520--1525, 2008.

\bibitem{Sch05}
U.~Schollw\"ock.
\newblock The density-matrix renormalization group.
\newblock {\em Reviews of Modern Physics}, 77:259--315, 2005.

\bibitem{Sch11}
U.~Schollw\"ock.
\newblock The density-matrix renormalization group in the age of matrix product
  states.
\newblock {\em Annals of Physics}, 326(1):96--192, 2011.

\bibitem{S35}
E.~Schr\"{o}dinger.
\newblock Die gegenw\"{a}rtige {S}ituation in der {Q}uantenmechanik.
\newblock {\em Naturwissenschaften}, 23(48):807--812, 1935.

\bibitem{s11}
N.~Schuch.
\newblock Complexity of commuting {H}amiltonians on a square lattice of qubits.
\newblock {\em Quantum Information \& Computation}, 11:901--912, 2011.

\bibitem{SCV08}
N.~Schuch, I.~Cirac, and F.~Verstraete.
\newblock Computational difficulty of finding matrix product ground states.
\newblock {\em Physical Review Letters}, 100:250501, 2008.

\bibitem{SC10}
N.~Schuch and J.~I. Cirac.
\newblock Matrix product state and mean-field solutions for one-dimensional
  systems can be found efficiently.
\newblock {\em Physical Review A}, 82:012314, 2010.

\bibitem{SV09}
N.~Schuch and F.~Verstraete.
\newblock Computational complexity of interacting electrons and fundamental
  limitations of density functional theory.
\newblock {\em Nature Physics}, 5:732--735, 2009.

\bibitem{SWVC07}
N.~Schuch, M.~Wolf, F.~Verstraete, and J.~I. Cirac.
\newblock Computational complexity of projected entangled pair states.
\newblock {\em Physical Review Letters}, 98:140506, 2007.

\bibitem{SWVC08}
N.~Schuch, M.~M. Wolf, F.~Verstraete, and J.~I. Cirac.
\newblock Entropy scaling and simulability by matrix product states.
\newblock {\em Physical Review Letters}, 100:030504, 2008.

\bibitem{SML64}
T.~D. Schultz, D.~C. Mattis, and E.~H. Lieb.
\newblock Two-dimensional ising model as a soluble problem of many fermions.
\newblock {\em Reviews of Modern Physics}, 36:856--871, 1964.

\bibitem{Blackhole3}
M.~Srednicki.
\newblock Entropy and area.
\newblock {\em Physical Review Letters}, 71:666, 1993.

\bibitem{VC04}
F.~Verstraete and J.~I. Cirac.
\newblock Renormalization algorithms for quantum-many body systems in two and
  higher dimensions.
\newblock Available at arXiv.org e-Print cond-mat/0407066, 2004.

\bibitem{VC06}
F.~Verstraete and J.~I. Cirac.
\newblock Matrix product states represent ground states faithfully.
\newblock {\em Physical Review B}, 73(9):094423, 2006.

\bibitem{VMC08}
F.~Verstraete, V.~Murg, and I.~Cirac.
\newblock Matrix product states, projected entangled pair states, and
  variational renormalization group methods for quantum spin systems.
\newblock {\em Advances in Physics}, 57(2):143--224, 2008.

\bibitem{VPC04}
F.~Verstraete, D.~Porras, and J.~I. Cirac.
\newblock Density matrix renormalization group and periodic boundary
  conditions: A quantum information perspective.
\newblock {\em Physical Review Letters}, 93:227205, 2004.

\bibitem{VWPC06}
F.~Verstraete, M.~Wolf, D.~P\'{e}rez-Garc\'{i}a, and J.~I. Cirac.
\newblock Projected entangled states: Properties and applications.
\newblock {\em International Journal of Modern Physics B}, 20:5142, 2006.

\bibitem{V03}
G.~Vidal.
\newblock Efficient classical simulation of slightly entangled quantum
  computations.
\newblock {\em Physical Review Letters}, 91:147902, 2003.

\bibitem{V04}
G.~Vidal.
\newblock Efficient simulation of one-dimensional quantum many-body systems.
\newblock {\em Physical Review Letters}, 93:040502, 2004.

\bibitem{V07}
G.~Vidal.
\newblock Entanglement renormalization.
\newblock {\em Physical Review Letters}, 99:220405, 2007.

\bibitem{V08}
G.~Vidal.
\newblock Class of quantum many-body states that can be efficiently simulated.
\newblock {\em Physical Review Letters}, 101:110501, 2008.

\bibitem{VLRK03}
G.~Vidal, J.~I. Latorre, E.~Rico, and A.~Kitaev.
\newblock Entanglement in quantum critical phenomena.
\newblock {\em Physical Review Letters}, 90:227902, 2003.

\bibitem{WTSC13}
T.~B. Wahl, H.-H. Tu, N.~Schuch, and J.~I. Cirac.
\newblock Projected entangled-pair states can describe chiral topological
  states.
\newblock {\em Physical Review Letters}, 111:236805, 2013.

\bibitem{WHTH07}
X.-B. Wang, T.~Hiroshima, A.~Tomita, and M.~Hayashi.
\newblock Quantum information with {G}aussian states.
\newblock {\em Physics Reports}, 448(1--4):1--111, 2007.

\bibitem{W08_2}
J.~Watrous.
\newblock Lecture 2: {M}athematical {P}reliminaries ({P}art {2}), 2008.
\newblock Latest version available at:
  \url{www.cs.uwaterloo.ca/~watrous/CS766/}.

\bibitem{W09_2}
J.~Watrous.
\newblock {\em Encyclopedia of Complexity and System Science}, chapter Quantum
  Computational Complexity.
\newblock Springer, 2009.

\bibitem{WMN10}
T.-C. Wei, M.~Mosca, and A.~Nayak.
\newblock Interacting boson problems can be {QMA} hard.
\newblock {\em Physical Review Letters}, 104:040501, 2010.

\bibitem{WVSCD09}
A.~Weichselbaum, F~Verstraete, U~Schollw\"{o}ck, J.~I. Cirac, and J.~von Delft.
\newblock Variational matrix product state approach to quantum impurity models.
\newblock {\em Physical Review B}, 80:165117, 2009.

\bibitem{W92}
S.~R. White.
\newblock Density matrix formulation for quantum renormalization groups.
\newblock {\em Physical Review Letters}, 69:2863--2866, 1992.

\bibitem{W93}
S.~R. White.
\newblock Density-matrix algorithms for quantum renormalization groups.
\newblock {\em Physical Review B}, 48:10345--10356, 1993.

\bibitem{WN92}
S.~R. White and R.~M. Noack.
\newblock Real-space quantum renormalization groups.
\newblock {\em Physical Review Letters}, 68:3487--3490, 1992.

\bibitem{Wil71a}
K.~G. Wilson.
\newblock Renormalization group and critical phenomena. {I}. {R}enormalization
  group and the kadanoff scaling picture.
\newblock {\em Physical Review B}, 4:3174--3183, 1971.

\bibitem{Wil71b}
K.~G. Wilson.
\newblock Renormalization group and critical phenomena. {II}. {P}hase-space
  cell analysis of critical behavior.
\newblock {\em Physical Review B}, 4:3184--3205, 1971.

\bibitem{Wil72}
K.~G. Wilson.
\newblock Feynman-graph expansion for critical exponents.
\newblock {\em Physical Review Letters}, 28:548--551, 1972.

\bibitem{Wil75}
K.~G. Wilson.
\newblock The renormalization group: Critical phenomena and the kondo problem.
\newblock {\em Reviews of Modern Physics}, 47:773--840, 1975.

\bibitem{KF72}
K.~G. Wilson and M.~E. Fisher.
\newblock Critical exponents in 3.99 dimensions.
\newblock {\em Physical Review Letters}, 28:240--243, 1972.

\bibitem{W06}
M.~M. Wolf.
\newblock Violation of the entropic area law for fermions.
\newblock {\em Physical Review Letters}, 96:010404, 2006.

\bibitem{WVHC08}
M.~M. Wolf, F.~Verstraete, M.~B. Hastings, and J.~I. Cirac.
\newblock Area laws in quantum systems: Mutual information and correlations.
\newblock {\em Physical Review Letters}, 100:070502, 2008.

\bibitem{YB12}
J.~Yan and D.~Bacon.
\newblock The k-local {P}auli commuting {H}amiltonians problem is in {P}.
\newblock Available at arXiv.org e-Print quant-ph/1203.3906, 2012.

\bibitem{YY661}
C.~N. Yang and C.~P. Yang.
\newblock One-dimensional chain of anisotropic spin-spin interactions. {I}.
  {P}roof of {B}ethe's hypothesis for ground state in a finite system.
\newblock {\em Physical Review}, 150:321--327, 1966.

\bibitem{YY662}
C.~N. Yang and C.~P. Yang.
\newblock One-dimensional chain of anisotropic spin-spin interactions. {II}.
  {P}roperties of the ground-state energy per lattice site for an infinite
  system.
\newblock {\em Physical Review}, 150:327--339, 1966.

\end{thebibliography}

\end{document}